\title{Strategizing against No-Regret Learners in First-Price Auctions}
\author{Aviad Rubinstein\thanks{Supported by NSF CCF-1954927, and a David and Lucile Packard Fellowship.}\\Stanford University\\\texttt{aviad@cs.stanford.edu} \and Junyao Zhao\thanks{Supported by NSF CCF-1954927.}\\ Stanford University\\\texttt{junyaoz@stanford.edu}}
\date{}
\def \A {\mathcal{A}}
\def \br {\textnormal{\sffamily br}}
\def \D {\mathcal{D}}
\def \eps {\varepsilon}
\def \ex {\textnormal{\sffamily ex}}
\def \fpa {\textnormal{\sffamily fpa}}
\def \G {\mathcal{G}}
\def \I {\mathcal{I}}
\def \J {\mathcal{J}}
\def \P {\textnormal{Pr}}
\def \poly {\textnormal{\sffamily poly}}
\def \R {\mathbb{R}}
\def \S {\mathcal{S}}
\def \Reg {\textnormal{\sffamily Reg}}
\def \T {\mathcal{T}}
\DeclareMathOperator*{\argmax}{arg\,max}
\DeclareMathOperator*{\E}{\mathbb{E}}
\newcommand{\floor}[1]{\left\lfloor #1 \right\rfloor}
\newcommand{\class}[1]{\langle #1 \rangle}
\newcommand{\spoly}[1]{\textnormal{$#1$-\sffamily poly}}
\DeclareMathAlphabet{\mathibb}{U}{bbold}{m}{n}
\newcommand{\1}{\mathibb{1}}
\newtheorem{theorem}{Theorem}[section]
\newtheorem{lemma}[theorem]{Lemma}
\newtheorem{corollary}[theorem]{Corollary}
\newtheorem{proposition}[theorem]{Proposition}
\newtheorem{claim}[theorem]{Claim}
\newtheorem{definition}[theorem]{Definition}
\newtheorem{example}[theorem]{Example}
\newtheorem{assumption}[theorem]{Assumption}
\begin{document}

\maketitle

\begin{abstract}
We study repeated first-price auctions and general repeated Bayesian games between two players, where one player, the learner, employs a no-regret learning algorithm, and the other player, the optimizer, knowing the learner's algorithm, strategizes to maximize its own utility. For a commonly used class of no-regret learning algorithms called mean-based algorithms, we show that (i) in standard (i.e., full-information) first-price auctions, the optimizer cannot get more than the Stackelberg utility -- a standard benchmark in the literature, but (ii) in Bayesian first-price auctions, there are instances where the optimizer can achieve much higher than the Stackelberg utility.

On the other hand, Mansour et al.~(2022) showed that a more sophisticated class of algorithms called no-polytope-swap-regret algorithms are sufficient to cap the optimizer's utility at the Stackelberg utility in any repeated Bayesian game (including Bayesian first-price auctions), and they pose the open question whether no-polytope-swap-regret algorithms are necessary to cap the optimizer's utility. For general Bayesian games, under a reasonable and necessary condition, we prove that no-polytope-swap-regret algorithms are indeed necessary to cap the optimizer's utility and thus answer their open question. For Bayesian first-price auctions, we give a simple improvement of the standard algorithm for minimizing the polytope swap regret by exploiting the structure of Bayesian first-price auctions.
\end{abstract}

\setcounter{page}{0}
\thispagestyle{empty}
\newpage

\section{Introduction}
Consider a repeated (Bayesian\footnote{Briefly, a Bayesian game between the optimizer and the learner~\citep{mansour2022strategizing} consists of two sets of actions for the two players respectively and a publicly known prior distribution of the context (the private type of the learner). At the start of the game, a context is sampled from the prior distribution and revealed only to the learner. Then, the optimizer and the learner choose their actions to play. The utilities they get depend on the actions they play and the context.}) game between two players, where one player, the {\em learner}, employs a no-regret learning algorithm, which is a widely adopted type of strategy, and the other player, the {\em optimizer}, knowing the learner's algorithm, plays optimally to maximize its own utility.~\citet{mansour2022strategizing} observed that regardless of which no-regret learning algorithm the learner uses, the optimizer is always able to obtain the Stackelberg utility%
\footnote{Informally, the Stackelberg utility is the maximum utility the optimizer (called {\em leader} in Stackelberg literature) can achieve in the following single-round game: The optimizer first chooses a (mixed) strategy to commit to, and then the learner (called {\em follower} in Stackelberg literature) plays its best response.} on average per round by playing some static fixed strategy. 
It is thus interesting to ask
\begin{enumerate}
    \item[(i)] whether the optimizer can beat this benchmark by exploiting the commonly used no-regret learning algorithms in a specific game of interest, and
    \item[(ii)] which no-regret learning algorithms are sufficient and necessary to cap the optimizer's utility at this benchmark in general.
\end{enumerate}

A specific game that is of particular interest in
online advertising is first-price auction. Indeed, first-price auctions have become a prominent format for online ads auctions recently~\citep{despotakis2021first}. This is a highly repeated auction setting, where the players (bidders) get feedback from the previous rounds and respond by changing their bids for the future rounds. No-regret learning algorithms are natural strategies for the bidders in such setting, which have been extensively studied in theory~\citep{balseiro2019contextual,han2020learning,han2020optimal,zhang2022leveraging,badanidiyuru2023learning} and also observed in practice~\citep{nekipelov2015econometrics}. Many widely adopted no-regret learning algorithms, e.g.,
multiplicative weights update (MWU), follow the perturbed leader (FTPL), EXP3, UCB, $\varepsilon$-Greedy, and their variants, belong to a natural class of algorithms called mean-based learning algorithms\footnote{Informally, a learning algorithm is mean-based if the algorithm, in almost every round, plays an action that has approximately best utility on average in the previous rounds (conditioned on the context if the game is Bayesian).}~\citep{braverman2018selling,deng2019strategizing,feng2021convergence}, which motivates the first main question we study in this paper:
\begin{quote}
In a repeated (Bayesian) single-item first-price auction between two bidders, if one bidder (the optimizer) knows that the other bidder (the learner) uses a mean-based learning algorithm, can the optimizer get higher than Stackelberg utility on average per round by exploiting the learner's algorithm?
\end{quote}
In our first main result (Theorem~\ref{thm:mean-based-restatement}), we give a complete answer to this question, which turns out to vary according to whether the learner's value of the item is certain. 
\begin{theorem}[Informal restatement of Theorem~\ref{thm:robust-standard-mean-based} and Theorem~\ref{thm:exploit-bayesian-mean-based}]\label{thm:mean-based-restatement}
In a standard full-information first-price auction repeated for $T$ rounds (i.e., the learner's value of the item is static fixed and publicly known), if the learner uses any mean-based no-regret learning algorithm, then the optimizer's optimal utility in $T$ rounds is no more than $V\cdot T+o(T)$, where $V$ is the Stackelberg utility of this first-price auction.

However, there exists a Bayesian first-price auction repeated for $T$ rounds (i.e., the learner's value of the item is private and sampled from a static fixed and publicly known prior distribution in each round), such that there exists a strategy that guarantees the optimizer a utility of $V'\cdot T$ for some $V'$ that is significantly higher than the Stackelberg utility of this Bayesian first-price auction regardless of which mean-based learning algorithm the learner uses.
\end{theorem}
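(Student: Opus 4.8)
Write $v_L,v_O$ for the two values and $V=\max_G\E_{b\sim G}[(v_O-b)\,\1[b>\br(v_L,G)]]$ for the Stackelberg value, where $\br(v_L,G)=\argmax_b(v_L-b)\,\P_{b'\sim G}[b'<b]$ is the learner's best response to a commitment $G$. First, I would invoke the mean-based property to pass to a surrogate learner that, in all but $o(T)$ rounds, plays the scalar best response $b_L^t=\br(v_L,\widehat G_t)$ to the empirical distribution $\widehat G_t$ of the optimizer's first $t-1$ bids; the mean-based slack $\gamma(T)\to 0$ then enters only additively, so the optimizer's payoff equals $\sum_t(v_O-b_O^t)\,\1[b_O^t>b_L^t]+o(T)$. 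The crux is to bound this sum by $VT+o(T)$. Note a single round's payoff $(v_O-b_O^t)\,\1[b_O^t>b_L^t]$ can exceed $V$, since the optimizer may bid into the gap above the \emph{stale} response $b_L^t$; I would control the aggregate of these excesses by an amortized/potential argument tailored to first-price auctions: a winning bid lying strictly above $b_L^t$ is appended to $\widehat G_t$, and such bids, accumulating, push up the learner's best response (equivalently, make undercutting more attractive), so the optimizer must keep ``resetting'' $\widehat G_t$ with low bids --- and in every low-bid round the best-responding learner wins. The technical heart, and the main obstacle, is to make this trade-off quantitative so that the total excess over $V$ telescopes to $o(T)$ against a suitable potential built from $\widehat G_t$ and the learner's best-in-hindsight payoff, while handling atoms of $\widehat G_t$, the discontinuity of $G\mapsto\br(v_L,G)$, and tie-breaking. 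Finally, the regime $v_O\le v_L$ ($V=0$) is separate: profiting in a round needs $b_O^t<v_L$ and $b_O^t>b_L^t$, but the learner's no-regret guarantee forces it to win all but $o(T)$ of these rounds.

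\textbf{Plan for the Bayesian first-price auction (the second statement).}
I would (i) fix an explicit two-type instance, (ii) compute its Stackelberg value $V$ over \emph{all} commitment distributions --- in a Bayesian first-price auction the optimal leader commitment need not be a point mass, so this step is where care is needed --- and (iii) design a non-stationary optimizer strategy that provably beats $V$ against every mean-based learner. Take $v_L$ uniform on $\{v_1,v_2\}$ with $v_1<v_2<v_O$, the gap $v_2-v_1$ small enough that for the Stackelberg benchmark committing near $v_1$ forfeits the high type whereas committing near $v_2$ overpays the low type. The strategy uses a constant number of phases; in the simplest form, phase~$1$ (a $\lambda_0$ fraction of rounds) bids $c_1$ slightly above $v_1$, and phase~$2$ bids $c_2$ with $v_1<c_2<v_O$. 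Throughout phase~$2$ a mean-based learner, at \emph{every} context, best-responds to the running empirical distribution, which (until the weight $\lambda$ on $c_1$ decays to $\lambda_0$) still concentrates on $c_1$; hence the myopically-optimal bid for the high type is to undercut the $c_1$-mass, i.e.\ to bid just above $c_1$ --- strictly below the optimizer's current bid $c_2$ --- so the optimizer wins against \emph{all} types at the favorable price $c_2$. Picking $\lambda_0$ and $c_1,c_2$ so that $\lambda$ (which ranges over $[\lambda_0,1]$ in phase~$2$) stays above the undercutting threshold $\tfrac{v_2-c_2}{v_2-c_1}$ for all but $o(T)$ rounds of the phase, the per-round value becomes $\lambda_0\cdot\P[v_L\le c_1](v_O-c_1)+(1-\lambda_0)(v_O-c_2)-o(1)$; one then chooses the parameters so this exceeds the instance's $V$ by a positive constant (e.g.\ $v_1=1$, $v_2=2$, $v_O=3$, $c_1\!\approx\!1$, $c_2=3/2$, $\lambda_0=1/2$).

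The main obstacle in the Bayesian part is robustness: the exploitation must be \emph{structural}, working even when the learner plays exact best responses and its mean-based slack $\gamma(T)$ tends to $0$. So one must show the mean-based condition genuinely \emph{forces} the described behavior in all but $o(T)$ rounds of each phase --- in phase~$2$, that the margin between the high type's best bid (just above $c_1$) and its runner-up (just above $c_2$) stays a positive constant, so no slack can make the learner bid at or above $c_2$ --- and, secondarily, control the per-context empirical distributions (which deviate from the overall empirical distribution only by a lower-order amount by concentration), the phase-boundary transients, and tie-breaking, and finally verify the achieved value genuinely beats the exact $V$, which is why step~(ii) matters.
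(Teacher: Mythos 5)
There are genuine gaps in both parts of your plan.

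\textbf{Standard first-price auction.} Your route (pass to a surrogate learner that best-responds to the running empirical distribution, then control the aggregate ``excess over $V$'' by an amortized/potential argument) is genuinely different from the paper's, but you explicitly leave the hard step open: you identify that a potential argument ``built from $\widehat G_t$ and the learner's best-in-hindsight payoff'' is needed, list the obstacles (atoms, discontinuity of $G\mapsto\br(v_L,G)$, tie-breaking), and stop. This is not a proof sketch with a filled-in core; it is a description of what one would have to prove. The paper takes a different and self-contained path: it reduces to a deterministic optimizer, then replaces the optimizer's bid sequence by an \emph{oblivious} sequence obtained by prepending $\hat\gamma T$ zero bids and replacing certain bids with $0$, and proves a monotonicity claim (Claim~\ref{claim:lower-mean-based-bid}) that this modification can only push the learner's mean-based actions \emph{down}. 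Combined with the no-external-regret hypothesis, this forces bid $0$ to be (approximately) the learner's best response to the empirical distribution $\hat\alpha$, and after a tiny perturbation one gets $u_O(\hat\alpha,0)\le V$. There is no potential function and no telescoping; the key insight is a monotone comparison between two repeated-game trajectories, and that insight is absent from your plan. Without it (or a worked-out substitute), the argument doesn't go through.

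\textbf{Bayesian first-price auction.} Your construction does not beat the Stackelberg utility. With $v_1=1$, $v_2=2$, $v_O=3$, uniform prior, $c_1\approx 1$, $c_2=3/2$, $\lambda_0=1/2$: the static commitment putting weight $1/2$ on bid $1$ and $1/2$ on bid $3/2$ already achieves Stackelberg utility $5/4$ (type $1$ is indifferent at bid $1$ and type $2$ is indifferent between bidding $1$ and $3/2$, so both ties break in the optimizer's favor in the Stackelberg benchmark), and one can check over all commitments that $V=5/4$. Your two-phase dynamic strategy achieves at most $\frac{5}{4}-\Theta(\delta)-o(1)$: in phase $1$ you earn $\frac{1}{2}(v_O-c_1)$ per round; in phase $2$ the cumulative-utility gap forcing type $2$ to keep undercutting scales like $(1-t/T)T$, so the constraint binds tighter as $\lambda_t\to\lambda_0=1/2$, and the achievable per-round value is $\frac{3}{2}-o(1)$; averaging gives exactly the Stackelberg value in the $\delta\to 0$ limit. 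Your ``step (ii)'' (computing $V$ carefully) is exactly where this collapses, and you flagged it as where ``care is needed'' but did not carry it out. The paper's instance is structurally different in two ways that both matter: (a) the optimizer's value $v_O=1$ is \emph{below every possible} learner value $v_i=2^i$, so the optimizer can profit only by incentivizing the learner to underbid, which makes commitments genuinely type-dependent; and (b) there are $m=\Theta(\log(N\eps))$ types with geometrically increasing values and geometrically decreasing probabilities, which is precisely what creates the $\Omega(\log(N\eps))$ multiplicative gap via a nested multi-phase strategy where phase $i$ ``completes'' the optimal commitment against the $i$-th highest type. A two-type instance in your regime $v_O>v_2>v_1$ does not appear to admit a gap at all, and you would need a construction closer to the paper's to make the second half of the statement go through.
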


We postpone the intuition of Theorem~\ref{thm:mean-based-restatement} to Section~\ref{section:non-bayesian} and Section~\ref{section:bayesian}, and proceed to a natural follow-up question: are there no-regret learning algorithms that cannot be exploited by the optimizer in Bayesian first-price auctions? The answer is yes, not just for Bayesian first-price auctions, but for all Bayesian games. Specifically,~\citet{mansour2022strategizing} showed that a powerful class of no-regret learning algorithms called no-polytope-swap-regret learning algorithms are sufficient to cap the optimizer's average utility per round at the Stackelberg utility for any repeated Bayesian game. We defer the exposition of polytope swap regret to the preliminary (Definition~\ref{def:S-poly-regret}), but note here that polytope swap regret is a generalization of swap regret in standard full-information games~\citep{blum2007external} to Bayesian games.~\citet{mansour2022strategizing} posed a natural open question, which is the second main question we strive to answer in this paper:
\begin{quote}
    Are no-polytope-swap-regret learning algorithms necessary to cap the optimizer's average utility per round at the Stackelberg utility for all repeated Bayesian games?
\end{quote}
Intrigued by their open question, we turn our attention to general Bayesian games. In our second main result (Theorem~\ref{thm:poly-swap-regret-restatement}), we show that no-polytope-swap-regret learning algorithms are indeed necessary to cap the the optimizer's average utility per round at the Stackelberg utility, under a reasonable assumption: there is no negligible context in the Bayesian game, i.e.,~the probability of each context in the Bayesian game is $\Omega(1)$, which in particular implies that there is no vacuous context that occurs with zero probability (this assumption is also necessary as we will discuss shortly).
\begin{theorem}[Informal restatement of Theorem~\ref{thm:exploitable-poly-swap-regret}]\label{thm:poly-swap-regret-restatement}
Consider only the Bayesian games where the learner's and the optimizer's utilities are within the range $[-1,1]$. If a learning algorithm $\A$ has $\Omega(T)$ polytope swap regret in some Bayesian game $\G$ repeated for $T$ rounds which has no negligible context, then there exists another Bayesian game $\G'$ repeated for $T$ rounds such that the optimizer can obtain a utility of $V(\G')\cdot T + \Omega(T)$ when playing against the learning algorithm $\A$, where $V(\G')$ is the Stackelberg utility of the Bayesian game $\G'$.
\end{theorem}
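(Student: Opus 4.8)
The plan is to construct $\G'$ from $\G$ by keeping the entire learner side of $\G$ — the context space $\Theta$, the prior $p$, the learner's action set, and the learner's utility $u_L$ — unchanged, as well as the optimizer's action set, and \emph{only} replacing the optimizer's utility function by a new $u'_O$. Since the learner's algorithm $\A$ observes only the contexts, its own payoffs and feedback, and the optimizer's actions (never the optimizer's utility function), the learner's randomized behavior in $\G'$ against any given sequence of optimizer actions is identically distributed to its behavior in $\G$ against the same sequence. So we first invoke the hypothesis to obtain a sequence $b_1,\dots,b_T$ of optimizer actions witnessing $\Omega(T)$ polytope swap regret of $\A$ in $\G$, together with a single affine swap map $\phi^\star$ of the learner's behavioral-strategy polytope $\mathcal{X}$, such that — writing $x_t\in\mathcal{X}$ for the learner's round-$t$ behavioral strategy and $U_L(\cdot,b)$ for the learner's prior-expected utility against optimizer action $b$ — we have $\E\big[\sum_t(U_L(\phi^\star(x_t),b_t)-U_L(x_t,b_t))\big]=\Omega(T)$. (The hindsight-optimal swap can be taken to be a single $\phi^\star$, good in expectation, by the definition of polytope swap regret; we use no further property of $\A$.)

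The key idea is to choose $u'_O$ so that whenever the learner plays behavioral strategy $x$ against optimizer action $b$, the optimizer's prior-expected payoff equals the learner's own swap-improvement, $U_L(\phi^\star(x),b)-U_L(x,b)$. Because $\phi^\star$ is affine, this target is an affine functional of $x$ on the affine hull of $\mathcal{X}$ (the set where $\sum_a x(\theta,a)=1$ for every $\theta$); reading off the coefficient of each $x(\theta,a)$ and dividing by $p(\theta)$ yields a valid utility $u'_O(a,b,\theta)$, with the constant part of the functional absorbed using the per-context normalizations. This is exactly where the no-negligible-context assumption enters: the factor $1/p(\theta)=O(1)$, together with the fact that $\phi^\star$ maps the bounded polytope $\mathcal{X}$ into itself (so it admits a bounded affine representation), keeps $u'_O$ bounded by a constant depending only on $\G$; rescaling then brings $u'_O$ into $[-1,1]$ while shrinking the eventual advantage by only a game-dependent constant factor.

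With this $u'_O$, two observations finish the proof. First, when the optimizer replays $b_1,\dots,b_T$ against $\A$ in $\G'$ (which it can, since $\G'$ and $\G$ differ only in the optimizer's payoff), its expected total utility is exactly $\E\big[\sum_t(U_L(\phi^\star(x_t),b_t)-U_L(x_t,b_t))\big]=\Omega(T)$. Second, the Stackelberg utility of $\G'$ satisfies $V(\G')\le 0$: if the optimizer commits to any mixed strategy $\beta$, the learner best-responds with some $x^\star$ maximizing $U_L(\cdot,\beta)$ over $\mathcal{X}$, and then the optimizer receives $U_L(\phi^\star(x^\star),\beta)-U_L(x^\star,\beta)\le 0$, because $\phi^\star(x^\star)\in\mathcal{X}$ and $x^\star$ is optimal over $\mathcal{X}$ — and this holds for every best response, hence under any tie-breaking rule. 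Combining, the optimizer obtains expected total utility $\Omega(T)=V(\G')\cdot T+\Omega(T)$ in $\G'$, which gives the claim.

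The hard part is the design of $\G'$: we need a game in which static (Stackelberg) commitment is essentially worthless, yet dynamic replay against $\A$ converts the learner's polytope swap regret into the optimizer's own gain. The resolution — letting the optimizer's payoff be exactly the learner's swap-improvement under $\phi^\star$ — makes the Stackelberg bound $V(\G')\le 0$ automatic, but pays for it with the boundedness analysis of $u'_O$, which is precisely why ``no negligible context'' is needed (and, as the paper notes, cannot be dispensed with). The remaining items — that the witnessing optimizer sequence is reproducible online and that a single swap map $\phi^\star$ suffices — are immediate from the setup and the definition of polytope swap regret.
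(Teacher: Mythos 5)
Your plan keeps the optimizer's action set of $\G'$ the same as that of $\G$ and only changes $u_O$. The paper's Proposition~\ref{proposition:same_action} (with Example~\ref{example:same_action}) shows this cannot work in general: there is a Bayesian game $\G$ and a reward-based learner with $\Omega(T)$ polytope swap regret that is not exploitable by any $\G'$ with $M'=M$. The step of yours that fails is the first one: you assert that $\Reg_{\poly}=\Omega(T)$ yields a single affine self-map $\phi^\star$ of the behavioral-strategy polytope with $\E\big[\sum_t(U_L(\phi^\star(x_t),b_t)-U_L(x_t,b_t))\big]=\Omega(T)$. That is a different, and strictly weaker, notion of swap regret. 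Definition~\ref{def:S-poly-regret} is a $\min$ over decompositions $\rho_t\in\class{\beta_t}$ of each behavioral strategy into pure strategies, followed by a $\max$ over pure-strategy swaps $\pi:[N]^{[C]}\to[N]^{[C]}$; the adversary's $\pi$ may depend on the decomposition, and there need not exist any single affine endomorphism of the polytope achieving $\Omega(T)$. In Example~\ref{example:same_action}, the behavioral polytope is $[0,1]^2$ (coordinates: probability of action $1$ given context $1$, probability of action $1$ given context $2$), the three played profiles are $(0,1),(1,0),(0,0)$ against optimizer actions $1,1,2$, and a direct computation shows that any affine $\phi$ with $\phi(p,q)_2=a'p+b'q+c'$ mapping $[0,1]^2$ into $[0,1]^2$ gives total affine-swap improvement $\frac{T}{3}\cdot\frac{a'+b'-1}{2}\le 0$, whereas $\Reg_{\poly}=T/6$. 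Once $\phi^\star$ is gone, both halves of your argument -- that replaying $b_1,\dots,b_T$ yields $\Omega(T)$ utility, and that $V(\G')\le 0$ -- collapse.

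The high-level target you set -- make the optimizer's payoff equal to the learner's swap improvement so that any static commitment is worthless -- is the right intuition and matches the paper's goal, but the paper pays for it differently: it enlarges the optimizer's action set to $M'=T$ actions $a_1,\dots,a_T$ (one per round, each reproducing the reward vector of $i_t$), and it does \emph{not} explicitly construct $u_O'$. Instead it sets up a max--min program over $u_O'\in[-1,1]^{T\times N\times C}$ and a discretized set $\J$ of commitment/best-response pairs, applies Sion's minimax theorem, and shows that any $\D(\J)$ making the objective small would force $\Reg_{\poly}(\G')=o(T)$ via Claims~\ref{claim:hat_reg_near_hat_reg'}--\ref{claim:hat_reg'_is_0} and the rounding Lemma~\ref{lemma:construct_rho}. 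The no-negligible-context assumption is used precisely there (Ineq.~\eqref{eq:p_c}) to convert $p_c$-weighted $\ell_1$-closeness between $\beta_t$ and $\hat\beta_t$ into unweighted $\ell_1$-closeness -- not, as in your write-up, to bound the magnitude of $u_O'$. And as Proposition~\ref{proposition:same_action} shows, blowing up the optimizer's action set is not an artifact of the technique; it is necessary.
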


We complement Theorem~\ref{thm:poly-swap-regret-restatement} by showing that there exists a repeated Bayesian game with a negligible context in which some learning algorithm $\A$ has $\Omega(T)$ polytope swap regret, but there does not exist any repeated Bayesian game where the optimizer can get higher than Stackelberg utility on average per round when playing against $\A$ (Proposition~\ref{proposition:negligible_context}), and hence, the no-negligible-context assumption in Theorem~\ref{thm:poly-swap-regret-restatement} is necessary.

By being sufficient and necessary (under a reasonable assumption), no-polytope-swap-regret algorithms are arguably the right algorithms to be robust from the exploitation of the optimizer for general Bayesian games. Can we improve these algorithms (e.g., their regret bound and runtime) for specific Bayesian games such as Bayesian first-price auctions? In an additional result (Proposition~\ref{prop:better-alg}), we give a simple improvement by pruning the space of the strategies that needs to be considered by the no-polytope-swap-regret learning algorithms according to the specific Bayesian game, which leads to slightly improved regret bound and runtime when applied to Bayesian first-price auctions (Corollary~\ref{cor:first-price-better-alg}).

Finally, we briefly mention two technical aspects of the proof of Theorem~\ref{thm:poly-swap-regret-restatement}, which motivate two further directions. First, our construction of the repeated Bayesian game $\G'$ in Theorem~\ref{thm:poly-swap-regret-restatement} is not fully explicit. That is, we use min-max theorem to prove the existence of some parameters of $\G'$. Second, the construction of $\G'$ blows up the number of the optimizer's actions, i.e., $\G'$ has more actions for the optimizer than the original Bayesian game $\G$. Making things even more interesting, blowing up the number of the optimizer's actions turns out to be necessary. Specifically, we exhibit an instance where some learning algorithm $\A$ has $\Omega(T)$ polytope swap regret in some repeated Bayesian game $\G$, but in any repeated Bayesian game $\G'$ with the same number of actions for the optimizer as $\G$, the optimizer can only get Stackelberg utility on average per round when playing against $\A$ (Proposition~\ref{proposition:same_action}). These interesting aspects raise two further questions which we did not answer in this paper: Is there a fully explicit construction of $\G'$? What is the minimum number of the optimizer's actions we need in $\G'$ to prove Theorem~\ref{thm:poly-swap-regret-restatement}?
\subsection{Related work}
\paragraph{Strategizing against no-regret learners}~\citet{braverman2018selling} initiated the study of repeated Bayesian games between a strategic optimizer and a no-regret learner. They focus on the specific Bayesian game of a single revenue-maximizing seller auctioning a single item to a single buyer (where the auction format is the seller's action, and the bid is the buyer's action), and they show that the seller can extract the full value of the buyer if the buyer uses a mean-based learning algorithm (this result has recently been generalized to the setting with multiple buyers by~\citet{cai2023selling}).~\citet{deng2019strategizing} studies general standard (i.e., full-information) repeated games. They give an example of a standard repeated game where the optimizer can exploit mean-based learning algorithms (in contrast, our result implies that standard first-price auction is not such game), and they prove that no-swap-regret learning algorithms are sufficient to cap the optimizer's average utility per round at the Stackelberg utility for standard games.~\citet{mansour2022strategizing} complements the result of~\citet{deng2019strategizing} by showing that no-swap-regret learning algorithms are also necessary to cap the optimizer's utility for standard games (this result is a special case of our Theorem~\ref{thm:poly-swap-regret-restatement}, but interestingly, for this special case, their main construction is explicit, and it does not need to blow up the number of the optimizer's actions). They then started the study of general repeated Bayesian games and proposed no-polytope-swap-regret learning algorithms as a natural class of algorithms that are sufficient to prevent the optimizer's exploitation in general repeated Bayesian games, and they leave the open problem of whether these algorithms are necessary.

\paragraph{First-price auctions} There are significant amount of works studying how to maximize a bidder's utility in repeated first-price auctions using no-regret learning algorithms~\citep{balseiro2019contextual,han2020learning,han2020optimal,zhang2022leveraging,badanidiyuru2023learning}, but the interaction between a strategic optimizer and a learner is much less studied, except that~\citet{xu2018commitment} studied the Stackelberg utility of the single-round Bayesian first-price auction. In addition,~\citet{feng2021convergence} studies mean-based learning algorithms for repeated first-price auctions, but they focus on studying the equilibrium these algorithms converge to.

\section{Preliminaries}\label{section:preliminaries}
\subsection{Games and equilibria}
In general, we consider finite Bayesian bimatrix games which we refer to as \emph{Bayesian games}. In Bayesian games, there are two players who we refer to as \emph{optimizer} and \emph{learner}. The learner has a private \emph{type} sampled from a distribution, and the optimizer only knows the distribution of the learner's type but not the type itself. Formally, in a Bayesian game $\G(M,N,C,\D,u_O,u_L)$, there are $M$ actions $[M]$ for the optimizer and $N$ actions $[N]$ for the learner, and moreover, there are $C$ \emph{contexts} (i.e., types) $[C]$, and at the beginning of the game, a context $c\in [C]$ is sampled from a prior distribution $\D$, and we let $p_c$ denote the probability that context $c$ occurs. The prior distribution $\D$ is known to both players, but the sampled context $c$ is known only to the learner but not the optimizer. Therefore, the optimizer can only choose an action $i\in[M]$ that is independent of context $c$, but the learner may choose a specific action $f(c)\in[N]$ based on context $c$ (where $f\in[N]^{[C]}$ is a map from contexts to the learner's actions, and we call such map a \emph{pure strategy} for the learner), and then the optimizer receives utility $u_{O}(i,f(c),c)\in \R$, and the learner receives utility $u_{L}(i,f(c),c)\in \R$ (note that these are the players' utilities conditioned on the context, which therefore not only depend on the actions played by the players but also depend on the context).

In general, the optimizer can use a \emph{mixed strategy} $\alpha\in\Delta([M])$ which is a distribution over the optimizer's actions, and the learner can also use a mixed strategy $\beta\in\Delta([N]^{[C]})$ which is a distribution over the learner's pure strategies. Let $\alpha_i$ denote the probability of action $i$ in the optimizer's mixed strategy $\alpha$, and let $\beta_f$ denote the probability of pure strategy $f$ in the learner's mixed strategy $\beta$. Let $u_{O}(\alpha,f(c),c):=\sum_{i\in[N]}\alpha_i u_{O}(i,f(c),c)$ and $u_{L}(\alpha,f(c),c):=\sum_{i\in[N]}\alpha_i u_{L}(i,f(c),c)$. Then, the optimizer's expected utility $u_{O}(\alpha, \beta)$ and the learner's expected utility $u_{L}(\alpha, \beta)$ (where the expectation is over the randomness of their mixed strategies and the context) are
\begin{align*}
u_{O}(\alpha, \beta)=\sum_{f\in[N]^{[C]},c\in[C]}\beta_f p_c u_{O}(\alpha,f(c),c)\textrm{ and }
u_{L}(\alpha, \beta)=\sum_{f\in[N]^{[C]},c\in[C]}\beta_f p_c u_{L}(\alpha,f(c),c),
\end{align*}
respectively. Sometimes we are interested in the value $u_{O}(\alpha, \beta)$ or $u_{L}(\alpha, \beta)$ when $\alpha$ is  a trivial distribution with all the probability on a single action $i$, or $\beta$ is a trivial distribution on a single pure strategy $f$. In such cases, we use notations like $u_{L}(\alpha, f)$, $u_{L}(i, \beta)$, and $u_{L}(i, f)$ for convenience. For example,
\begin{align}\label{eq:trivial_mixed_strategy}
u_{L}(\alpha, f)=\sum_{c\in[C]} p_c u_{L}(\alpha,f(c),c).
\end{align}
Given the optimizer's mixed strategy $\alpha\in\Delta([M])$, we call the learner's pure strategy $f$ a \emph{best response} to the strategy $\alpha$ if $f\in \argmax_{f'\in[N]^{[C]}}u_{L}(\alpha,f')$, which by Eq.~\eqref{eq:trivial_mixed_strategy} is equivalent to that $f(c)\in \argmax_{f'(c)\in[N]}u_{L}(\alpha,f'(c),c)$ for all $c\in[C]$. In other words, conditioned on each context $c\in[C]$, $f(c)$ is a best response to the strategy $\alpha$.

Now we define the \emph{Stackelberg equilibrium} 
and the \emph{Stackelberg utility} of a Bayesian game. Intuitively, in a Stackelberg equilibrium, the optimizer chooses a mixed strategy that maximizes its own expected utility, assuming that the learner always plays the best response to its mixed strategy (and when there are multiple best responses, we assume the learner picks the one that maximizes the optimizer's utility). The Stackelberg utility is the optimizer's expected utility in the Stackelberg equilibrium.
\begin{definition}\label{def:bayesian_stackelberg}
The Stackelberg equilibrium of a Bayesian game $\G(M,N,C,\D,u_O,u_L)$ consists of a mixed strategy of the optimizer $\alpha\in\Delta([M])$ and a pure strategy of the learner $f\in [N]^{[C]}$, which maximize $u_{O}(\alpha, f)$ under the constraint that $f$ is a best response to the strategy $\alpha$. The Stackelberg utility of a Bayesian game $\G$, which we usually denote with $V(\G)$, is the value $u_{O}(\alpha, f)$, where $(\alpha, f)$ is a Stackelberg equilibrium.
\end{definition}

\subsubsection*{First-price auctions}
For a significant part of the paper, we will focus on a specific Bayesian game -- a single-item \emph{first-price auction} between the optimizer and the learner. In a first-price auction of a single item, the optimizer has a value $v_O\in\R_{\ge 0}$ for the item, and the learner has a value $v_L\in\R_{\ge 0}$ for the item. Each player submits a bid which represents the price they are willing to pay, and the bids that can be submitted by the players are the actions in this game. The learner's value of the item $v_L$ is the context in this game, which is sampled from a prior distribution $\D$ over $\I$ (in Assumption~\ref{assumption:discrete-bid}). We make the following assumption\footnote{Note that this assumption is very natural in practice and standard in the literature of no-regret algorithms for repeated auctions (e.g.,~\cite{braverman2018selling,feng2021convergence}).} such that this game has finite number of actions and contexts. 

\begin{assumption}\label{assumption:discrete-bid}
The players' values and bids for the item are in the discrete space $\I=\{i\cdot \eps\mid i\in \{0,\dots,N-1\}\}$ for some constant $\eps\in\mathbb{R}_{>0}$ (e.g., $\eps=0.01$) and parameter $N\in \mathbb{N}$. (We will also refer to the bid $i\cdot\eps$ as action $i$ and use the terms ``action'' and ``bid'' interchangeably.)
\end{assumption}
The player who bids the higher price wins the item and pays that higher price, and the other player gets nothing and pays nothing. For simplicity, we assume that the tie-breaking rule always favors the learner (our results in this paper can be easily adapted for any tie-breaking rule).
\begin{assumption}\label{assumption:tie-breaking}
The learner wins the item if there is a tie.
\end{assumption}
Conditioned on context $v_L$, the utility of the player who wins the item is their value of the item minus the price they pay, and the utility of the other player is zero.

As elaborated above, first-price auction is a well-defined Bayesian game, which we will refer to as \emph{Bayesian first-price auction}. We will also consider an important case of Bayesian first-price auction, the full-information setting where the context is fixed deterministically (namely, $\D$ is a trivial distribution on a single value $v_L$), and we will refer to this special case as \emph{standard first-price auction} (more generally, we refer to any Bayesian game with a fixed context as \emph{standard game}). Note that every definition we have for general Bayesian game applies to this special case.

\subsection{Repeated games and no-regret learners}
We are particularly interested in the online setting where the optimizer and the learner repeatedly play a Bayesian game $\G(M,N,C,\D,u_O,u_L)$ for $T$ rounds.
In this online setting, the parameters of $\G$ are public information to both players. In each round, a context is sampled from $\D$ independently (which is known to the learner but not the optimizer), then the optimizer and the learner choose their actions to play and receive their utilities as a result of the actions they played (the actions they played and the learner's context at this round are made public after the players receive their utilities, but not before that).
\begin{assumption}\label{assumption:finite_runtime}
We assume both the optimizer's and the learner's algorithms have finite runtime.
\end{assumption}
We do not have additional restriction on the optimizer -- it is allowed to be randomized and adaptive, and it knows the learner's algorithm (but does not know the randomness used by the learner's algorithm, i.e., if the learner flips random coins to decide which action to play at a round, then the optimizer does not know the outcome of those random coin flips before playing its action). On the other hand, we assume the learner, as the name suggests, uses a ``no-regret'' learning algorithm. There are different notions of ``regret'' in the literature with different strength. We start by defining the most basic notion called \emph{external regret}.
\subsubsection*{External regret and mean-based learners}
Intuitively, external regret quantifies the extra utility the learner would have received if it had played the best single pure strategy in the hindsight.
\begin{definition}[external regret]\label{def:external-regret}
In a Bayesian game $\G(M,N,C,\D,u_O,u_L)$ repeated for $T$ rounds, suppose that at each round $t\in[T]$, the optimizer plays action\footnote{The optimizer can be adaptive and randomized, and $i_t$ is the action the optimizer eventually plays at round $t$.} 
$i_{t}\in[M]$, and the learner uses mixed strategy $\beta_{t}\in\Delta([N]^{[C]})$. Then, the learner's external regret is $\Reg_{\ex}:=\max_{f'\in[N]^{[C]}}\sum_{t\in[T]}u_L(i_t,f')-\sum_{t\in[T]} u_L(i_t,\beta_{t})$. Moreover, we say that the learner is no-external-regret if $\Reg_{\ex}=o(T)$.
\end{definition}

A special class of no-external-regret learners which we are interested in is the class of \emph{mean-based} no-external-regret learners introduced by~\citet{braverman2018selling}. Intuitively, at each round of the repeated game, a mean-based learner with high probability plays an action that approximately performs the best based on the current history. It was shown that many no-external-regret algorithms, including commonly used variants of EXP3, the Multiplicative Weights algorithm and the Follow-the-Perturbed-Leader algorithm are mean-based~\citep{braverman2018selling}.
\begin{definition}[mean-based learner]\label{def:mean-based}
In a Bayesian game $\G(M,N,C,\D,u_O,u_L)$ repeated for $T$ rounds, suppose that at each round $t\in[T]$, the optimizer plays action $i_{t}\in[M]$. For each $t\in[T]$, $c\in[C]$, and $j\in[N]$, let $\sigma_{j,t}(c):=\sum_{s=1}^t u_L(i_{s}, j, c)$ be the cumulative utility of action $j$ for context $c$ for the first $t$ rounds, and let $c_t\in[C]$ be the context at round $t$. Then, the learner is $\gamma$-mean-based if for all $t\in[T]$, the probability of playing action $j\in[N]$ at round $t$ is at most $\gamma$ whenever there exists $j'\in[N]$ such that $\sigma_{j,t}(c_t)<\sigma_{j',t}(c_t) - \gamma T$. Moreover, we say that the learner is mean-based if it is $\gamma$-mean-based for some $\gamma$ such that $\gamma =o(1)$.
\end{definition}
Following Definition~\ref{def:mean-based}, we define a few more terms which will be useful in the analysis. We say that \emph{action $j\in[N]$ is $\gamma$-mean-based at round $t$} if there does not exist $j'\in[N]$ such that $\sigma_{j,t}(c_t)<\sigma_{j',t}(c_t) - \gamma T$, and moreover, we say that \emph{round $t$ is $\gamma$-mean-based} if the learner eventually plays a $\gamma$-mean-based action at round $t$.

Next, we introduce a stronger notion called \emph{polytope swap regret}~\citep{mansour2022strategizing}.
\subsubsection*{Polytope swap regret}
For any mixed strategies $\beta_t,\rho_t\in\Delta([N]^{[C]})$ of the learner, let $\beta_{t,f},\rho_{t,f}$ denote the probabilities of pure strategy $f$ in mixed strategies $\beta_t,\rho_t$ respectively. We say $\beta_t$ is \emph{equivalent to} $\rho_t$ if for all $c\in[C]$ and $j\in[N]$, $\sum_{f\in[N]^{[C]}\textrm{ s.t.~}f(c)=j}\beta_{t,f}=\sum_{f\in[N]^{[C]}\textrm{ s.t.~}f(c)=j}\rho_{t,f}$, i.e., the probability that the learner plays action $j$ conditioned on context $c$ is the same regardless of whether the learner uses mixed strategy $\beta_t$ or $\rho_t$. Moreover, we can define an equivalent class $\class{\beta_t}:=\{\rho_t\in\Delta([N]^{[C]})\mid\textrm{$\rho_t$ is equivalent to $\beta_t$}\}$. Now we define \emph{$\S$-polytope swap regret} which generalizes the polytope regret.
\begin{definition}[polytope swap regret]\label{def:S-poly-regret}
In a Bayesian game $\G(M,N,C,\D,u_O,u_L)$ repeated for $T$ rounds, suppose that at each round $t\in[T]$, the optimizer plays action $i_{t}\in[M]$, and the learner uses mixed strategy $\beta_{t}\in\Delta([N]^{[C]})$. For any $\S\subseteq[N]^{[C]}$, the learner's $\S$-polytope swap regret is
$$\Reg_{\spoly{\S}}:=\min_{\forall t\in[T],\,\rho_t\in\class{\beta_t}}\max_{\pi:[N]^{[C]}\to\S}\sum_{t\in[T]}\sum_{f\in[N]^{[C]}}\rho_{t,f}\cdot u_L(i_t,\pi(f))-u_L(i_t,\beta_t).$$
The polytope swap regret of~\cite{mansour2022strategizing}, which we denote by $\Reg_{\poly}$, is the $\S$-polytope swap regret for $\S=[N]^{[C]}$.
Moreover, we say that the learner is no-$\S$-polytope-swap-regret if $\Reg_{\spoly{\S}}=o(T)$, and we say that it is no-polytope-swap-regret if $\Reg_{\poly}=o(T)$.
\end{definition}
Intuitively, since a mixed strategy of the learner is a distribution over pure strategies, $\S$-polytope swap regret quantifies the extra utility the learner would have received if it had replaced every pure strategy $f\in[N]^{[C]}$ with another pure strategy $\pi(f)\in\S$ simultaneously in all the mixed strategies $\beta_t$'s used by the learner in $T$ rounds. However, there are many mixed strategies equivalent to $\beta_t$, and crucially, Definition~\ref{def:S-poly-regret} permissively chooses the mixed strategies $\rho_t$'s such that the learner's regret is minimum. We also note that $\Reg_{\poly}\ge\Reg_{\ex}$ always holds because the mapping $\pi$ could map everything to the same pure strategy $f'\in[N]^{[C]}$. Furthermore, a nice property of no-polytope-swap-regret learners is that they can cap the optimizer's utility in any repeated Bayesian games:
\begin{lemma}[{\citet[Theorem 6]{mansour2022strategizing}}]\label{lem:mansour-theorem}
In any Bayesian game $\G$ repeated for $T$ rounds, if the learner is no-polytope-swap-regret, then the optimizer's optimal expected utility in $T$ rounds is no more than $V\cdot T+o(T)$, where $V$ is the Stackelberg utility of $\G$.
\end{lemma}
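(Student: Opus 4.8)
The plan is to run the standard ``no-swap-regret $\Rightarrow$ optimizer capped at Stackelberg'' argument (as for full-information games in \citet{deng2019strategizing}), lifted to the Bayesian setting where the learner's ``actions'' are pure strategies $f\in[N]^{[C]}$ and ``swap regret'' becomes polytope swap regret. Fix any realized transcript of the $T$ rounds: optimizer actions $i_1,\dots,i_T\in[M]$ and learner mixed strategies $\beta_1,\dots,\beta_T\in\Delta([N]^{[C]})$. Because the no-polytope-swap-regret guarantee holds against every (even adaptive) optimizer, we have $\Reg_{\poly}=o(T)$ for this transcript; and since $\E[u_O(i_t,j_t,c_t)\mid i_t,\beta_t]=u_O(i_t,\beta_t)$, it is enough to prove $\sum_{t\in[T]}u_O(i_t,\beta_t)\le V(\G)\cdot T+o(T)$ for an arbitrary transcript and then take expectations.

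First I would turn $\Reg_{\poly}=o(T)$ into one weighted near-best-response inequality. Let $\{\rho_t\}$ be the minimizer over equivalence classes in Definition~\ref{def:S-poly-regret}; since each $\rho_t$ is equivalent to $\beta_t$, all relevant utilities agree, e.g.\ $u_L(i_t,\rho_t)=u_L(i_t,\beta_t)$ and $u_O(i_t,\rho_t)=u_O(i_t,\beta_t)$. For each pure strategy $f$ set $w_f:=\sum_{t}\rho_{t,f}$ (so $\sum_f w_f=T$), and for $w_f>0$ let $\bar\alpha_f\in\Delta([M])$ put mass $\tfrac1{w_f}\sum_{t:\,i_t=i}\rho_{t,f}$ on each $i\in[M]$. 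Rewriting the definition and noting the optimal swap picks $\pi(f)\in\argmax_{g}u_L(\bar\alpha_f,g)$ pointwise,
\[
\sum_{f}w_f\,\eta_f \;=\; \Reg_{\poly}\;=\;o(T),\qquad \eta_f:=\max_{g\in[N]^{[C]}}u_L(\bar\alpha_f,g)-u_L(\bar\alpha_f,f)\;\ge\;0,
\]
and this single inequality is what lets us tolerate the (possibly $\sim N^C$) many pure strategies without a union bound. Simultaneously, the optimizer's utility is exactly $\sum_t u_O(i_t,\beta_t)=\sum_t u_O(i_t,\rho_t)=\sum_{f}w_f\,u_O(\bar\alpha_f,f)$.

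Next I would introduce the robust Stackelberg value $V_\eta:=\max\{\,u_O(\alpha,g): \alpha\in\Delta([M]),\,g\in[N]^{[C]},\, u_L(\alpha,g)\ge\max_{g'}u_L(\alpha,g')-\eta\,\}$, i.e.\ the best the optimizer can do when the learner is only an $\eta$-approximate best-responder; note $V_0=V(\G)$ (Definition~\ref{def:bayesian_stackelberg}) and $\eta\mapsto V_\eta$ is nondecreasing. Since $(\bar\alpha_f,f)$ is feasible for $V_{\eta_f}$, we get $u_O(\bar\alpha_f,f)\le V_{\eta_f}$ for every $f$. Fix a threshold $\delta>0$: from the display, $\sum_{f:\eta_f>\delta}w_f\le\tfrac1\delta\sum_f w_f\eta_f=o(T)/\delta$, while for $\eta_f\le\delta$ we have $u_O(\bar\alpha_f,f)\le V_\delta$; using the finite-game bound $|u_O|\le U<\infty$ to control the remaining weight,
\[
\sum_t u_O(i_t,\beta_t)\;=\;\sum_f w_f\,u_O(\bar\alpha_f,f)\;\le\;V_\delta\cdot T+\frac{U\cdot o(T)}{\delta}.
\]
Now let $\delta=\delta_T\to 0$ slowly enough that $o(T)/\delta_T=o(T)$ (e.g.\ $\delta_T=\sqrt{\Reg_{\poly}/T}$), and invoke the continuity statement $\lim_{\delta\downarrow 0}V_\delta=V(\G)$, which gives $V_{\delta_T}\cdot T=V(\G)\cdot T+o(T)$ and hence $\sum_t u_O(i_t,\beta_t)\le V(\G)\cdot T+o(T)$, as needed.

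The step I expect to be the crux is the continuity claim $\lim_{\delta\downarrow 0}V_\delta=V(\G)$ — morally, that a learner who almost best-responds cannot leak the optimizer noticeably more than the exact Stackelberg value. I would prove it by compactness: $V_\delta\ge V(\G)$ is trivial, and for the other direction take $\delta_n\downarrow0$ with near-optimal feasible pairs $(\alpha_n,g_n)$; since there are finitely many pure strategies, pass to a subsequence with $g_n\equiv g^\star$ and $\alpha_n\to\alpha^\star\in\Delta([M])$, then use continuity of $\alpha\mapsto u_L(\alpha,g)$ and of $\alpha\mapsto\max_{g'}u_L(\alpha,g')$ (a max of finitely many linear functions) to pass to the limit in $u_L(\alpha_n,g_n)\ge\max_{g'}u_L(\alpha_n,g')-\delta_n$, concluding that $g^\star$ is an \emph{exact} best response to $\alpha^\star$; hence $\limsup_n V_{\delta_n}=u_O(\alpha^\star,g^\star)\le V(\G)$. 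A minor but important point is that Definition~\ref{def:S-poly-regret} minimizes over the equivalence class of each $\beta_t$: this only decreases $\Reg_{\poly}$, and because the optimizer's and learner's round utilities are invariant within an equivalence class, carrying out the whole analysis with the minimizers $\rho_t$ (rather than the $\beta_t$) is both harmless and exactly what makes ``$\sum_f w_f\eta_f=\Reg_{\poly}$'' an equality.
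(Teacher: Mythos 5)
The paper states this lemma as a citation to Mansour et al.~(2022, Theorem~6) and does not reproduce its proof, so there is no in-paper argument to compare against. Your reconstruction is correct and follows essentially the same route as the cited proof (and as the full-information argument of Deng et al.~2019): group the rounds by pure strategy via the minimizing $\rho_t$'s to obtain the single aggregated identity $\sum_f w_f\eta_f=\Reg_{\poly}=o(T)$ (equality holding because the minimizing $\rho_t$ shares all relevant marginals with $\beta_t$), bound $u_O(\bar\alpha_f,f)$ by the robust Stackelberg value $V_{\eta_f}$, split by a threshold $\delta$, and close with the compactness argument showing $\lim_{\delta\downarrow 0}V_\delta=V(\G)$. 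Two minor points worth tidying: the bound $\sum_{f:\eta_f\le\delta}w_f\,V_\delta\le V_\delta T$ tacitly uses $V_\delta\ge 0$; if $V_\delta<0$, write instead $\sum_{f:\eta_f\le\delta}w_f\,u_O(\bar\alpha_f,f)\le \bigl(T-\sum_{f:\eta_f>\delta}w_f\bigr)V_\delta\le V_\delta\,T+U\sum_{f:\eta_f>\delta}w_f$, which only changes the constant on the $o(T)/\delta$ error term. Also, since $\Reg_{\poly}$ is transcript-dependent, to get a deterministic vanishing threshold you should fix a uniform bound $r(T)=o(T)$ on $\Reg_{\poly}$ (which is what the no-polytope-swap-regret hypothesis must mean for the final expectation over transcripts to go through) and set $\delta_T=\sqrt{r(T)/T}$.
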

It is worth noting that the optimizer can always achieve expected utility as much as the Stackelberg utility times the number of rounds against any no-external-regret learner in any Bayesian game~\citep[Lemma 15]{mansour2022strategizing}, which makes the Stackelberg utility a natural benchmark to grade the optimizer's performance. Thus, Lemma~\ref{lem:mansour-theorem} essentially says that by employing a no-polytope-swap-regret algorithm, the learner can cap the optimizer's utility at this benchmark.

\section{Robust mean-based learners in standard first-price auction}\label{section:non-bayesian}
In this section, we show that mean-based no-external-regret learner can cap the optimizer's utility at the Stackelberg utility in any repeated standard (i.e., full-information) first-price auction, where the learner's value is fixed and publicly known.
\begin{theorem}\label{thm:robust-standard-mean-based}
In any standard first-price auction repeated for $T$ rounds, if the learner is no-external-regret and $\gamma$-mean-based such that $\gamma N^2=o(1)$, then the optimizer's optimal expected utility in $T$ rounds is no more than $V\cdot T+o(T)$, where $V$ is the Stackelberg utility of the standard first-price auction.
\end{theorem}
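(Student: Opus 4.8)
The plan is to discard the rounds where the learner plays a non-mean-based action, reduce the optimizer's per-round gain to a quantity governed by the learner's smallest near-optimal bid, and then bound the resulting sum by $V\cdot T+o(T)$ using the payoff structure of first-price auctions. One should not expect anything easier here: even though a mean-based learner plays an approximate best response to its slowly-changing empirical bid-history, the optimizer can genuinely extract $\Theta(T)$ utility from it---for instance by first accumulating a history of low bids and then switching to slightly higher bids the learner does not yet wish to match---so $V\cdot T$ is tight and the proof must encode the Stackelberg trade-off between bidding high to win and bidding low to keep the learner complacent.

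\emph{Reductions.} First discard the rounds that are not $\gamma$-mean-based: each action that is not $\gamma$-mean-based is played with probability at most $\gamma$, so round $t$ is non-mean-based with probability at most $\gamma N$; hence in expectation at most $\gamma N T$ rounds are non-mean-based, and they contribute at most $v_O\gamma N T=o(T)$ to the optimizer's expected utility (which follows from $\gamma N^2=o(1)$). So it suffices to bound $\E\big[\sum_{t\text{ mean-based}}u_O(i_t,j_t,v_L)\big]$, where $j_t$ is the learner's realized bid. Next, write $W_j(t):=|\{s\le t:i_s\le j\}|$, so the learner's cumulative utility for bid $j$ after $t$ rounds is $(v_L-j\eps)W_j(t)$ (up to an $O(1)$ term absorbed into $\gamma T$). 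On a mean-based round $j_t$ lies, with probability at least $1-\gamma N$, in $MB_t:=\{j:(v_L-j\eps)W_j(t)\ge\max_{j'}(v_L-j'\eps)W_{j'}(t)-\gamma T\}$; setting $m_t:=\min MB_t$, and using that every bid $<m_t$ is played with probability at most $\gamma$, the optimizer's expected gain at round $t$---namely $(v_O-i_t\eps)\Pr[j_t<i_t]$---is at most $v_O\gamma N+(v_O-(m_t+1)\eps)^+\,\1[i_t>m_t]$. The crucial point: on a round realizing the second summand the optimizer bids strictly above $m_t$, so it does not increment $W_{m_t}(\cdot)$---it cannot at once bid above $m_t$ (to win at price $(m_t+1)\eps$) and keep $W_{m_t}(t)/t$ large enough for $m_t$ to stay small.

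\emph{Core bound.} It remains to show $\sum_t(v_O-(m_t+1)\eps)^+\,\1[i_t>m_t]\le V\cdot T+o(T)$, which I would do by partitioning the contributing rounds according to the value $m=m_t$. The condition $m_t=m$ says that bidding $m$ is within $\gamma T$ of optimal for the learner while bidding $m-1$ is not; spelling this out with the step-function payoff $u_L(i,j,v_L)=(v_L-j\eps)\1[j\ge i]$ and comparing $m$ against \emph{all} other bids forces the optimizer's cumulative counts to satisfy, up to an additive $\gamma T$, exactly the linear inequalities that define the largest Stackelberg value $V_m\le V$ attainable by committing to a strategy that makes the learner best-respond with bid $m$. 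Hence the ``threshold-$m$'' block contributes at most $V_m$ times its own length plus an error $O(\gamma N T)$---the extra $N$ because the margin $v_L-m\eps$ may be as small as $\eps=\Theta(1/N)$, so one divides the slack $\gamma T$ by it. Summing over the at most $N$ blocks, whose lengths total at most $T$, yields at most $(\max_m V_m)T=V\cdot T$ plus $O(\gamma N^2 T)=o(T)$, which is exactly where the hypothesis $\gamma N^2=o(1)$ enters.

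\emph{Main obstacle.} The core bound is the hard part. Because the optimizer is adaptive and $MB_t$ can contain many nearly-indifferent bids, one cannot freeze either player's distribution over a block; the whole argument has to run on the monotone-in-$t$ counts $W_j(t)$, charging each unit of the optimizer's winnings against the utility the learner forgoes by bidding below the optimizer instead of matching it. A version using only adjacent-bid comparisons would lose an additive $O(\eps T)$, so one must exploit the full best-response system, and several degenerate situations---such as $v_L\le m\eps$, or the optimizer never having bid at most $m-1$ so far---must be dispatched separately, possibly invoking the no-external-regret hypothesis to lower-bound the learner's realized cumulative utility when the per-block bounds are combined.
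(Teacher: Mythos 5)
Your setup (discard non-mean-based rounds, reduce to the minimum near-optimal learner bid $m_t$) matches the paper's first moves, but from there you take a genuinely different route — a block decomposition over the values of $m_t$ — and the central step of that route is asserted, not proved.

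The gap is in the sentence ``Hence the `threshold-$m$' block contributes at most $V_m$ times its own length plus an error.'' The fact that $m_t=m$ forces the \emph{cumulative} counts $W_j(t)$ (the empirical distribution of the optimizer's bids over \emph{all} of $[t]$) to satisfy, up to $\gamma T$ slack, the linear inequalities making $m$ a best response. That says something about the entire history $\{i_1,\dots,i_t\}$, not about the bids played \emph{inside} the block $\{s: m_s=m\}$. The optimizer may concentrate its high bids in the block while the low bids that hold the inequalities together were placed long before, so per-round gain within the block is not tied to $V_m$. Your physical intuition — bidding above $m_t$ does not increment $W_{m_t}$ and eventually pushes $m_t$ up — is the right heuristic, but to turn it into a proof one must charge each winning round against the amount by which it erodes the learner's cumulative advantage for $m_t$, and that accounting is exactly what is missing. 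Additional complications you would need to address: $m_t$ is not monotone in $t$ (bidding $0$ can bring small bids back into $MB_t$), so the ``threshold-$m$ block'' is a union of intervals and does not admit a simple prefix-based argument; and the crude upper bound $(v_O-(m_t+1)\eps)^+$ in place of $(v_O-i_t\eps)$ loses an amount that, summed over rounds, is not automatically $o(T)$ and in some regimes dominates $V\cdot T$. Finally, the no-external-regret hypothesis appears only as a vague ``possibly invoking,'' while in the paper it does a concrete job.

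For comparison, the paper sidesteps all of this by reducing to a deterministic, then oblivious, optimizer and \emph{modifying} the bid sequence: it prepends a warm-up phase of $\hat\gamma T$ rounds bidding $0$, and zeroes out every bid that is either above $v_O$ or below $\min(X_t)\cdot\eps$. Claim~3.2 then shows the learner's mean-based bids in the modified run are bounded above by $\min(X_t)$, so (a) the modified sequence earns at least as much as the original, and (b) against this sequence the learner's (approximate, via no-external-regret) best response is bid $0$. The optimizer's total utility is then at most $(1+\hat\gamma)T\cdot u_O(\hat\alpha,0)$ for the empirical distribution $\hat\alpha$ of the modified bids, and a small perturbation of $\hat\alpha$ makes bid $0$ an exact best response, giving $u_O(\hat\alpha,0)\le V+o(1)$. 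The key structural move — the warm-up phase together with zeroing out unprofitable bids, which makes the \emph{final} empirical distribution a valid Stackelberg strategy — has no analogue in your sketch, and it is precisely the device that circumvents the block-accounting difficulty you name as the ``main obstacle'' but do not resolve.
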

We first give a high-level proof sketch and then prove Theorem~\ref{thm:robust-standard-mean-based}. 
\begin{proof}[High-level proof sketch]\renewcommand{\qedsymbol}{}
Given an arbitrary no-external-regret mean-based learner, we want to upper bound the optimizer's expected utility in $T$ rounds using the Stackelberg utility. In general, the optimizer can be randomized and adaptive. We first show that w.l.o.g.~the optimizer is {\em deterministic} by a standard probabilistic argument. Then, we construct a sequence of bids which is {\em oblivious} (aka independent of learner's bids in previous rounds), and yet  achieves expected utility almost as good as the optimizer's deterministic algorithm. Intuitively, we are able to do this because the behaviour of a mean-based learner is reasonably predictable, and thus adaptivity does not help the optimizer much. Moreover, we show that the oblivious sequence of bids we constructed has a nice property -- the learner's best response to the optimizer's mixed strategy $\hat{\alpha}$ which corresponds to the empirical distribution of the oblivious sequence of bids, is essentially bidding $0$ (this is where we will use the assumption that the learner is no-external-regret). Note that (barring the case where the optimizer bids higher than its value, which will be handled in the proof) the optimizer's total expected utility using the oblivious sequence of bids against the no-external-regret learner, is trivially upper bounded by its total utility using the oblivious sequence of bids against a learner who always bids $0$. The latter is exactly the expected utility of the optimizer's mixed strategy $\hat{\alpha}$ against the learner's best response to $\hat{\alpha}$ times the number of rounds. Finally, note that the expected utility of the optimizer's mixed strategy $\hat{\alpha}$ against the learner's best response is at most the Stackelberg utility by definition of the Stackelberg utility.
\end{proof}
\begin{proof}[Proof of Theorem~\ref{thm:robust-standard-mean-based}]
Suppose the optimizer's value and the learner's value of the item are $v_O, v_L\in\I$ respectively ($\I$ is defined in Assumption~\ref{assumption:discrete-bid}, and the learner's value $v_L$ is fixed since the auction is standard), and suppose the learner is $\gamma$-mean-based for some $\gamma$ such that $\gamma N^2=o(1)$. In general, the optimizer can be randomized and adaptive, and we first reduce to the case where the optimizer is deterministic and oblivious.
\subsubsection*{Reducing to deterministic optimizer}
Suppose the optimizer runs a randomized algorithm $\A_O^{(\tilde{r}_O)}$ which decides which action to play (i.e., which bid to submit) for each round given access to a uniformly random binary string $\tilde{r}_O\sim\textrm{Uniform}(\{0,1\}^{R})$, and similarly, the learner runs a $\gamma$-mean-based no-external-regret randomized algorithm $\A_L^{(\tilde{r}_L)}$ given a uniformly random binary string $\tilde{r}_L\sim\textrm{Uniform}(\{0,1\}^{R})$, where we let $R$ be an upper bound of the total runtime of the optimizer's and the learner's algorithms in $2T$ rounds (such upper bound exists by Assumption~\ref{assumption:finite_runtime}). For any fixed binary strings $r_O$ and $r_L$, the algorithms $\A_O^{(r_O)}$ and $\A_L^{(r_L)}$ become deterministic, and hence, the utility of the optimizer in all rounds $[T]$ is determined, which we denote by $U_O(\A_O^{(r_O)},\A_L^{(r_L)},[T])$. Thus, the expected utility of the optimizer in $T$ rounds can be represented as $\E_{\tilde{r}_O}\E_{\tilde{r}_L}[U_O(\A_O^{(\tilde{r}_O)},\A_L^{(\tilde{r}_L)},[T])]$, which is at most $\max\limits_{r_O\in\{0,1\}^{|\tilde{r}_O|}}\E_{r_L}[U_O(\A_O^{(r_O)},\A_L^{(\tilde{r}_L)},[T])]$.
Therefore, there exists a determinsitically fixed binary string $r_O^*\in\{0,1\}^{|\tilde{r}_O|}$ such that the deterministic algorithm $\A_O^{(r_O^*)}$ achieves no less expected utility in $T$ rounds than the randomized algorithm $\A_O^{(\tilde{r}_O)}$ for the optimizer against the learner's algorithm $\A_L^{(\tilde{r}_L)}$. Henceforth, we assume that the optimizer runs some deterministic algorithm $\A_O$ against the learner's algorithm $\A_L^{(\tilde{r}_L)}$ w.l.o.g.

\subsubsection*{Reducing to oblivious optimizer}
Let $\T(\tilde{r}_L)=\{t\in[T]\textrm{ s.t.~round $t$ is $\gamma$-mean-based}\}$ (note that $\T(\tilde{r}_L)$ is a random set which depends on the randomness $\tilde{r}_L$ used by the learner's algorithm $\A_L^{(\tilde{r}_L)}$). By definition of $\gamma$-mean-based learner, if an action $j\in\{0,\dots,N-1\}$ is not $\gamma$-mean-based, then the probability that $j$ is played is at most $\gamma$. The number of actions that are not $\gamma$-mean-based is trivially upper bounded by $N$ (in Assumption~\ref{assumption:discrete-bid}), and hence, by a union bound, for each $t\in[T]$, the probability that round $t$ is not $\gamma$-mean-based is at most $\gamma N$. It follows that $\E_{\tilde{r}_L}[|[T]\setminus\T(\tilde{r}_L)|]\le \gamma N T$. Now we decompose the optimizer's expected utility in $T$ rounds as follows
\begin{align*}
    \E_{\tilde{r}_L}[U_O(\A_O,\A_L^{(\tilde{r}_L)},[T])]=&\E_{\tilde{r}_L}[\underbrace{U_O(\A_O,\A_L^{(\tilde{r}_L)},\T(\tilde{r}_L))}_{\textrm{optimizer's utility in rounds $\T(\tilde{r}_L)$}}+\underbrace{U_O(\A_O,\A_L^{(\tilde{r}_L)},[T]\setminus\T(\tilde{r}_L))}_{\textrm{optimizer's utility in rounds $[T]\setminus\T(\tilde{r}_L)$}}]\\
    \le&\E_{\tilde{r}_L}[U_O(\A_O,\A_L^{(\tilde{r}_L)},\T(\tilde{r}_L))+v_O \cdot |[T]\setminus\T(\tilde{r}_L)|]\\
    &\text{(Because optimizer's utility per round is at most $v_O$)}\\
    = &\E_{\tilde{r}_L}[U_O(\A_O,\A_L^{(\tilde{r}_L)},\T(\tilde{r}_L))] + v_O \cdot\E_{\tilde{r}_L}[|[T]\setminus\T(\tilde{r}_L)|]\\
    \le &\E_{\tilde{r}_L}[U_O(\A_O,\A_L^{(\tilde{r}_L)},\T(\tilde{r}_L))] + v_O\cdot \gamma N T\\
    = &\E_{\tilde{r}_L}[U_O(\A_O,\A_L^{(\tilde{r}_L)},\T(\tilde{r}_L))] + o(T)\\
    &\text{(Because $v_O\le\eps N$ by Assumption~\ref{assumption:discrete-bid} and $\gamma N^2=o(1)$)}\\
    \le&\max\limits_{r_L\in\{0,1\}^{R}}U_O(\A_O,\A_L^{(r_L)},\T(r_L)) + o(T).
\end{align*}
Let $r_L^*:=\argmax_{r_L\in\{0,1\}^{R}}U_O(\A_O,\A_L^{(r_L)},\T(r_L))$. Given $r_L^*$, we now construct an oblivious algorithm for the optimizer that achieves expected utility at least $U_O(\A_O,\A_L^{(r_L^*)},\T(r_L^*))-o(T)$ (and hence at least $\E_{\tilde{r}_L}[U_O(\A_O,\A_L^{(\tilde{r}_L)},[T])]-o(T)$) in $T+o(T)$ rounds against the learner's randomized algorithm $\A_L^{(\tilde{r}_L)}$ (and we will then upper bound the expected utility achieved by this oblivious algorithm in $T+o(T)$ rounds in order to prove the theorem).

Notice that both algorithms $\A_O$ and $\A_L^{(r_L^*)}$ are deterministic. Thus, together they result in a deterministic sequence of bids $b_O^1,\dots,b_O^T\in\I$ from the optimizer and another deterministic sequence of bids $b_L^1,\dots,b_L^T\in\I$ from the learner. We denote by $X_t(b[1:T'])\subseteq\{0,\dots,N-1\}$ the set of $\gamma$-mean-based actions of the learner at round $t\in [T']$ if the auction is repeated for $T'$ rounds with the optimizer bidding $b^1,\dots,b^{T'}$. The oblivious algorithm we construct for the optimizer is simply bidding $\hat{b}_{O}^1,\dots,\hat{b}_{O}^{(1+\hat{\gamma})T}$ in $(1+\hat{\gamma})T$ rounds, where $\hat{b}_{O}^t=0$ for all $t\in[\hat{\gamma}T]$, and in the remaining rounds, we let $\hat{b}_{O}^{t+\hat{\gamma}T}=b_O^{t}$ if $b_O^{t}$ is higher than the minimum bid that is $\gamma$-mean-based at round $t$ but not higher than the optimizer's value $v_O$, formally,
$$\hat{b}_{O}^{t+\hat{\gamma}T}=\1(\min(X_t(b_O[1:T]))\cdot\eps<b_O^{t}\le v_O)\cdot b_O^{t}  \;\;\; \forall t\in [T],$$ where $\varepsilon$ is the constant in Assumption~\ref{assumption:discrete-bid}, and $\hat{\gamma}:=\frac{3\gamma}{\eps-\gamma}=o(1)$. We want to show that when bidding this sequence against the learner's randomized algorithm $\A_L^{(\tilde{r}_L)}$ in $(1+\hat{\gamma})T$ rounds, the optimizer's expected utility in $(1+\hat{\gamma})T$ rounds is at least $U_O(\A_O,\A_L^{(r_L^*)},\T(r_L^*))-o(T)$. The following claim is the key property we will use to show this. Henceforth, we will use $X_t$ and $\hat{X}_{t+\hat{\gamma}T}$ to refer to $X_t(b_O[1:T])$ and $X_{t+\hat{\gamma}T}(\hat{b}_O[1:(1+\hat{\gamma})T])$ for simplicity.

\begin{claim}\label{claim:lower-mean-based-bid}
Suppose $\gamma=o(1)$ and $\hat{\gamma}>\frac{2\gamma}{\eps-\gamma}$ (hence $\eps\hat{\gamma}>\gamma(2+\hat{\gamma})$). Consider any two sequences of the optimizer's bids $b_O^1,\dots,b_O^T\in\I$ and $\hat{b}_O^1,\dots,\hat{b}_O^{(1+\hat{\gamma})T}\in\I$, which satisfy that $\hat{b}_O^t=0$ for each $t\in[\hat{\gamma} T]$, and
$\hat{b}_O^t$ is either $b_O^{t-\hat{\gamma}T}$ or $0$ for each $t\in\{\hat{\gamma} T+1,\dots,(1+\hat{\gamma})T\}$.
Then, it holds for all $t\in[T]$ that $\max(\hat{X}_{t+\hat{\gamma}T})\le\min (X_t)$ (in other words, the maximum bid that is $\gamma$-mean-based at round $t+\hat{\gamma}T$ if the auction is repeated for $(1+\hat{\gamma})T$ rounds with the optimizer bidding $\hat{b}_O^1,\dots,\hat{b}_O^{(1+\hat{\gamma})T}$, is no higher than the minimum bid that is $\gamma$-mean-based at round $t$ if the auction is repeated for $T$ rounds with the optimizer bidding $b_O^1,\dots,b_O^T$).
\end{claim}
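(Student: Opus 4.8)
The plan is to derive closed forms for the learner's cumulative utilities under both bid sequences and then compare them action by action. Since the auction is standard, the learner's value $v_L$ is a fixed element of $\I$, say $v_L=\ell\eps$ for an integer $\ell$, and in any round the learner's utility for action $j$ is $v_L-j\eps$ if it wins the item (which happens exactly when the optimizer's bid is at most $j\eps$) and $0$ otherwise. Hence, setting $w_j:=|\{s\in[t]:b_O^s\le j\eps\}|$ and $\hat w_j:=|\{s\in[t]:\hat b_O^{s+\hat\gamma T}=0\text{ or }b_O^s\le j\eps\}|$, the cumulative utility of action $j$ at round $t$ under the first sequence is $\sigma_{j,t}(v_L)=(v_L-j\eps)\,w_j$, while at round $t+\hat\gamma T$ under the second sequence it is $\hat\sigma_{j,\,t+\hat\gamma T}(v_L)=(v_L-j\eps)(\hat\gamma T+\hat w_j)$, since in the $\hat\gamma T$ leading rounds the optimizer bids $0$ and the learner wins with any action. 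Two monotonicity facts will drive everything: $\hat w_j\ge w_j$, and $\delta_j:=\hat w_j-w_j=|\{s\in[t]:\hat b_O^{s+\hat\gamma T}=0,\ b_O^s>j\eps\}|$ is non-increasing in $j$ (replacing a bid by $0$ helps a lower action at least as much as a higher one).

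First I would rule out every action $j>\ell$ (every bid strictly above $v_L$) from $\hat X_{t+\hat\gamma T}$ by comparing it with action $0$: we have $\hat\sigma_{0,\,t+\hat\gamma T}(v_L)=v_L(\hat\gamma T+\hat w_0)\ge v_L\hat\gamma T$, whereas $\hat\sigma_{j,\,t+\hat\gamma T}(v_L)\le(v_L-j\eps)\hat\gamma T$ because the coefficient $v_L-j\eps$ is negative; subtracting gives a gap of at least $j\eps\hat\gamma T\ge\eps\hat\gamma T>\gamma T$, using $\eps\hat\gamma>2\gamma$ which follows from the hypothesis $\eps\hat\gamma>\gamma(2+\hat\gamma)$. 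Thus $\hat X_{t+\hat\gamma T}\subseteq\{0,1,\dots,\ell\}$. If $m:=\min(X_t)$ already satisfies $m\ge\ell$, then $\max(\hat X_{t+\hat\gamma T})\le\ell\le m$ and we are done.

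Otherwise $m<\ell$, so $a:=v_L-m\eps\ge\eps>0$, and it remains to exclude from $\hat X_{t+\hat\gamma T}$ every action $j$ with $m<j\le\ell$; for this I would compare $j$ with $m$. Writing $b:=v_L-j\eps\ge0$ (as $j\le\ell$) and noting $a>b\ge0$, expand
\[
\hat\sigma_{m,\,t+\hat\gamma T}(v_L)-\hat\sigma_{j,\,t+\hat\gamma T}(v_L)=(a-b)\hat\gamma T\;+\;\bigl(\sigma_{m,t}(v_L)-\sigma_{j,t}(v_L)\bigr)\;+\;\bigl(a\delta_m-b\delta_j\bigr).
\]
The first term is at least $\eps\hat\gamma T$ since $a-b=(j-m)\eps\ge\eps$; the second is at least $-\gamma T$ since $m$ is $\gamma$-mean-based at round $t$ under the first sequence, so $\sigma_{m,t}(v_L)\ge\sigma_{j,t}(v_L)-\gamma T$; and the third is nonnegative since $a>b\ge0$ and $\delta_m\ge\delta_j\ge0$. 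Hence the gap exceeds $(\eps\hat\gamma-\gamma)T>\gamma T$, so $j\notin\hat X_{t+\hat\gamma T}$. Combining the two cases gives $\max(\hat X_{t+\hat\gamma T})\le m=\min(X_t)$.

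The one genuine subtlety, and the reason for the case split, is the sign of $v_L-j\eps$: the ``lowering'' of bids to $0$ and the leading zero-bids help low actions more than high ones only when the relevant coefficients are nonnegative, so bids strictly above $v_L$ must be disposed of separately (via the comparison with $0$) before the main comparison against $\min(X_t)$ can be applied to the remaining actions. Everything else is bookkeeping on the counts $w_j,\hat w_j,\delta_j$, and the hypothesis $\eps\hat\gamma>\gamma(2+\hat\gamma)$ is invoked — with room to spare — precisely to push both gaps above strictly past $\gamma T$.
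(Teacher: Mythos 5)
Your argument is correct and, despite the different presentation, has the same logical skeleton as the paper's: dispose of actions bidding strictly above $v_L$ separately (they are beaten by action $0$ once the $\hat{\gamma}T$ zero bids force them to repeatedly overpay), then compare each remaining $j$ against $m=\min(X_t)$ by splitting the cumulative-utility gap into a contribution from the $\hat{\gamma}T$ prepended zero bids (at least $\eps\hat{\gamma}T$) plus a contribution from the shifted rounds, for which the $\gamma$-mean-basedness of $m$ at round $t$ and the monotone effect of zeroing bids give a lower bound of $-\gamma T$. What differs is the bookkeeping: the paper argues round by round (a three-way case analysis on how $m\eps$, $b_O^s$, $j\eps$ interleave, showing $r_{m,s}-r_{j,s}\le\hat{r}_{m,s+\hat{\gamma}T}-\hat{r}_{j,s+\hat{\gamma}T}$), while you write cumulative utilities in closed form as $(v_L-j\eps)\times(\text{win count})$ and observe that $a\delta_m-b\delta_j\ge0$ is the aggregate of the same per-round inequality. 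Yours is more compact and makes the monotone structure visible at a glance, so it is a nice reformulation, but it is not a different route. One small imprecision worth fixing: since the second sequence runs for $(1+\hat{\gamma})T$ rounds, the threshold for excluding an action from $\hat{X}_{t+\hat{\gamma}T}$ is $\gamma(1+\hat{\gamma})T$, not $\gamma T$. Your two gaps do clear that bar — $\eps\hat{\gamma}>\gamma(2+\hat{\gamma})>\gamma(1+\hat{\gamma})$ for the first case, and $\eps\hat{\gamma}-\gamma>\gamma(1+\hat{\gamma})$ is \emph{exactly} the stated hypothesis for the second — so there is no substantive gap and, contrary to your closing remark, there is no room to spare in case two; but the final comparisons should be stated against $\gamma(1+\hat{\gamma})T$.
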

\begin{proof}[Proof of Claim~\ref{claim:lower-mean-based-bid}]
Let $r_{j,s}$ be the utility of action $j\in\{0,\dots,N-1\}$ for the learner at round $s\in[T]$ if the optimizer bids $b_O^{s}$ at round $s$, and let $\hat{r}_{j,s}$ be the utility of action $j$ for the learner at round $s\in[(1+\hat{\gamma})T]$ if the optimizer bids $\hat{b}_O^{s}$ at round $s$.
Let $\sigma_{j,t}=\sum_{s=1}^t r_{j,s}$ denote the cumulative utility of action $j$ for the first $t$ rounds given the optimizer's bids $b_O^1,\dots,b_O^{t}$, and let $\hat{\sigma}_{j,t}=\sum_{s=1}^t \hat{r}_{j,s}$ denote the cumulative utility of action $j$ for the first $t$ rounds given the optimizer's bids $\hat{b}_O^1,\dots,\hat{b}_O^{s}$.
Consider any $j\in\{0,\dots,N-1\}$ such that $j>\min(X_t)$. Our goal is to show that $j\notin\hat{X}_{t+\hat{\gamma}T}$ for any $j>\min(X_t)$ (which implies $\max(\hat{X}_{t+\hat{\gamma}T})\le \min(X_t)$). We prove this in two steps: Let $j_t=\min(X_t)$ and consider any $j>j_t$. First, we prove that at each round, action $j$ is not better than $j_t$. Then, we prove that in the first $\hat{\gamma}T$ rounds when the optimizer bids $0$, $j$ is strictly worse than $j_t$, and the gap is large enough to imply that $j\notin \hat{X}_{t+\hat{\gamma}T}$.

\subsubsection*{Step 1: at each round, $j$ is not better than $j_t$}
In the first step, we compare the contribution of each round $s\in[t]$ to $\sigma_{j_t,t}-\sigma_{j,t}$ (which is $r_{j_t,s}-r_{j,s}$) with the contribution of round $s+\hat{\gamma}T$ to $\hat{\sigma}_{j_t,t+\gamma T}-\hat{\sigma}_{j,t+\gamma T}$ (which is $\hat{r}_{j_t,s+\hat{\gamma}T}-\hat{r}_{j,s+\hat{\gamma}T}$), and we want to show that w.l.o.g.~$r_{j_t,s}-r_{j,s}\le \hat{r}_{j_t,s+\hat{\gamma}T}-\hat{r}_{j,s+\hat{\gamma}T}$ for all $s\in[t]$. We show this by case analysis -- at each round $s\in[t]$, there are three possible cases: (a) $j_t\cdot\eps\ge b_O^s$, (b) $j\cdot\eps< b_O^s$, and (c) $j_t\cdot\eps< b_O^s\le j\cdot\eps$.

\paragraph{Case (a): $j_t\cdot\eps\ge b_O^s$}
In case (a), the two bids $j_t\cdot\eps$ and $j\cdot\eps$ (corresponding to the two actions $j_t$ and $j$) both win the item against the optimizer's bid $b_O^s$ (by Assumption~\ref{assumption:tie-breaking}). Moreover, since $\hat{b}_O^{s+\hat{\gamma}T}$ is either $b_O^s$ or $0$, both bids $j_t\cdot\eps$ and $j\cdot\eps$ should also win the item against the optimizer's bid $\hat{b}_O^{s+\hat{\gamma}T}$. Thus, in this case, we have that $r_{j_t,s}-r_{j,s}=\hat{r}_{j_t,s+\hat{\gamma}T}-\hat{r}_{j,s+\hat{\gamma}T}$. 

\paragraph{Case (b):  $j\cdot\eps< b_O^s$}
In case (b), both bids $j_t\cdot\eps$ and $j\cdot\eps$ lose the item against the optimizer's bid $b_O^s$, and hence $r_{j_t,s}-r_{j,s}=0$. If $\hat{b}_O^{s+\hat{\gamma}T}=b_O^s$, then obviously $\hat{r}_{j_t,s+\hat{\gamma}T}-\hat{r}_{j,s+\hat{\gamma}T}=r_{j_t,s}-r_{j,s}$. If $\hat{b}_O^{s+\hat{\gamma}T}=0$, both bids $j_t\cdot\eps$ and $j\cdot\eps$ win the item against the optimizer's bid $\hat{b}_O^{s+\hat{\gamma}T}$ (by Assumption~\ref{assumption:tie-breaking}), and hence $\hat{r}_{j_t,s+\hat{\gamma}T}-\hat{r}_{j,s+\hat{\gamma}T}=(v_L-j_t\cdot\eps)-(v_L-j\cdot\eps)=(j-j_t)\eps>0=r_{j_t,s}-r_{j,s}$.

\paragraph{Case (c): $j_t\cdot\eps< b_O^s\le j\cdot\eps$}
Case (c) is slightly more complicated due to the possibility that both bids $j_t\cdot\eps$ and $j\cdot\eps$ can be strictly higher than the learner's value $v_L$. However, if $j\cdot\eps>v_L$ (and hence $j\cdot\eps-v_L\ge\eps$ by Assumption~\ref{assumption:discrete-bid}), we can directly show that $j\notin\hat{X}_{t+\hat{\gamma}T}$ which is our final goal. Specifically, since $j\cdot\eps>v_L$, bid $j\cdot\eps$ either wins the item and gets utility $v_L-j\cdot\eps\le-\eps$ or loses the item and gets utility $0$, and hence $\hat{r}_{j,s'}\le 0$ for all $s'\in[t+\hat{\gamma}T]$. Moreover, because the optimizer's bid $\hat{b}_O^{s'}$ is $0$ for all $s'\in[\hat{\gamma} T]$, the learner's bid $j\cdot\eps$ wins the item and gets utility $\hat{r}_{j,s'}=v_L-j\cdot\eps\le-\eps$ in every round $s'\in[\hat{\gamma} T]$. Thus, $\hat{\sigma}_{j,t+\gamma T}=\sum_{s'=1}^{\hat{\gamma} T}\hat{r}_{j,s'}+\sum_{s'=\hat{\gamma} T+1}^{t+\gamma T}\hat{r}_{j,s'}\le-\eps \hat{\gamma} T$, which is strictly less than $-\gamma(2+\hat{\gamma})T$ due to the condition on $\hat{\gamma}$ in the claim. On the other hand, we know that $\hat{\sigma}_{0,t+\gamma T}\ge 0$ because bidding $0$ always results in non-negative utility. Therefore, we have that $\hat{\sigma}_{j,t+\gamma T}<\hat{\sigma}_{0,t+\gamma T}-\gamma(2+\hat{\gamma})T$ which implies $j\notin\hat{X}_{t+\hat{\gamma}T}$.

Now we can assume w.l.o.g.~that $j\cdot\eps\le v_L$ (and $j_t\cdot\eps\le v_L$ as $j_t<j$). In case (c), bid $j\cdot\eps$ wins the item against the optimizer's bid $b_O^s$ and gets utility $v_L-j\cdot\eps\ge0$, and bid $j_t\cdot\eps$ loses the item against $b_O^s$ and gets utility $r_{j_t,s}=0$. It follows that $r_{j_t,s}-r_{j,s}\le 0$. If $\hat{b}_O^{s+\hat{\gamma}T}=b_O^s$, then obviously $\hat{r}_{j_t,s+\hat{\gamma}T}-\hat{r}_{j,s+\hat{\gamma}T}=r_{j_t,s}-r_{j,s}$. If $\hat{b}_O^{s+\hat{\gamma}T}=0$, then both bids $j_t\cdot\eps$ and $j\cdot\eps$ win the item against the optimizer's bid $\hat{b}_O^{s+\hat{\gamma}T}$, and thus $\hat{r}_{j_t,s+\hat{\gamma}T}-\hat{r}_{j,s+\hat{\gamma}T}=(v_L-j_t\cdot\eps)-(v_L-j\cdot\eps)>0\ge r_{j_t,s}-r_{j,s}$.

\subsubsection*{Step 2: $j_t$ is strictly better than $j$ in the first $\hat{\gamma}T$ rounds when the optimizer bids $0$}
In the second step, we prove that $\hat{\sigma}_{j_t,t+\gamma T}-\hat{\sigma}_{j,t+\gamma T}>\gamma(1+\hat{\gamma})T$ which implies that $j\notin\hat{X}_{t+\hat{\gamma}T}$. We notice that $\hat{\sigma}_{j_t,t+\gamma T}-\hat{\sigma}_{j,t+\gamma T}=\left(\sum_{s'=1}^{\hat{\gamma} T}\hat{r}_{j_t,s'}-\hat{r}_{j,s'}\right)+\left(\sum_{s'=\hat{\gamma} T+1}^{t+\gamma T}\hat{r}_{j_t,s'}-\hat{r}_{j,s'}\right)$, and we lower bound $\sum_{s'=1}^{\hat{\gamma} T}\hat{r}_{j_t,s'}-\hat{r}_{j,s'}$ and $\sum_{s'=\hat{\gamma} T+1}^{t+\gamma T}\hat{r}_{j_t,s'}-\hat{r}_{j,s'}$ separately.

Since $j_t\in X_t$, we have that $\sigma_{j_t,t}\ge\sigma_{j,t}-\gamma T$, namely $\sum_{s=1}^t r_{j_t,s}-r_{j,s}\ge-\gamma T$. In step (i), we proved that $r_{j_t,s}-r_{j,s}\le \hat{r}_{j_t,s+\hat{\gamma}T}-\hat{r}_{j,s+\hat{\gamma}T}$ for all $s\in[t]$. Hence, we have that $\sum_{s=\hat{\gamma}T+1}^{t+\hat{\gamma}T}\hat{r}_{j_t,s}-\hat{r}_{j,s}=\sum_{s=1}^t\hat{r}_{j_t,s+\hat{\gamma}T}-\hat{r}_{j,s+\hat{\gamma}T}\ge\sum_{s=1}^t r_{j_t,s}-r_{j,s}\ge-\gamma T$.

On the other hand, when faced with the optimizer's bid $\hat{b}_O^{s}=0$ at each round $s\in[\hat{\gamma} T]$, bid $j\cdot\eps$ wins the item and gets utility $\hat{r}_{j,s}=v_L-j\cdot\eps$, and bid $j_t\cdot\eps$ also wins the item and gets utility $\hat{r}_{j_t,s}=v_L-j_t\cdot\eps$. Thus, $\sum_{s=1}^{\hat{\gamma} T}\hat{r}_{j_t,s}-\hat{r}_{j,s}=\hat{\gamma} T(j-j_t)\eps\ge \hat{\gamma} T\eps$, which is strictly greater than $\gamma(2+\hat{\gamma})T$ due to the condition on $\hat{\gamma}$ in the claim.

\subsubsection*{Putting Steps 1 and 2 together}
To sum up, we have that $\left(\sum_{s'=1}^{\hat{\gamma} T}\hat{r}_{j_t,s'}-\hat{r}_{j,s'}\right)+\left(\sum_{s'=\hat{\gamma} T+1}^{t+\gamma T}\hat{r}_{j_t,s'}-\hat{r}_{j,s'}\right)>-\gamma T + \gamma(2+\hat{\gamma})T=\gamma(1+\hat{\gamma})T$, which finishes the proof of the claim.
\end{proof}

Observe that the two sequences of bids $b_{O}^1,\dots,b_{O,}^{T}$ and $\hat{b}_{O}^1,\dots,\hat{b}_{O}^{(1+\hat{\gamma})T}$ we chose satisfy the condition in Claim~\ref{claim:lower-mean-based-bid}, and the $\hat{\gamma}$ we chose also satisfies the condition in the claim. Hence, we can apply Claim~\ref{claim:lower-mean-based-bid} and get that for all $t\in[T]$, $\max(\hat{X}_{t+\hat{\gamma}T})\le\min (X_{t})$, and
in particular, this holds for each $t\in\T(r_L^*)$.
Recall that by definition of $\T(r_L^*)$ and $b_L^1,\dots,b_L^T$, at each round $t\in\T(r_L^*)$, bid $b_{L}^t$ is $\gamma$-mean-based, i.e., $b_{L}^t=j_t\cdot\eps$ for some $j_t\in X_{t}$, which implies that $\max(\hat{X}_{t+\hat{\gamma}T})\le j_t$. Therefore, in the auction repeated for $(1+\hat{\gamma})T$ rounds with the optimizer bidding $\hat{b}_{O}^1,\dots,\hat{b}_{O}^{(1+\hat{\gamma})T}$, for any $t\in\T(r_L^*)$, if the learner's bid $\hat{j}_{t+\hat{\gamma}T}\cdot\eps$ at round $t+\hat{\gamma}T$ is $\gamma$-mean-based (i.e., $\hat{j}_{t+\hat{\gamma}T}\in \hat{X}_{t+\hat{\gamma}T}$), then it must hold that $\hat{j}_{t+\hat{\gamma}T}\cdot\eps\le j_t\cdot\eps=b_{L}^t$. Moreover, we can show that our choice of bids $\hat{b}_{O}^{t+\hat{\gamma}T}$ implies that 
for each $t\in\T(r_L^*)$, the optimizer's utility by bidding $\hat{b}_{O}^{t+\hat{\gamma}T}$ against the learner's bid $\hat{j}_{t+\hat{\gamma}T}\cdot \eps$, is no less than its utility by bidding $b_{O}^t$ against the learner's bid $b_{L}^t$. Specifically, if $\min(X_t)\cdot\eps<b_O^{t}\le v_O$, then the optimizer's two bids $\hat{b}_{O}^{t+\hat{\gamma}T}$ and $b_O^{t}$ are equal, and the learner's bid $\hat{j}_{t+\hat{\gamma}T}\cdot\eps$ is no higher than $b_{L}^t=j_t\cdot\eps$ as we have shown. If $b_O^{t}>v_O$, then the optimizer's utility is at most $0$ by bidding $b_O^{t}$. If $b_O^{t}\le\min(X_t)\cdot\eps$, then the optimizer's utility by bidding $b_O^{t}$ against the learner's bid $b_L^t$ is $0$ because $b_L^t=j_t\cdot\eps\ge\min(X_t)\cdot\eps$ as $j_t\in X_{t}$ for each $t\in\T(r_L^*)$. On the other hand, if $b_O^{t}>v_O$ or $b_O^{t}\le\min(X_t)\cdot\eps$, the optimizer's utility is at least $0$ by bidding $\hat{b}_{O}^{t+\hat{\gamma}T}=0$.

We have shown that for each $t\in\T(r_L^*)$, if the learner plays $\gamma$-mean-based action at round $t+\hat{\gamma}T$, then, the optimizer's utility by bidding $\hat{b}_{O}^{(t+\hat{\gamma})T}$ at round $t+\hat{\gamma}T$ is at least the optimizer's utility by bidding $b_{O}^t$ against the learner's bid $b_{L}^t$, which is the contribution of round $t$ to $U_O(\A_O,\A_L^{(r_L^*)},\T(r_L^*))$ (recall that $U_O(\A_O,\A_L^{(r_L^*)},\T(r_L^*))$ is the optimizer's total utility by bidding $b_{O}^1,\dots,b_{O}^{T}$ against the learner's bids $b_L^1,\dots,b_L^T$ in $T$ rounds). Since the learner's randomized algorithm $\A_L^{(\tilde{r}_L)}$ is $\gamma$-mean-based, $\A_L^{(\tilde{r}_L)}$ plays $\gamma$-mean-based action with probability at least $1-\gamma N$ at each round. It follows that by bidding $\hat{b}_{O}^1,\dots,\hat{b}_{O}^{(1+\hat{\gamma})T}$ against the learner's algorithm $\A_L^{(\tilde{r}_L)}$, the optimizer's expected utility in rounds $\{t+\hat{\gamma}T\mid t\in\T(r_L^*)\}$ is at least $(1-\gamma N)U_O(\A_O,\A_L^{(r_L^*)},\T(r_L^*))\ge U_O(\A_O,\A_L^{(r_L^*)},\T(r_L^*))-\gamma N \cdot v_O\cdot T=U_O(\A_O,\A_L^{(r_L^*)},\T(r_L^*))-o(T)$. Besides, since $\hat{b}_{O}^t\le v_O$ for all $t\in[(1+\hat{\gamma})T]$, the optimizer's utility in the other rounds is non-negative, and thus, the optimizer's total expected utility in all the rounds $[(1+\hat{\gamma})T]$ is at least $U_O(\A_O,\A_L^{(r_L^*)},\T(r_L^*))-o(T)$. Therefore, it suffices to upper bound the optimizer's total expected utility when bidding $\hat{b}_{O}^1,\dots,\hat{b}_{O}^{(1+\hat{\gamma})T}$ against the learner's algorithm $\A_L^{(\tilde{r}_L)}$ in $(1+\hat{\gamma})T$ rounds.

\subsubsection*{Bid $0$ is learner's approximately best response to oblivious optimizer's bids}
Now we show that bid $0$ is an approximately best response of the learner to the optimizer's mixed strategy $\hat{\alpha}$ corresponding to the empirical distribution of the bids $\hat{b}_{O}^1,\dots,\hat{b}_{O}^{(1+\hat{\gamma})T}$, in the sense that $u_L(\hat{\alpha},0)\ge u_L(\hat{\alpha},j)-o(1)$ for any $j\in\{0,\dots,N-1\}$, where $u_L(\hat{\alpha},j)$ denotes the learner's expected utility by bidding $j\cdot\eps$ against the optimizer's mixed strategy $\hat{\alpha}$. Given the optimizer's bids $\hat{b}_{O}^1,\dots,\hat{b}_{O}^{(1+\hat{\gamma})T}$ in $(1+\hat{\gamma})T$ rounds, let $u_{t}(\tilde{r}_L)$ denote the learner's utility achieved by the randomized algorithm $\A_L^{(\tilde{r}_L)}$ at round $t\in[(1+\hat{\gamma})T]$ (note that $u_{t}(\tilde{r}_L)$ is a random variable which depends on the algorithm's randomness $\tilde{r}_L$), and let $u_{j,t}$ denote the learner's utility by bidding $j\cdot\eps$ at round $t$ for $j\in\{0,\dots,N-1\}$. Thus, the learner's expected utility achieved by the randomized algorithm $\A_L^{(\tilde{r}_L)}$ in $(1+\hat{\gamma})T$ rounds is $\E_{\tilde{r}_L}[\sum_{t=1}^{(1+\hat{\gamma})T}u_{t}(\tilde{r}_L)]$, and the learner's utility by bidding $0$ in $(1+\hat{\gamma})T$ rounds is $\sum_{t=1}^{(1+\hat{\gamma})T}u_{0,t}$. Because the learner's algorithm $\A_L^{(\tilde{r}_L)}$ is no-external-regret, if we could show $\sum_{t=1}^{(1+\hat{\gamma})T}u_{0,t}\ge \E_{\tilde{r}_L}[\sum_{t=1}^{(1+\hat{\gamma})T}u_{t}(\tilde{r}_L)]-o(T)$, then we would have that  $\sum_{t=1}^{(1+\hat{\gamma})T}u_{0,t}\ge \sum_{t=1}^{(1+\hat{\gamma})T}u_{j,t}-o(T)$ for any $j\in \{0,\dots,N-1\}$. Since $\sum_{t=1}^{(1+\hat{\gamma})T}u_{0,t}\ge \sum_{t=1}^{(1+\hat{\gamma})T}u_{j,t}-o(T)$ is equivalent to $u_L(\hat{\alpha},0)\ge u_L(\hat{\alpha},j)-o(1)$, it remains to prove $\sum_{t=1}^{(1+\hat{\gamma})T}u_{0,t}\ge \E_{\tilde{r}_L}[\sum_{t=1}^{(1+\hat{\gamma})T}u_{t}(\tilde{r}_L)]-o(T)$.

The learner's utility in the first $\hat{\gamma}T$ rounds is always $o(T)$ (since $\hat{\gamma}T=o(T)$), and thus, it suffices to show that $\sum_{t=\hat{\gamma}T+1}^{(1+\hat{\gamma})T}u_{0,t}\ge \E_{\tilde{r}_L}[\sum_{t=\hat{\gamma}T+1}^{(1+\hat{\gamma})T}u_{t}(\tilde{r}_L)]-o(T)$.
For each round $t\in\{\hat{\gamma}T+1,\dots,(1+\hat{\gamma})T\}$, we can prove that if round $t$ is $\gamma$-mean-based, then $u_{t}(\tilde{r}_L)\le u_{0,t}$. Specifically, at round $t\in\{\hat{\gamma}T+1,\dots,(1+\hat{\gamma})T\}$, if the learner's algorithm $\A_L^{(\tilde{r}_L)}$ plays a $\gamma$-mean-based action $\hat{j}_t\in\hat{X}_{t}$, then $\hat{j}_t\le\min(X_{t-\hat{\gamma}T})$ because we showed $\max(\hat{X}_t)\le\min(X_{t-\hat{\gamma}T})$ in Claim~\ref{claim:lower-mean-based-bid}. Moreover, since $\hat{b}_O^{t}=\1(\min(X_t)\cdot\eps<b_O^{t-\hat{\gamma}T}\le v_O)\cdot b_O^{t-\hat{\gamma}T}$ for $t\in\{\hat{\gamma}T+1,\dots,(1+\hat{\gamma})T\}$, either:
\begin{itemize} 
\item we have that $\hat{b}_O^{t}>\min(X_{t-\hat{\gamma}T})$, in which case both bid $\hat{j}_t\cdot\eps$ and bid $0$ lose the item and get utility $0$ for the learner; or 
\item we have that $\hat{b}_O^{t}=0$ in which case both bid $\hat{j}_t\cdot\eps$ and bid $0$ win the item for the learner, and clearly, bid $0$ gets no less utility than bid $\hat{j}_t\cdot\eps$. 
\end{itemize}
Hence, for each $t\in\{\hat{\gamma}T+1,\dots,(1+\hat{\gamma})T\}$, conditioned on that the event $E_t$ that round $t$ is $\gamma$-mean-based, it holds that $u_{t}(\tilde{r}_L)\le u_{0,t}$. Since the randomized algorithm $\A_L^{(\tilde{r}_L)}$ is $\gamma$-mean-based, each round $t$ is $\gamma$-mean-based with probability $\Pr[E_t]\ge 1-\gamma N$, and it follows that
\begin{align*}
    \E_{\tilde{r}_L}\left[\sum_{t=\hat{\gamma}T+1}^{(1+\hat{\gamma})T}u_{t}(\tilde{r}_L)\right]&=\sum_{t=\hat{\gamma}T+1}^{(1+\hat{\gamma})T}\E_{\tilde{r}_L}[u_{t}(\tilde{r}_L)]\\
    &=\sum_{t=\hat{\gamma}T+1}^{(1+\hat{\gamma})T}\Pr[E_t]\cdot\E_{\tilde{r}_L}[u_{t}(\tilde{r}_L)\mid E_t]+\Pr[\bar{E_t}]\cdot\E_{\tilde{r}_L}[u_{t}(\tilde{r}_L)\mid \bar{E_t}]\\
    &\le\sum_{t=\hat{\gamma}T+1}^{(1+\hat{\gamma})T}\Pr[E_t]\cdot u_{0,t}+\Pr[\bar{E_t}]\cdot v_L\\
    &\text{(Because $u_{t}(\tilde{r}_L)\le u_{0,t}$ conditioned on $E_t$, and $u_{t}(\tilde{r}_L)\le v_L$ regardless of $E_t$)}\\
    &=\sum_{t=\hat{\gamma}T+1}^{(1+\hat{\gamma})T} ((1-\gamma N)u_{0,t}+\gamma N \cdot v_L)\\
    &\le \left(\sum_{t=\hat{\gamma}T+1}^{(1+\hat{\gamma})T}u_{0,t}\right) + o(T)\\
    &\text{(Because $v_L\le \eps N$ by Assumption~\ref{assumption:discrete-bid} and $\gamma N^2=o(1)$).}
\end{align*}

Thus, we have shown that there exists $\delta=o(1)$ such that for any $j\in\{0,\dots,N-1\}$
\begin{equation}\label{eq:0-approx-best-resp}
    u_L(\hat{\alpha},0)\ge u_L(\hat{\alpha},j)-\delta.
\end{equation}
Before we proceed, we observe that $u_O(\hat{\alpha},0)\cdot (1+\hat{\gamma})T$ (where $u_O(\hat{\alpha},0)$ denotes the optimizer's expected utility when the learner bids $0$ against the optimizer's mixed strategy $\hat{\alpha}$) upper bounds the optimizer's expected utility in $(1+\hat{\gamma})T$ rounds by bidding $\hat{b}_{O}^1,\dots,\hat{b}_{O}^{(1+\hat{\gamma})T}$ against the learner's randomized algorithm $\A_L^{(\tilde{r}_L)}$. Specifically, $\hat{\alpha}$ by definition is exactly the empirical distribution of $\hat{b}_{O}^1,\dots,\hat{b}_{O}^{(1+\hat{\gamma})T}$, and because $\hat{b}_{O}^t\le v_O$ by definition of $\hat{b}_{O}^t$ for all $t\in[(1+\hat{\gamma})T]$, the optimizer's utility by bidding $\hat{b}_{O}^1,\dots,\hat{b}_{O}^{(1+\hat{\gamma})T}$ against $\A_L^{(\tilde{r}_L)}$ is no higher than its utility by bidding $\hat{b}_{O}^1,\dots,\hat{b}_{O}^{(1+\hat{\gamma})T}$ against bid $0$. Therefore, it remains to prove that $u_O(\hat{\alpha},0)\le V+o(1)$, which implies $u_O(\hat{\alpha},0)\cdot (1+\hat{\gamma})T\le V\cdot T+o(T)$ as $\hat{\gamma}=o(1)$.

\subsubsection*{Modifying optimizer's bids s.t.~bid $0$ is learner's true best response}
The proof would have been finished if bid $0$ is exactly a best response of the learner to the optimizer's mixed strategy $\hat{\alpha}$, because then $u_O(\hat{\alpha},0)\le V$ by definition of Stackelberg utility (see Definition~\ref{def:bayesian_stackelberg}). To finish the proof, we can modify $\hat{\alpha}$ slightly by adding a negligible probability mass on bid $0$ such that bid $0$ becomes the learner's true best response. Specifically, we let $\hat{\alpha}'$ be the optimizer's mixed strategy that bids $0$ with probability $\eta=\frac{2\delta}{\eps}=o(1)$ and bids according to mixed strategy $\hat{\alpha}$ with probability $1-\eta$. Notice that $u_L(\hat{\alpha}',j)=\eta u_L(0,j)+(1-\eta)u_L(\hat{\alpha},j)$ for all $j\in\{0,\dots,N-1\}$. Thus, for any $j>0$, we have that
\begin{align*}
    u_L(\hat{\alpha}',0)-u_L(\hat{\alpha}',j)&=\eta(u_L(0,0)-u_L(0,j))+(1-\eta)(u_L(\hat{\alpha},0)-u_L(\hat{\alpha},j))\\
    &\ge\eta(u_L(0,0)-u_L(0,j))-(1-\eta)\delta &&\text{(By Ineq.~\ref{eq:0-approx-best-resp})}\\
    &=\eta((v_L-0)-(v_L-j\cdot\eps))-(1-\eta)\delta\\
    &\ge\eta\eps-(1-\eta)\delta &&\text{(By $j>0$)}\\
    &\ge\eta\eps-\delta=\delta &&\text{(By $\eta=\frac{2\delta}{\eps}$)},
\end{align*}
which implies that bid $0$ is the learner's best response to the optimizer's mixed strategy $\hat{\alpha}'$. Finally, by definition of Stackelberg utility, we have that $V\ge u_O(\hat{\alpha}',0)=(1-\eta)u_O(\hat{\alpha},0)=u_O(\hat{\alpha},0)-o(1)$, which completes the proof.
\end{proof}

\section{Exploiting mean-based learners in Bayesian first-price auction}\label{section:bayesian}
In this section, we show that in contrast to standard first-price auctions, the optimizer is able to achieve much higher than the Stackelberg utility times the number of rounds in repeated Bayesian first-price auctions. We will use the following characterization of the optimizer's mixed strategy in the Stackelberg equilibrium of Bayesian first-price auction due to~\citet{xu2018commitment}.
\begin{lemma}[{\citet[Theorem 7]{xu2018commitment}}]\label{lem:first-price-stackelberg}
In any Bayesian first-price auction, suppose the optimizer's value is $v_O\in\I$, and there are $m$ values $v_1,\dots,v_m\in\I$ (in the strictly increasing order) in the support of the prior distribution $\D$ of the learner's value $v_L$. Then, there exists a Stackelberg equilibrium for this Bayesian auction, such that the CDF of the optimizer's bid corresponding to the optimizer's mixed strategy in the Stackelberg equilibrium, which we denote by $F$, has the following property:
\begin{itemize}
    \item[(i)] There exist $b_1=0$ and $b_2,\dots, b_{m+1}\in[0,v_O]\cap\I$ (in the non-decreasing order) such that $F$ satisfies
    \begin{equation}\label{eq:first-price-stackelberg-F}
    F(x)=
    \frac{(v_i-b_i)F(b_i)}{v_i-x} \textrm{ for } x\in(b_i,b_{i+1}]\cap\I,\,
    \forall i\in[m],
    \end{equation}
    \item[(ii)] and moreover, for each $i\in[m]$, when the learner's value $v_L$ is $v_i$, bid $b_i$ is the learner's best response to the optimizer's mixed strategy.
\end{itemize}
\end{lemma}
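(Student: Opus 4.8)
This is essentially \citet[Theorem~7]{xu2018commitment}, and the plan is to recover it by analyzing the optimizer's commitment problem directly. First I would phrase the Bayesian Stackelberg problem as choosing the CDF $F$ of the optimizer's (mixed) bid, supported on $\I$: given $F$, a learner of type $v_i$ best responds with some $\beta_i\in\argmax_{b\in\I}(v_i-b)F(b)$ (ties broken in the optimizer's favor), and the optimizer maximizes $\sum_{i\in[m]}p_i\,\E_{x\sim F}[(v_O-x)\,\1(x>\beta_i)]$. A routine first reduction shows $F$ may be taken supported on $[0,v_O]\cap\I$, since a bid above $v_O$ only ever earns the optimizer nonpositive utility. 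Writing $b_i:=\beta_i$ and $h(x):=\sum_{i:\,b_i<x}p_i$, the objective is $\int(v_O-x)\,h(x)\,dF(x)$, so the optimizer wants its probability mass at bids that are as cheap as possible while still exceeding as many thresholds $b_i$ as possible.

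Next I would establish the two structural facts. (1) \emph{The best responses can be taken monotone}, $b_1\le\cdots\le b_m$. This is a single-crossing statement for $(v_i-b)F(b)$: taking logs, $\log(v_i-b)+\log F(b)$ has increasing differences in $(v_i,b)$ on $\{b<v_i,\ F(b)>0\}$ (since $\tfrac{\partial}{\partial v_i}\bigl(\tfrac{-1}{v_i-b}\bigr)>0$), so by monotone comparative statics the maximizer sets are nondecreasing in $v_i$; equivalently, a two-line cross-multiplication of the two optimality inequalities for $v_i<v_{i'}$ yields $(v_i-v_{i'})(\beta_i-\beta_{i'})\ge 0$. Here one uses that an optimal $F$ has $F(0)>0$, which also gives $b_1=0$: with an atom at $0$, bidding $0$ is the best response of the lowest type, and the optimizer genuinely wants this (it wants to beat the lowest type at the cheapest possible price). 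Let $b_{m+1}$ be the top of the support of $F$ (so $b_{m+1}\in[0,v_O]\cap\I$). (2) \emph{On each interval $(b_i,b_{i+1}]\cap\I$ the type-$v_i$ indifference constraint binds.} The inequality $F(x)\le\frac{(v_i-b_i)F(b_i)}{v_i-x}$ is immediate from the fact that $b_i$ is a best response of type $v_i$ and $x$ is an alternative bid, i.e.\ $(v_i-x)F(x)\le(v_i-b_i)F(b_i)$ (and in particular $x<v_i$ there, using $F(x)\ge F(0)>0$). For the reverse inequality I would invoke optimality of $F$: if at some $x_0\in(b_i,b_{i+1})$ the value $F(x_0)$ were strictly below this bound, there is slack in type $v_i$'s indifference at $x_0$, so a small amount of probability can be moved from a higher bid in the support down to $x_0$; this replaces a higher price by a lower one while still beating exactly the types below $x_0$, strictly raising the objective, and with a small enough shift no type's best response changes --- a contradiction. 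Hence equality holds throughout $(b_i,b_{i+1}]\cap\I$, which is~(i); and~(ii) is the definition $b_i=\beta_i$.

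The remaining work is bookkeeping: verifying that the piecewise formula defines a legitimate CDF --- nondecreasing on each piece because $v_i-x$ is positive and decreasing there, the pieces matching continuously at each $b_{i+1}$, an atom at $0$ and (if needed) a final jump that makes $F$ reach $1$ at $b_{m+1}$ --- together with fixing the ``who wins at $x=b_i$'' convention via Assumption~\ref{assumption:tie-breaking}. I expect the genuinely delicate step to be making the mass-shifting exchange argument rigorous: raising $F$ at one bid raises it at every larger bid, so a naive local perturbation can violate a higher type's incentive constraint; the clean route is to fix the best-response thresholds $b_i$, write the optimizer's problem as a (semi-infinite) linear program in the increments $dF$, and read off from complementary slackness that the type-$v_i$ constraint is tight on all of $(b_i,b_{i+1}]$, then optimize over threshold configurations. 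Since the statement is already known, I would in the end simply cite \citet[Theorem~7]{xu2018commitment} for these details and use the characterization as a black box.
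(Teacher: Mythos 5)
The paper does not prove this lemma; it imports it verbatim as Theorem~7 of \citet{xu2018commitment}, which is exactly where your proposal ends up as well, so the two ``proofs'' coincide as black-box citations. Your reconstruction sketch is a plausible outline of the underlying argument, with one small imprecision worth flagging: the cross-multiplication of the two optimality inequalities for types $v_i < v_{i'}$ gives $(v_{i'}-v_i)\bigl(F(\beta_{i'})-F(\beta_i)\bigr)\ge 0$, i.e.\ monotonicity of the win probabilities $F(\beta_i)$ rather than of the bids $\beta_i$ directly; one then recovers $\beta_i\le\beta_{i'}$ by observing that $F(\beta_{i'})=F(\beta_i)$ together with $\beta_{i'}<\beta_i$ would make $\beta_{i'}$ strictly better for type $v_i$ once $F(\beta_i)>0$ (which follows from the atom at $0$). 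You also correctly identify that the local mass-shifting argument is delicate and better handled via the LP-duality route, but since the result is cited in the paper, none of these details need to be filled in here.
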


\begin{theorem}\label{thm:exploit-bayesian-mean-based}
There exists a Bayesian first-price auction repeated for $T$ rounds such that the optimizer's optimal expected utility in $T$ rounds against any $\gamma$-mean-based learner with $\gamma N=o(1)$ is at least $\Omega(\log(N\cdot\eps))\cdot V\cdot T$, where $V$ is the Stackelberg utility of this instance (and $N,\eps$ are defined in Assumption~\ref{assumption:discrete-bid}).
\end{theorem}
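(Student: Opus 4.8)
The plan is to exhibit one concrete Bayesian first-price auction together with an explicit oblivious, time-varying bidding strategy for the optimizer whose total utility is $\Omega(\log(N\eps))\cdot V\cdot T$, where $V$ will be $O(\eps)$.

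\emph{The instance and its Stackelberg utility.} I would take the prior $\D$ supported on $m=\Theta(\log(N\eps))$ values $v_1<\dots<v_m$ in $\I$ arranged (roughly) geometrically, together with a suitable optimizer value $v_O\in\I$ and probabilities $p_1,\dots,p_m$, tuned so that each value is ``worth'' only $\Theta(\eps)$ to a statically committed optimizer, while the whole family can be milked \emph{cumulatively} for $\Theta(\eps m)=\Theta(\eps\log(N\eps))$ per round by a dynamic strategy. To bound $V$ from above I would invoke Lemma~\ref{lem:first-price-stackelberg}: the Stackelberg utility equals $\sum_{i}p_i\int_{(b_i,v_O]}(v_O-x)\,dF(x)$ for the CDF $F$ of Eq.~\eqref{eq:first-price-stackelberg-F}, and since on each piece $(b_i,b_{i+1}]$ the quantity $(v_i-x)F(x)$ is the constant $(v_i-b_i)F(b_i)$, one bounds the contribution of each piece to the optimizer's winning surplus and shows that for the chosen parameters this sums to $V=O(\eps)$. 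Crucially, the parameters must be chosen so that \emph{no} static mixed strategy (in particular, none of the form ``mass at $0$ plus mass at some single level $c$'') already extracts more than $O(\eps)$.

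\emph{The exploiting strategy.} The structural fact I would lean on is that a $\gamma$-mean-based learner with $\gamma N=o(1)$ in a first-price auction essentially never overbids: once the optimizer has spent $\omega(\gamma T)$ of the first rounds bidding $0$, then for every context $v_i$ and every bid $j\eps\ge v_i$ the counterfactual cumulative utility satisfies $\sigma_{j,t}(v_i)=(v_i-j\eps)\cdot|\{s\le t:i_s\le j\}|\le 0< v_i\cdot|\{s\le t:i_s=0\}|-\gamma T=\sigma_{0,t}(v_i)-\gamma T$, so $j$ is not $\gamma$-mean-based at round $t$ (Definition~\ref{def:mean-based}); hence, outside an $O(\gamma N)=o(1)$ fraction of rounds, the learner bids strictly below its own value whatever the context and however it breaks ties. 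More generally, the learner's per-context bid at round $t$ is $\arg\max_j (v_i-j\eps)\,|\{s\le t:i_s\le j\}|$, i.e.\ it is determined entirely by the (monotonically accumulating, never reset) empirical distribution of the optimizer's past bids. The optimizer can therefore steer these $m$ coupled trajectories and ``price-discriminate'' across contexts with a single bid sequence: after a short ($o(T)$) training phase of bidding $0$, it ramps its bid up through the value levels in $\Theta(\log(N\eps))$ phases, each phase staying one step ahead of the learner's lagging best responses — so that during phase $k$ the optimizer's bids win against all contexts that the earlier phases have already driven to bid very low (collecting the large surplus $v_O-v_i$ on the small-index ones), while the mass it places at the current level simultaneously drives the next context down for the following phase. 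Summing these per-phase collections over the schedule gives a total of $\Omega(\eps\log(N\eps))\cdot T$; dividing by $V=O(\eps)$ yields the claimed ratio. To pass from the idealized mean-based responses to an arbitrary (possibly randomized) $\gamma$-mean-based algorithm I would union-bound over rounds exactly as in the proof of Theorem~\ref{thm:robust-standard-mean-based}, losing only an $O(\gamma N)\cdot v_O\cdot T=o(T)$ additive term.

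\emph{The main obstacle.} The crux is the construction and the accompanying dynamic analysis: one must choose the geometric spacing of the $v_i$, the probabilities $p_i$, and $v_O$ so that (a) every statically committed optimizer — hence the Stackelberg benchmark — extracts only $O(\eps)$ per round (this forces $v_O$ to be ``moderate'' rather than huge, and requires ruling out all static strategies, not just the obvious ones, via Lemma~\ref{lem:first-price-stackelberg}), yet (b) a single bid sequence can keep, in every phase, enough low-index contexts driven down far enough to be won at the large surplus $v_O-v_i$, without the earlier-suppressed contexts drifting back up — their per-context cumulatives only accumulate, which both helps (suppression persists) and hurts (the optimizer's own winning bids feed those same cumulatives). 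Since a mere \emph{sequence of static Stackelberg plays} already yields only $\Omega(\eps)\cdot T$, it is precisely the genuine dynamic price-discrimination across phases — unavailable to a committed player — that must be quantified, and making this $\Omega(\eps\log(N\eps))\cdot T$ while simultaneously pinning $V=O(\eps)$ is where the real work lies; the reduction to an oblivious optimizer and the $\gamma N=o(1)$ bookkeeping are routine adaptations of Section~\ref{section:non-bayesian}.
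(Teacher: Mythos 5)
Your proposal captures the correct high-level strategy---a geometric spread of $m=\Theta(\log(N\eps))$ contexts, an upper bound on $V$ via Lemma~\ref{lem:first-price-stackelberg}, an oblivious, phased, time-varying bid sequence whose analysis exploits the learner's lag, and a routine reduction to oblivious/deterministic optimizer with $o(T)$ slack from $\gamma N = o(1)$. But several of the specifics that you leave open or state informally are precisely where the proof lives, and some of what you do assert is pointed the wrong way.

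\emph{The instance is materially different.} You envisage $v_O$ large and a surplus of $v_O - v_i$ on the ``small-index'' contexts. The paper's construction instead has $v_O = 1 < v_1 = 2 < \dots < v_m = 2^m$: the optimizer's value is \emph{below} every possible learner value. The optimizer's per-win surplus is $v_O - x$ where $x\le v_O$ is its own bid, never $v_O - v_i$ (which would be negative here). This is not cosmetic: if $v_O$ were large relative to the $v_i$'s, the static benchmark would already be large (the optimizer could commit to a small positive bid and win outright once the learner is suppressed to $0$), and the $\Omega(m)$ multiplicative gap would not materialize. Pinning $V \le O(1/2^m)$ while the dynamic strategy extracts $\Omega(m/2^m)$ per round relies on $v_O$ being small.

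\emph{Your key lemma is too weak, and one of its corollaries has the wrong sign.} ``A mean-based learner never overbids'' says nothing about the optimizer's \emph{gain}: for the optimizer to win and collect $v_O - x > 0$, the learner must be driven all the way down to bidding $0$, not merely below its value. What the paper actually proves (Claim~\ref{claim:separation-phase-behavior}) is sharper: during phase $i$, conditioned on context $v_L \le v_{m-i+1}$, bid $0$ is the \emph{only} $\gamma$-mean-based bid, because each phase front-loads enough zero bids to hold those contexts down. This is where the carefully chosen CDF increments $F_i - F_{i-1}$ with $F_i(x) = \tfrac{v_{m-i+1}}{2(v_{m-i+1}-x)}$ and the ``within-phase bids in increasing order'' detail enter. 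Your narrative that ``the mass it places at the current level simultaneously drives the next context down'' also goes the wrong direction: adding optimizer mass at a positive bid level $c$ increases $F(b)$ for $b\ge c$, which makes \emph{higher} learner bids more attractive. In the paper, suppression is achieved by the zero bids at the start of each phase, and it \emph{weakens} over time (phase $1$ suppresses all contexts, phase $m$ suppresses only $v_1$), opposite to your ``earlier phases have already driven small-index ones low.''

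\emph{The gap you flag is the proof.} You explicitly defer (a) tuning the instance so that \emph{all} static mixed strategies give only $O(V)$, and (b) quantifying the dynamic price-discrimination, to ``the main obstacle.'' Those are the content of the Stackelberg upper bound via Eq.~\eqref{eq:separation-stackelberg-utility}--\eqref{eq:separation-Pj-upper-bound} and of the phased CDF analysis culminating in Ineq.~\eqref{eq:separation-P_i-lower-bound} and Claim~\ref{claim:separation-phase-behavior}. As written, the proposal identifies the right ingredients to import from Section~\ref{section:non-bayesian} (oblivious reduction, union-bound bookkeeping) but does not supply a working construction or the per-phase accounting that makes each phase contribute a fixed $\Theta(V)$ fraction so that the $m$ phases sum to $\Omega(m)\cdot V\cdot T$.
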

We first give the construction of the instance of Bayesian first-price auction for Theorem~\ref{thm:exploit-bayesian-mean-based} and sketch the high-level idea, and then we prove Theorem~\ref{thm:exploit-bayesian-mean-based}.
\begin{proof}[Construction of the instance]\renewcommand{\qedsymbol}{}
The optimizer's value is $v_{O}=1$. The prior distribution $\D$ of the learner's value $v_L$ has support $\{v_i:=2^i\mid i\in[m]\}$ for $m=\floor{\log(N\cdot\eps)}$, and the probabilities are as follows: $\P[v_L=2^i]=\frac{1}{2^{m-i}}$ for $i\in[m-1]$, and $\P[v_L=2^m]=1-\sum_{i\in[m-1]}\frac{1}{2^{m-i}}=\frac{1}{2^{m-1}}$. (We assume w.l.o.g.~that $N,\eps$ in Assumption~\ref{assumption:discrete-bid} are chosen such that $2^{i}\in\I$ for all $i\in\{0,\dots,m\}$.)
\end{proof}
\begin{proof}[High-level idea]\renewcommand{\qedsymbol}{}
In the instance we constructed above, the learner's possible values ordered from low to high are geometrically increasing. The optimizer's value is lower than the lowest possible value of the learner, and hence, the optimal mixed strategy of the optimizer should incentivize the learner to bid low such that the optimizer itself can get some utility. However, when the learner has high value, it is willing to bid higher (compared to when it has low value) in order to maximize its utility, and hence, it is harder for the optimizer to incentivize. Therefore, from the optimizer's perspective, its optimal mixed strategy against a learner with high value should be significantly less ambitious (i.e., bid low with higher probability) compared to the optimal mixed strategy against a learner with low value, but such mixed strategy is suboptimal against a learner with low value. Thus, if the optimizer only has one shot against a learner whose value is sampled from $\D$ but unknown to the optimizer, it must commit to a single mixed strategy, and as a result, the optimizer cannot simultaneously get the best utility it can get for all the possible learner's values -- in particular, we can show that in the Stackelberg equilibrium, the optimizer essentially can only get the best utility it can get for one specific learner's value.

On the other hand, when the auction is repeated for $T$ rounds, the optimizer can change its strategy over time. Specifically, the optimizer can divide the time horizon into roughly $m$ phases with decreasing lengths but increasingly ambitious strategies: In the first phase, the cumulative distribution of the optimizer's bids will resemble the optimal mixed strategy against a learner with highest possible value ($2^m$ in our instance), and similarly, in the $i$-th phase for $i\ge 2$, the cumulative distribution of the optimizer's bids (in the first $i$ phases) will resemble the optimal mixed strategy against a learner with the $i$-th highest possible value. We can show that with this dynamic strategy, in the first $i$ phases for $i\in[m]$, the optimizer can get a constant fraction of the best utility it can get (across all rounds) for the $i$-th highest possible learner's value.
Therefore, using this dynamic strategy, overall the optimizer is able to get a constant fraction of the best utility it can get for all the possible learner's values, which implies the gap.
\end{proof}
\begin{proof}[Proof of Theorem~\ref{thm:exploit-bayesian-mean-based}]
We begin the proof by establishing an upper bound of the Stackelberg utility $V\le\frac{1}{2^{m-3}}$ for the instance we constructed, and then we show that when the instance is repeated for $T$ rounds, the optimizer can achieve expected total utility $\frac{\Omega(m)\cdot T}{2^m}$ against any $\gamma$-mean-based learner with $\gamma N=o(1)$.

\subsubsection*{Upper bounding the Stackelberg utility}
We consider a Stackelberg equilibrium of our instance that satisfies the characterization given in Lemma~\ref{lem:first-price-stackelberg}. We let $F$ denote the CDF of the optimizer's bid as in Lemma~\ref{lem:first-price-stackelberg}, and we let $j_i$ be such that $j_i\cdot\eps=b_i$ for each $i\in[m]$ (where $b_i$'s are given in Lemma~\ref{lem:first-price-stackelberg}). Moreover, we let $f(j\cdot \eps):=F(j\cdot\eps)-F((j-1)\cdot\eps)$ for all $j\in[N]$ and $f(0):=F(0)$. First, we derive the optimizer's expected utility (i.e., the Stackelberg utility $V$) in the following equation,
\begin{equation}\label{eq:separation-stackelberg-utility}
V=\sum_{i=1}^m \sum_{j=j_i+1}^{j_{i+1}} f(j\cdot \eps)\cdot\P[v_L\le 2^i]\cdot (1-j\cdot\eps).
\end{equation}
To see this, note that $b_{k}=j_{k}\cdot\eps$ is the best response of the learner if its value is $v_{k}=2^{k}$ for each $k\in[m]$ (by Lemma~\ref{lem:first-price-stackelberg}), and hence, when the optimizer bids $j\cdot \eps$ for $j\in\{j_i+1,\dots,j_{i+1}\}$, it wins the item (and gets utility $v_O-j\cdot\eps=1-j\cdot\eps$) if and only if the learner's value is $2^{k}$ for some $k\in[i]$. 

We upper bound each term in the summands in Eq.~\eqref{eq:separation-stackelberg-utility}. For $\P[v_L\le 2^i]$, we derive a simple upper bound:
\begin{equation}\label{eq:separation-prior-upper-bound}
\forall i\in[m],\,\P[v_L\le 2^i]\le\sum_{j=1}^i\frac{1}{2^{m-j}}=\frac{1}{2^{m-i-1}}-\frac{1}{2^{m-1}}\le\frac{1}{2^{m-i-1}}.
\end{equation}
For $f(j\cdot\eps)$ with $j\in\{j_i+1,\dots,j_{i+1}\}$ for $i\in[m]$, we have that
\begin{align}
    f(j\cdot\eps)&=\frac{(v_i-b_i)F(b_i)}{v_i-j\cdot\eps}-\frac{(v_i-b_i)F(b_i)}{v_i-(j-1)\cdot\eps} &&\text{(By Eq.~\eqref{eq:first-price-stackelberg-F})}\nonumber\\
    &=\frac{(v_i-b_i)F(b_i)\cdot\eps}{(v_i-j\cdot\eps)(v_i-(j-1)\cdot\eps)}\nonumber\\
    &\le\frac{v_i\cdot\eps}{(v_i-j\cdot\eps)(v_i-(j-1)\cdot\eps)}&&\text{(Since $v_i-b_i\le v_i$ and $F(b_i)\le1$)}\nonumber\\
    &\le\frac{v_i\cdot\eps}{(v_i-v_O)^2} &&\text{(Since $j\cdot\eps\le j_{i+1}\cdot\eps=b_{i+1}$ and by Lemma~\ref{lem:first-price-stackelberg} $b_{i+1}\le v_O$)}\nonumber\\
    &=\frac{2^{i}\cdot\eps}{(2^{i}-1)^2}\le\frac{\eps}{2^{i-2}} &&\text{(Since $2^i-1\ge2^{i-1}$ for $i\ge1$)}.\label{eq:separation-Pj-upper-bound}
\end{align}
Now we upper bound $V$ as follows,
\begin{align*}
    V &= \sum_{i=1}^m \sum_{j=j_i+1}^{j_{i+1}} f(j\cdot\eps)\cdot\P[v_L\le 2^i]\cdot (1-j\cdot\eps) &&\text{(By Eq.~\eqref{eq:separation-stackelberg-utility})}\\
    &\le \sum_{i=1}^m \sum_{j=j_i+1}^{j_{i+1}} \frac{\eps}{2^{i-2}}\cdot\frac{1}{2^{m-i-1}}\cdot 1 &&\text{(By Ineq.~\eqref{eq:separation-prior-upper-bound} and~\eqref{eq:separation-Pj-upper-bound})}\\
    &=\sum_{i=1}^m \sum_{j=j_i+1}^{j_{i+1}} \frac{\eps}{2^{m-3}}=\frac{(j_{m+1}-j_1)\cdot\eps}{2^{m-3}}=\frac{b_{m+1}-b_1}{2^{m-3}}\le\frac{1}{2^{m-3}},
\end{align*}
where the final inequaility is because $b_1=0$ and $b_{m+1}\le v_O=1$ by Lemma~\ref{lem:first-price-stackelberg}.

\subsubsection*{Designing an oblivious strategy for optimizer in repeated auctions}
Given any $\gamma$ such that $\gamma N =o(1)$, we construct an oblivious sequence of bids in $T$ rounds for the optimizer, and later we will show that the optimizer's expected utility by bidding this sequence of bids against any $\gamma$-mean-based learner in $T$ rounds is at least  $\frac{\Omega(m)\cdot T}{2^m}$. Let $F_{m}(x):=\frac{v_1}{2(v_1-x)}$, and then let $F_{i}(x):=\frac{v_{m-i+1}}{2(v_{m-i+1}-x)}$ for all $i\in[m-1]$, and let $F_0(x):=0$. The sequence we construct has $m+1$ phases:
\begin{itemize}
    \item[$(0)$] In the zeroth phase (the first $\gamma' T$ rounds for $\gamma':=\frac{2\gamma}{\eps}=o(1)$), the optimizer always bids $0$.
    \item[$(i)$] For any $i\in[m]$, in the $i$-th phase, the optimizer's bids are in the increasing order, and for all $x\in[0,1]\cap\I$, there are exactly $(1-\gamma')T\cdot (F_i(x)-F_{i-1}(x))$ bids less than or equal to $x$ in the $i$-th phase.
\end{itemize}
Note that the total number of bids from phase $(1)$ to $(m)$ is $(1-\gamma')T\cdot F_1(1)+\sum_{i=2}^m(1-\gamma')T\cdot (F_i(1)-F_{i-1}(1))=(1-\gamma')T\cdot F_m(1)=(1-\gamma')T$, and hence, the total number of bids in $m+1$ phases is indeed $T$. Moreover, the first phase is well-defined because $F_1(x)-F_0(x)$ is obviously a non-decreasing function over $[0,1]$, and for $i\in\{2,\dots,m\}$, the $i$-th phase is well-defined because we can show that $F_i(x)-F_{i-1}(x)=\frac{v_{m-i+1}}{2(v_{m-i+1}-x)}-\frac{v_{m-i+2}}{2(v_{m-i+2}-x)}$ is non-decreasing over $[0,1]$ as follows
\begin{align*}
    \frac{dF_i}{dx}-\frac{dF_{i-1}}{dx}&=\frac{v_{m-i+1}}{2(v_{m-i+1}-x)^2}-\frac{v_{m-i+2}}{2(v_{m-i+2}-x)^2}\\
    &\ge \frac{v_{m-i+1}}{2v_{m-i+1}^2}-\frac{v_{m-i+2}}{2(v_{m-i+2}-1)^2} &&\text{(Since $0\le x\le1$)}\\
    &= \frac{1}{2^{m-i+2}}-\frac{2^{m-i+1}}{(2^{m-i+2}-1)^2}\ge 0,
\end{align*}
where the last inequality is because for any $\ell\ge1$, it holds that $(2^{\ell+1}-1)^2=2^{2\ell+2}+1-2^{\ell+2}\ge2^{2\ell+2}-2^{\ell+2}\ge2^{2\ell+2}-2^{2\ell+1}=2^{2\ell+1}$, and in particular $(2^{m-i+2}-1)^2\ge2^{2(m-i+1)+1}$ as $m-i+1\ge1$.
Furthermore, it is also worth noting that the optimizer's bids are at most $1=v_O$ in the oblivious sequence we constructed, and hence, the optimizer's utility in each round is always non-negative.

\subsubsection*{Lower bounding optimizer's expected utility in repeated auctions}
Next, we lower bound the optimizer's expected utility when it uses the oblivious sequence of bids we constructed above against any $\gamma$-mean-based learner. The key property we will use is Claim~\ref{claim:separation-phase-behavior}.
\begin{claim}\label{claim:separation-phase-behavior}
For any $i\in[m]$, 
at any round of the $i$-th phase, bid $0$ is the only $\gamma$-mean-based bid for the learner if $v_L\le v_{m-i+1}$.
\end{claim}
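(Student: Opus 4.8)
The plan is to bound the learner's cumulative utilities directly. Fix a round $t$ in the $i$-th phase and a learner value $v_L\le v_{m-i+1}$; the goal is to show $\sigma_{0,t}-\sigma_{j,t}\ge 2\gamma T$ for every bid $j\eps\neq 0$, which certifies both that bid $0$ is $\gamma$-mean-based and that no other bid is. First reduce to the case $0<j\eps\le 1$: every optimizer bid in the oblivious sequence is at most $v_O=1$, so a learner bid $j\eps>1$ wins every round and pays $j\eps$, hence has strictly smaller cumulative utility than the learner bid $1=v_O$; it therefore suffices to prove the gap for bids with $0<j\eps\le 1$, and the case $j\eps>1$ follows by comparison with $j\eps=1$. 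For such $j$ we have $j\eps\le 1<2=v_1\le v_L$, so bidding $j\eps$ yields per-round utility $v_L-j\eps>0$ exactly on the rounds where the optimizer bids at most $j\eps$ and $0$ elsewhere, while bid $0$ wins exactly on the rounds where the optimizer bids $0$. Letting $n_t(x)$ be the number of the first $t$ rounds in which the optimizer bids at most $x$, and writing $a:=n_t(0)$ and $b:=n_t(j\eps)-n_t(0)\ge 0$ (a non-negative integer), a one-line computation gives $\sigma_{0,t}-\sigma_{j,t}=v_L a-(v_L-j\eps)(a+b)=j\eps\cdot a-(v_L-j\eps)\cdot b$.

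Next I extract two structural facts from the construction. First, the zeroth phase plays $\gamma' T$ rounds of bid $0$, so $a\ge\gamma' T$ always; moreover $F_i(0)=\tfrac12$ for every $i\in[m]$ while $F_0(0)=0$, so the $\tfrac{1-\gamma'}{2}T$ zero-bids of phases $1,\dots,i$ are all deposited during phase $1$, and since bids within each phase are played in increasing order we get $a\ge\gamma' T+\tfrac{1-\gamma'}{2}T$ as soon as the optimizer has played at least one positive bid among the first $t$ rounds — in particular whenever $b\ge 1$, since a round counted by $b$ carries a positive bid. Second, the positive optimizer bids of value at most $j\eps$ played by round $t$ are a subset of all such bids in phases $1,\dots,i$, so $b\le (1-\gamma')T\big(F_i(j\eps)-F_i(0)\big)=(1-\gamma')T\cdot\frac{j\eps}{2(v_{m-i+1}-j\eps)}$, using $F_i(x)=\frac{v_{m-i+1}}{2(v_{m-i+1}-x)}$ and $j\eps\le 1<v_{m-i+1}$.

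Now combine. If $b=0$ then $\sigma_{0,t}-\sigma_{j,t}=j\eps\cdot a\ge\eps\cdot\gamma' T=2\gamma T$. If $b\ge 1$, bound the subtracted term using $v_L-j\eps\le v_{m-i+1}-j\eps$ (this is the crucial estimate, and it makes the denominator of the bound on $b$ cancel): $(v_L-j\eps)\cdot b\le (v_{m-i+1}-j\eps)\cdot(1-\gamma')T\cdot\frac{j\eps}{2(v_{m-i+1}-j\eps)}=\tfrac{j\eps}{2}(1-\gamma')T$, and therefore $\sigma_{0,t}-\sigma_{j,t}\ge j\eps\big(\gamma' T+\tfrac{1-\gamma'}{2}T\big)-\tfrac{j\eps}{2}(1-\gamma')T=j\eps\,\gamma' T\ge 2\gamma T$. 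In either case $\sigma_{0,t}-\sigma_{j,t}\ge 2\gamma T>\gamma T$, so for every $j\eps\neq 0$ bid $j$ is not $\gamma$-mean-based at round $t$, while bid $0$ is; this proves the claim.

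The only place that requires care is this last cancellation: it is tempting but fatal to bound $(v_L-j\eps)b$ by $v_L b$, since then the estimate degrades to a negative quantity; one must carry the factor $v_{m-i+1}-j\eps$ through, which is precisely why each $F_i$ was defined as $\frac{v_{m-i+1}}{2(v_{m-i+1}-x)}$ — phase $i$'s cumulative bid distribution renders a learner of value $v_{m-i+1}$ exactly indifferent among all bids up to the $\gamma'$-perturbation, and strictly pushes every $v_L\le v_{m-i+1}$ toward $0$. The remaining bookkeeping — that playing only the $s$ smallest bids of a partially completed phase $i$ cannot inflate $b$ past its full-phase value, nor (once a positive bid has appeared) drop $a$ below $\gamma' T+\tfrac{1-\gamma'}{2}T$ — is immediate from the increasing-order property together with $F_i(0)=\tfrac12$, and I would only sketch it.
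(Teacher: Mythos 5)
Your argument is correct and mirrors the paper's proof: both count the zero bids and the positive bids at most $j\eps$ the learner has seen by round $t$ (via the telescoping of the $F_{i'}$'s), express $\sigma_{0,t}-\sigma_{j,t}$ in terms of those counts, and invoke $v_L\le v_{m-i+1}$ so that the $F_i$ denominator cancels exactly, leaving a residual $\ge j\eps\,\gamma'T\ge 2\gamma T>\gamma T$. You are somewhat more explicit than the paper on two points it treats silently --- dispatching learner bids above $v_O=1$ by domination by bid $1$, and covering (through the $b=0$ case) the rounds of phase $i$ that precede its zero bids, which the paper's proof restricts away without comment --- but the core calculation and the crucial use of $v_L\le v_{m-i+1}$ are identical.
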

\begin{proof}
First, notice that for all $x\in[0,1]\cap\I$, the total number of optimizer's bids that are less than or equal to $x$ in the first $i+1$ phases is $\sum_{i'=1}^i(1-\gamma')T\cdot (F_{i'}(x)-F_{i'-1}(x))=(1-\gamma')T\cdot F_{i}(x)$.

Because the optimizer's bids in the $i$-th phase are in the increasing order, by round $(1-\gamma')T\cdot (F_i(0)-F_{i-1}(0))$ of the $i$-th phase (to be clear, round $k$ of the $i$-th phase refers to the $k$-th round in the $i$-th phase), the optimizer has submitted all the zero bids of the $i$-th phase. Thus, at any round $t_{i}\in\{(1-\gamma')T\cdot (F_i(0)-F_{i-1}(0))+1,\dots,(1-\gamma')T\cdot (F_i(1)-F_{i-1}(1))\}$ of the $i$-th phase, the learner has seen exactly $\gamma'T+(1-\gamma')T\cdot F_{i}(0)$ zero bids in the past (and for any $x\in(0,1]\cap\I$, the learner has seen at most $\gamma'T+(1-\gamma')T\cdot F_{i}(x)$ bids that are less than or equal to $x$, because there are only so many bids less than or equal to $x$ in the first $i+1$ phases).

Let $u_{t_{i}}(x|v_{L})$ denote the cumulative utility of the learner with value $v_L$ by bidding $x$ from the zeroth phase until the $t_i$-th round of the $i$-th phase.
To prove the claim, it suffices to show that if $v_L\le v_{m-i+1}$, then $u_{t_{i}}(0|v_L)>u_{t_{i}}(x|v_L)+\gamma T$ for all $x\in(0,1]\cap\I$.
To this end, notice that $u_{t_{i}}(0|v_L)=(\gamma'T+(1-\gamma')T\cdot F_{i}(0))\cdot v_L$, and for $x\in(0,1]\cap\I$, $u_{t_{i}}(x|v_L)\le(\gamma'T+(1-\gamma')T\cdot F_{i}(x))\cdot(v_L-x)$. Hence, for $v_L\le v_{m-i+1}$ and $x\in(0,1]\cap\I$, we have that
\begin{align*}
    &u_{t_{i}}(0|v_L)-u_{t_{i}}(x|v_L)\\
    \ge&\gamma'T\cdot x+(1-\gamma')T\cdot(F_i(0)\cdot v_L-F_i(x)\cdot (v_L-x))\\
    =&\gamma'T\cdot x+(1-\gamma')T\cdot\left(\frac{v_{m-i+1} \cdot v_L}{2v_{m-i+1}}-\frac{v_{m-i+1}(v_L-x)}{2(v_{m-i+1}-x)}\right) &&\text{(By definition of $F_i$)}\\
    =&\gamma'T\cdot x+(1-\gamma')T\cdot \frac{v_{m-i+1} (v_{m-i+1}-v_L)x}{2v_{m-i+1}(v_{m-i+1}-x)}\\
    \ge&\gamma'T\cdot x &&\text{(By $v_{m-i+1}-v_L\ge0$)}\\
    \ge&\gamma'T\cdot \eps &&\text{(By $x>0$ and $x\in\I$)}\\
    >&\gamma T &&\text{(By $\gamma'=\frac{2\gamma}{\eps}$)}.
\end{align*}
\end{proof}
We are ready to lower bound the optimizer's expected utility. Let $P_i(x)$ denote the fraction of bids that are equal to $x$ in the $i$-th phase for $i\in[m]$. For $x\in(\frac{1}{2},1]\cap\I$, we derive that
\begin{align}\label{eq:separation-P_i-lower-bound}
    P_i(x)&=(F_i(x)-F_{i-1}(x))-(F_i(x-\eps)-F_{i-1}(x-\eps))\nonumber\\
    &=\left(\frac{v_{m-i+1}}{2(v_{m-i+1}-x)}-\frac{v_{m-i+2}}{2(v_{m-i+2}-x)}\right)-\left(\frac{v_{m-i+1}}{2(v_{m-i+1}-x+\eps)}-\frac{v_{m-i+2}}{2(v_{m-i+2}-x+\eps)}\right)\nonumber\\
    &=\frac{v_{m-i+1}\cdot\eps}{2(v_{m-i+1}-x)(v_{m-i+1}-x+\eps)}-\frac{v_{m-i+2}\cdot\eps}{2(v_{m-i+2}-x)(v_{m-i+2}-x+\eps)}\nonumber\\
    &=\frac{2^{m-i}\cdot\eps}{(2^{m-i+1}-x)(2^{m-i+1}-x+\eps)}-\frac{2^{m-i+1}\cdot\eps}{(2^{m-i+2}-x)(2^{m-i+2}-x+\eps)}\nonumber\\
    &\ge\frac{2^{m-i}\cdot\eps}{(2^{m-i+1}-\frac{1}{2})^2}-\frac{2^{m-i+1}\cdot\eps}{(2^{m-i+2}-1)^2} \qquad\qquad\qquad\qquad\qquad\quad\text{(Since $\frac{1}{2}+\eps\le x\le 1$)}\nonumber\\
    &=\frac{(2^{m-i+2}-2^{m-i+1})\cdot\eps}{(2^{m-i+2}-1)^2}=\frac{2^{m-i+1}\cdot\eps}{(2^{m-i+2}-1)^2}\ge\frac{2^{m-i+1}\cdot\eps}{(2^{m-i+2})^2}=\frac{\eps}{2^{m-i+3}}.
\end{align}
Moreover, we can show that for $i\in[m]$, in the $i$-th phase, the optimizer's expected utility is lower bounded by $\sum_{x\in(0,1]\cap\I}\Pr[v_L\le v_{m-i+1}]\cdot(1-o(1))\cdot P_i(x)\cdot T\cdot (1-x)$. Specifically, in the $i$-th phase, because the optimizer's bids are in the increasing order, and there are $(1-\gamma')T\cdot(F_i(0)-F_{i-1}(0))$ zero bids in the $i$-th phase, the optimizer bids $x\in(0,1]\cap\I$ only after round $(1-\gamma')T\cdot(F_i(0)-F_{i-1}(0))$ of the $i$-th phase.
Therefore, by Claim~\ref{claim:separation-phase-behavior}, whenever the optimizer bids $x\in(0,1]\cap\I$ in the $i$-th phase, conditioned on the learner's value $v_L\le v_{m-i+1}$, the only $\gamma$-mean-based bid for the learner is $0$, and hence, the learner bids $0$ with probability at least $1-\gamma N=1-o(1)$ (since the learner is $\gamma$-mean-based with $\gamma N=o(1)$), and the optimizer wins the item and gets utility $v_O-x=1-x$.

Summing over all $i\in[m]$, we get the following lower bound of the optimizer's expected utility,
\begin{align*}
    &\sum_{i\in[m]}\sum_{x\in(0,1]\cap\I}\Pr[v_L\le v_{m-i+1}]\cdot(1-o(1))\cdot P_i(x)\cdot T\cdot (1-x)\\
    \ge&\sum_{i\in\{2,\dots,m\}}\sum_{x\in(\frac{1}{2},\frac{3}{4}]\cap\I}\Pr[v_L=v_{m-i+1}]\cdot(1-o(1))\cdot P_i(x)\cdot T\cdot (1-x)\\
    =&\sum_{i\in\{2,\dots,m\}}\sum_{x\in(\frac{1}{2},\frac{3}{4}]\cap\I}\frac{1}{2^{i-1}}\cdot(1-o(1))\cdot P_i(x)\cdot T\cdot (1-x)\\
    \ge&\sum_{i\in\{2,\dots,m\}}\sum_{x\in(\frac{1}{2},\frac{3}{4}]\cap\I}\frac{1}{2^{i-1}}\cdot(1-o(1))\cdot \frac{\eps T}{2^{m-i+3}}\cdot (1-x) &&\text{(By Ineq.~\eqref{eq:separation-P_i-lower-bound})}\\
    =&\frac{(m-1)\cdot(1-o(1))\cdot\eps T}{2^{m+2}}\cdot\sum_{x\in(\frac{1}{2},\frac{3}{4}]\cap\I} (1-x)\\
    =&\frac{\Omega(m)\cdot T}{2^m},
\end{align*}
where the last equality is because the size of $(\frac{1}{2},\frac{3}{4}]\cap\I$ is roughly $(\frac{3}{4}-\frac{1}{2})/\eps=\frac{1}{4\eps}$, and $1-x\ge\frac{1}{4}$ holds for all $x\in(\frac{1}{2},\frac{3}{4}]$.
\end{proof}

\section{Minimizing polytope swap regret in Bayesian first-price auction}\label{section:minimize-polytope-swap}
In Theorem~\ref{thm:exploit-bayesian-mean-based}, we showed that there is an instance of repeated Bayesian first-price auction where the optimizer achieves much higher utility than the Stackelberg utility times the number of rounds against mean-based learners. Are there stronger ``no-regret'' learners that can cap the optimizer's utility at the Stackelberg utility? \citet{mansour2022strategizing} gave a positive answer to this question for general Bayesian games -- they showed that the optimizer cannot achieve more than the Stackelberg utility times the number of rounds against any no-polytope-swap-regret learner in any repeated Bayesian game (see Lemma \ref{lem:mansour-theorem}).
Moreover, they observed that one can minimize $\S$-polytope swap regret using any no-swap-regret\footnote{Swap regret is a well-studied notion of regret in standard full-information games~\citep{blum2007external}, which is equivalent to polytope swap regret in standard games.} algorithm (e.g.,~\citet{blum2007external}) by treating every pure strategy $f\in\S$ as a ``single action'' in a standard (i.e., full-information) game.
\begin{lemma}[{\citet[Theorem 9]{mansour2022strategizing}}]\label{lem:mansour-alg}
If there is an algorithm that incurs swap regret at most $r(n,T)$ and has runtime $t(n)$ per round in any repeated standard game with $n$ actions of the learner and $T$ rounds, then there is an algorithm that guarantees $\Reg_{\spoly{\S}}\le r(|\S|,T)$ and 
has runtime $t(|\S|)$ per round for any repeated Bayesian game $\G(M,N,C,\D,u_O,u_L)$ with $T$ rounds and any $\S\subseteq[N]^{[C]}$.
\end{lemma}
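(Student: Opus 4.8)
The plan is to reduce the task of controlling $\S$-polytope swap regret in the Bayesian game $\G$ to the task of controlling ordinary swap regret in an auxiliary \emph{standard} game whose learner-action set is literally $\S$. Concretely, I would first define a standard bimatrix game $\tilde\G$ with the same $M$ optimizer actions and with $|\S|$ learner actions, one per pure strategy $f\in\S$, in which the learner's payoff against optimizer action $i\in[M]$ when it ``plays $f$'' is $\tilde u_L(i,f):=u_L(i,f)=\sum_{c\in[C]}p_c\,u_L(i,f(c),c)$; note this is a convex combination of entries of $u_L$, so it lies in the same range. The learner then runs the hypothesized no-swap-regret algorithm $\A$ on $\tilde\G$: in round $t$, $\A$ produces a distribution $q_t\in\Delta(\S)$, the learner samples $f_t\sim q_t$, observes its realized context $c_t$, and plays the action $f_t(c_t)\in[N]$. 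After the round, $i_t$ is revealed, which (together with the publicly known $u_L$ and $\D$) lets the learner compute the entire utility vector $(\tilde u_L(i_t,f))_{f\in\S}$ and hand it to $\A$ as full-information feedback for round $t$ of $\tilde\G$.

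The heart of the argument is then to observe that, with this algorithm, the learner's round-$t$ mixed strategy --- as an element of $\Delta([N]^{[C]})$ --- is exactly $\beta_t:=q_t$, which is supported on $\S$, and that both extra quantifiers in the definition of $\Reg_{\spoly{\S}}$ collapse for free. For the inner minimization I would simply take $\rho_t=\beta_t\in\class{\beta_t}$. For the outer maximization, since $\beta_t$ is supported on $\S$, every map $\pi:[N]^{[C]}\to\S$ contributes $\sum_{f\in\S}q_{t,f}\,\tilde u_L(i_t,\pi(f))$, which depends only on the restriction $\pi|_\S:\S\to\S$, while $u_L(i_t,\beta_t)=\sum_{f\in\S}q_{t,f}\,\tilde u_L(i_t,f)$; moreover every map $\S\to\S$ extends to some such $\pi$. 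Hence
\[
\Reg_{\spoly{\S}}\ \le\ \max_{\sigma:\S\to\S}\ \sum_{t\in[T]}\sum_{f\in\S}q_{t,f}\bigl(\tilde u_L(i_t,\sigma(f))-\tilde u_L(i_t,f)\bigr),
\]
and the right-hand side is precisely the swap regret incurred by $\A$ in $\tilde\G$, which is at most $r(|\S|,T)$ by hypothesis. For the runtime, each round costs one update of $\A$ on a $|\S|$-action instance, i.e.\ $t(|\S|)$ (the sampling of $f_t$ and the formation of the utility vector being absorbed up to lower-order terms).

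I expect the only genuinely delicate point --- and the nearest thing to an obstacle --- to be the bookkeeping around those two quantifiers, in particular checking that deliberately restricting the learner's support to $\S$ costs nothing. This is fine because the benchmark maps $\pi$ in $\Reg_{\spoly{\S}}$ already range only into $\S$, so supporting $\beta_t$ on $\S$ exactly lines the polytope-swap benchmark up with an ordinary swap over the $|\S|$-element action set, and the trivial choice $\rho_t=\beta_t$ already attains the minimum we need. It is worth remarking that $|\S|$ can be as large as $N^{C}$ when $\S=[N]^{[C]}$, which is precisely why pruning $\S$ to a smaller set in the sequel will be the source of the improved bounds.
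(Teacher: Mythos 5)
Your proposal is correct and takes exactly the approach the paper describes and attributes to \citet{mansour2022strategizing}: treat each $f\in\S$ as a single action in an auxiliary standard game with utilities $\tilde u_L(i,f)=\sum_c p_c\,u_L(i,f(c),c)$, run the hypothesized no-swap-regret algorithm there, and observe that setting $\rho_t=\beta_t$ (supported on $\S$) makes the inner minimum trivial and collapses the benchmark $\max_{\pi:[N]^{[C]}\to\S}$ to $\max_{\sigma:\S\to\S}$, i.e.\ ordinary swap regret over $|\S|$ actions. The paper only cites this lemma rather than re-proving it, but your argument is a complete and faithful rendering of that reduction, including the observations that $\tilde u_L$ stays in range and that the learner can form the full feedback vector once $i_t$ is revealed.
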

By combining Lemma~\ref{lem:mansour-alg} (for $\S=[N]^{[C]}$) with the no-swap-regret algorithm by~\citet{blum2007external} (which incurs swap regret $O(\sqrt{n\log n\cdot T})$ and has runtime $n^{O(1)}$),~\citet{mansour2022strategizing} gave an algorithm that guarantees $\Reg_{\poly}=O(\sqrt{N^CC\log N\cdot T})$ for general Bayesian games. Can we improve this algorithm's regret bound and runtime for specific Bayesian games with additional structure (e.g.,~Bayesian first-price auction in our case)? In this section, we give a slight improvement of this algorithm for Bayesian first-price auctions. Our improvement is based on the simple observation that in order to minimize polytope swap regret, it suffices to restrict the learner's attention to a subset of pure strategies that ``covers'' all the potential best responses.
Although this observation is simple in hindsight, 
we think it is worth formalizing because of its applicability. To this end, we start by defining \emph{best response cover}.
\begin{definition}\label{def:best-resp-cover}
In a Bayesian game $\G(M,N,C,\D,u_O,u_L)$, we say that a set of pure strategies $\br(\G)\subseteq[N]^{[C]}$ is a best response cover for $\G$, if for any mixed strategy $\alpha\in\Delta([M])$ of the optimizer, there exists some pure strategy $f\in\br(\G)$ of the learner which is a best response to $\alpha$.
\end{definition}
Given Definition~\ref{def:best-resp-cover}, our observation mentioned before can be formalized as follows.
\begin{lemma}\label{lem:best-resp-polytope-regret}
For any Bayesian game $\G(M,N,C,\D,u_O,u_L)$ and any best response cover $\br(\G)$ for $\G$, it holds that $\Reg_{\poly}=\Reg_{\spoly{\br(\G)}}$.
\end{lemma}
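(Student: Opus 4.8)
The plan is to prove the two inequalities $\Reg_{\poly}\ge\Reg_{\spoly{\br(\G)}}$ and $\Reg_{\poly}\le\Reg_{\spoly{\br(\G)}}$ separately. The first is immediate: since $\br(\G)\subseteq[N]^{[C]}$, every map $\pi:[N]^{[C]}\to\br(\G)$ is in particular a map $\pi:[N]^{[C]}\to[N]^{[C]}$, so for each fixed choice of the $\rho_t\in\class{\beta_t}$ the inner maximum in the definition of $\Reg_{\spoly{\br(\G)}}$ is taken over a subset of the maps allowed for $\Reg_{\poly}$; as the outer $\min$ over the $\rho_t$'s and the additive term $-\sum_t u_L(i_t,\beta_t)$ are identical in the two expressions, monotonicity gives $\Reg_{\spoly{\br(\G)}}\le\Reg_{\poly}$. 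For the reverse inequality I would prove the stronger statement that for \emph{every} fixed tuple $(\rho_t)_{t\in[T]}$ with $\rho_t\in\class{\beta_t}$ the two inner maxima are in fact equal, so that taking $\min$ over the $\rho_t$'s (and subtracting the fixed constant) on both sides finishes the proof.

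To prove this pointwise equality, first observe that the objective $\sum_{t\in[T]}\sum_{f\in[N]^{[C]}}\rho_{t,f}\,u_L(i_t,\pi(f))$ is \emph{separable} over $f$: the value $\pi(f)$ affects only the summand indexed by $f$, so for any $\S\subseteq[N]^{[C]}$,
\[
\max_{\pi:[N]^{[C]}\to\S}\ \sum_{t\in[T]}\sum_{f\in[N]^{[C]}}\rho_{t,f}\,u_L(i_t,\pi(f))\;=\;\sum_{f\in[N]^{[C]}}\ \max_{g\in\S}\ \sum_{t\in[T]}\rho_{t,f}\,u_L(i_t,g).
\]
Hence it suffices to show, for each fixed $f$, that $\max_{g\in\br(\G)}\sum_t\rho_{t,f}\,u_L(i_t,g)=\max_{g\in[N]^{[C]}}\sum_t\rho_{t,f}\,u_L(i_t,g)$. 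Fixing $f$ and writing $w_f:=\sum_t\rho_{t,f}$, the case $w_f=0$ is trivial; when $w_f>0$ I would introduce $\bar\alpha^f\in\Delta([M])$ defined by $\bar\alpha^f_i:=\frac1{w_f}\sum_{t:\,i_t=i}\rho_{t,f}$, so that by linearity of $u_L(\,\cdot\,,g)$ in the optimizer's mixed strategy one has $\sum_t\rho_{t,f}\,u_L(i_t,g)=w_f\cdot u_L(\bar\alpha^f,g)$ for every pure strategy $g$. Thus the two maximizations amount to comparing $\max_{g\in\br(\G)}u_L(\bar\alpha^f,g)$ with $\max_{g\in[N]^{[C]}}u_L(\bar\alpha^f,g)$, and here the defining property of a best response cover applies directly: there is a $g^\star\in\br(\G)$ that is a best response to $\bar\alpha^f$, i.e.\ $u_L(\bar\alpha^f,g^\star)=\max_{g\in[N]^{[C]}}u_L(\bar\alpha^f,g)$, so the two maxima coincide. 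Multiplying back by $w_f$ and summing over $f$ gives the pointwise identity of the inner maxima, and the lemma follows.

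I do not anticipate a real obstacle, since the argument is deliberately short; the points that need care are (a) applying the separability step and the best-response-cover step \emph{inside} the outer $\min$ over the $\rho_t$'s --- i.e.\ establishing the equality of inner maxima for each fixed $\rho$, which is exactly what licenses commuting past $\min$ at the end --- and (b) the normalization bookkeeping (the scaling factor $w_f$, the degenerate case $w_f=0$, and verifying $\bar\alpha^f\in\Delta([M])$). The only conceptual content is the recognition that the per-$f$ maximization appearing in the polytope-swap-regret functional is a best-response computation against a suitably weighted average of the optimizer's played actions, which is precisely what Definition~\ref{def:best-resp-cover} is built to capture.
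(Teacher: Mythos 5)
Your proposal is correct and takes essentially the same approach as the paper: both define, for each $f$, the normalized mixed strategy $\bar\alpha^f$ (the paper's $\alpha^{(\vec i,\vec\rho,f)}$) by weighting the played actions by $\rho_{t,f}$, and both invoke the best-response-cover property to replace an arbitrary target pure strategy by one in $\br(\G)$ without loss. Your version is marginally cleaner in making the separability over $f$ explicit and in proving that the two inner maxima coincide pointwise in $(\rho_t)_t$, whereas the paper proves only the one-sided inequality it needs, but the underlying idea is identical.
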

\begin{proof}
Because $\br(\G)\subseteq [N]^{[C]}$, $\Reg_{\poly}\ge\Reg_{\spoly{\br(\G)}}$ obviously holds by Definition~\ref{def:S-poly-regret}, and hence, it suffices to show that $\Reg_{\spoly{\br(\G)}}\le\Reg_{\poly}$. Given any $i_t\in[M]$ and $\rho_t\in\Delta([N]^{[C]})$ for $t\in[T]$, we denote $\vec{i}:=(i_t)_{t\in[T]}$ and $\vec{\rho}:=(\rho_t)_{t\in[T]}$ for simplicity. Then, for each $f\in[N]^{[C]}$ such that $\rho_{t,f}>0$ for some $t\in[T]$, we let $\alpha^{(\vec{i},\vec{\rho},f)}\in\Delta([M])$ denote the optimizer's mixed strategy that with probability $\frac{\rho_{t,f}}{\sum_{t'\in[T]} \rho_{t',f}}$ plays action $i_t$ for each $t\in[T]$, and we let $\pi^{(\vec{i},\vec{\rho})}(f)\in\br(\G)$ be an arbitrary best response of the learner to the optimizer's mixed strategy $\alpha^{(\vec{i},\vec{\rho},f)}$ in $\br(\G)$. Moreover, for any $f'\in[N]^{[C]}$ for which $\pi^{(\vec{i},\vec{\rho})}(f')$ is still unspecified, we let $\pi^{(\vec{i},\vec{\rho})}(f')$ be an arbitrary pure strategy in $\br(\G)$. Therefore, $\pi^{(\vec{i},\vec{\rho})}$ is a well-defined map from $[N]^{[C]}$ to $\br(\G)$ which depends on the $\vec{i}$ and $\vec{\rho}$. We derive that for any $\pi:[N]^{[C]}\to[N]^{[C]}$,
\begin{align}
&\sum_{t\in[T]}\sum_{f\in[N]^{[C]}}\rho_{t,f}\cdot u_L(i_t,\pi(f))\nonumber\\
=&\sum_{f\in[N]}\sum_{t\in[T]}\rho_{t,f}\cdot u_L(i_t,\pi(f))\label{eq:line1}\\
=&\sum\limits_{\substack{f\in[N]^{[C]} \\ \textrm{ s.t.~$\exists t\in[T],\,\rho_{t,f}>0$}}} \sum_{t'\in[T]} \rho_{t',f}\cdot \sum_{t\in[T]}\frac{\rho_{t,f}}{\sum_{t'\in[T]} \rho_{t',f}}\cdot u_L(i_t,\pi(f))\\
=&\sum\limits_{\substack{f\in[N]^{[C]} \\ \textrm{ s.t.~$\exists t\in[T],\,\rho_{t,f}>0$}}} \sum_{t'\in[T]} \rho_{t',f}\cdot \sum_{t\in[T]} u_L(\alpha^{(\vec{i},\vec{\rho},f)},\pi(f))&&\text{(By defintion of $\alpha^{(\vec{i},\vec{\rho},f)}$)}\label{eq:line3}\\
\le&\sum\limits_{\substack{f\in[N]^{[C]} \\ \textrm{ s.t.~$\exists t\in[T],\,\rho_{t,f}>0$}}} \sum_{t'\in[T]} \rho_{t',f}\cdot \sum_{t\in[T]} u_L(\alpha^{(\vec{i},\vec{\rho},f)},\pi^{(\vec{i},\vec{\rho})}(f)) &&\text{(By defintion of $\pi^{(\vec{i},\vec{\rho})}(f)$)}\nonumber\\
=&\sum_{t\in[T]}\sum_{f\in[N]^{[C]}}\rho_{t,f}\cdot u_L(i_t,\pi^{(\vec{i},\vec{\rho})}(f)) &&\text{(Analogous to Eq.~(\ref{eq:line1}-\ref{eq:line3}))}\nonumber.
\end{align}
Therefore, we have that
\begin{align*}
    \Reg_{\poly}&=\min_{\forall t\in[T],\,\rho_t\in\class{\beta_t}}\max_{\pi:[N]^{[C]}\to[N]^{[C]}}\sum_{t\in[T]}\sum_{f\in[N]^{[C]}}\rho_{t,f}\cdot u_L(i_t,\pi(f))-u_L(i_t,\beta_t) \\
    &\le\min_{\forall t\in[T],\,\rho_t\in\class{\beta_t}}\sum_{t\in[T]}\sum_{f\in[N]^{[C]}}\rho_{t,f}\cdot u_L(i_t,\pi^{(\vec{i},\vec{\rho})}(f))-u_L(i_t,\beta_t)\\
    &\le\min_{\forall t\in[T],\,\rho_t\in\class{\beta_t}}\max_{\pi:[N]^{[C]}\to\br(\G)}\sum_{t\in[T]}\sum_{f\in[N]^{[C]}}\rho_{t,f}\cdot u_L(i_t,\pi(f))-u_L(i_t,\beta_t)=\Reg_{\spoly{\br(\G)}}.
\end{align*}
\end{proof}
By Lemma~\ref{lem:best-resp-polytope-regret}, in order to minimize $\Reg_{\poly}$, it suffices to minimize $\Reg_{\spoly{\br(\G)}}$ instead. Hence, by applying Lemma~\ref{lem:mansour-alg} to $\br(\G)$ instead of $[N]^{[C]}$, we get the following proposition.
\begin{proposition}\label{prop:better-alg}
If there is an algorithm that incurs swap regret at most $r(n,T)$ and has runtime $t(n)$ per round in any repeated standard game with $n$ actions of the learner and $T$ rounds, then there is an algorithm that guarantees $\Reg_{\poly}\le r(|\br(\G)|,T)$ and 
has runtime $t(|\br(\G)|)$ per round for any repeated Bayesian game $\G(M,N,C,\D,u_O,u_L)$ with $T$ rounds.
\end{proposition}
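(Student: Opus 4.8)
The plan is to derive Proposition~\ref{prop:better-alg} by simply composing the two results established immediately above it, Lemma~\ref{lem:mansour-alg} and Lemma~\ref{lem:best-resp-polytope-regret}. The point is that a best response cover $\br(\G)$ is a subset of $[N]^{[C]}$ that already witnesses polytope swap regret \emph{exactly}, so feeding $\S=\br(\G)$ into the generic $\S$-polytope-swap-regret minimizer costs nothing in the regret guarantee while shrinking the ``effective number of actions'' from $|[N]^{[C]}|=N^C$ down to $|\br(\G)|$.

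Concretely, I would proceed in three steps. First, observe that since the parameters of $\G$ are public to both players, the learner can compute, offline before the repeated game begins, some best response cover $\br(\G)$ (the trivial choice $\br(\G)=[N]^{[C]}$ always works, and it is for structured games such as Bayesian first-price auctions that one can exhibit a much smaller cover). Second, instantiate the hypothesized swap-regret algorithm via Lemma~\ref{lem:mansour-alg} with $\S:=\br(\G)$: this yields a learning algorithm $\A$ for $\G$ satisfying $\Reg_{\spoly{\br(\G)}}\le r(|\br(\G)|,T)$ with per-round runtime $t(|\br(\G)|)$. Third, invoke Lemma~\ref{lem:best-resp-polytope-regret}, which states that $\Reg_{\poly}=\Reg_{\spoly{\br(\G)}}$ for \emph{any} play trajectory and in particular for the trajectory produced by $\A$; chaining this identity with the bound from the second step gives $\Reg_{\poly}\le r(|\br(\G)|,T)$, and the runtime is unchanged. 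This is exactly the claimed statement.

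I do not anticipate a genuine obstacle: essentially all of the content has been front-loaded into Lemma~\ref{lem:mansour-alg} (due to~\cite{mansour2022strategizing}) and Lemma~\ref{lem:best-resp-polytope-regret}, so the proposition is really the remark that these two compose. The one subtlety worth stating explicitly is that there is no circularity in the third step: $\Reg_{\spoly{\br(\G)}}$ is controlled by $\A$'s own guarantee, and $\Reg_{\poly}$ is then pinned down by the trajectory-wise identity of Lemma~\ref{lem:best-resp-polytope-regret}, which holds regardless of how the trajectory was generated. The only assumption being used is that the learner knows (or can compute) a best response cover, which is justified because the game $\G$ is common knowledge in the repeated-game model of Section~\ref{section:preliminaries}. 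The payoff of the proposition then comes from plugging in a good cover for a specific game; for Bayesian first-price auctions, the structure of the Stackelberg CDF in Lemma~\ref{lem:first-price-stackelberg} (best responses live in an explicit, small family of threshold-type pure strategies) supplies such a cover and yields the improved bound.
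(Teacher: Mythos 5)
Your proposal is correct and follows exactly the paper's route: the paper derives Proposition~\ref{prop:better-alg} by applying Lemma~\ref{lem:mansour-alg} with $\S=\br(\G)$ and then using the identity $\Reg_{\poly}=\Reg_{\spoly{\br(\G)}}$ from Lemma~\ref{lem:best-resp-polytope-regret}. Your note that the cover can be computed offline because $\G$ is common knowledge is a sensible clarification that the paper leaves implicit.
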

By Proposition~\ref{prop:better-alg}, we can get an algorithm with better regret bound and runtime for a specific Bayesian game $\G$ if we can find a best response cover $\br(\G)$ for $\G$ such that $|\br(\G)|$ is substantially less than $N^C$. As we now illustrate, this is indeed the case for Bayesian first-price auctions, and the key property we use is Lemma~\ref{lem:first-price-monotone-best-resp}. which shows that the potential best responses in Bayesian first-price auctions have certain \emph{monotone} structure.
\begin{lemma}\label{lem:first-price-monotone-best-resp}
Consider any $v_1,v_2\in\I$ such that $v_1\le v_2$. In a Bayesian first-price auction, given an arbitrary mixed strategy $\alpha$ of the optimizer, suppose that bid $b_1\in I$ is a best response to $\alpha$ when the learner has value $v_1$. Then, there exists some $b_2\in I$ with $b_2\ge b_1$ such that bid $b_2$ is a best response to $\alpha$ when the learner has value $v_2$.
\end{lemma}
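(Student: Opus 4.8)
The plan is to recognize this as a monotone comparative statics statement about the learner's best responses as a function of its value, and to prove it via an increasing-differences (single-crossing) argument — essentially Topkis's monotonicity theorem specialized to first-price auctions.

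First I would write the learner's utility explicitly. Fix the optimizer's mixed strategy $\alpha$ and let $G(b):=\P_{b_O\sim\alpha}[b_O\le b]$ be the probability that bidding $b$ wins the item against $\alpha$ (here I fold Assumption~\ref{assumption:tie-breaking} into the definition, so that a tie counts as a win for the learner; this makes $G$ a genuine ``win probability'' that is non-decreasing in $b$). Then a learner with value $v$ bidding $b\in\I$ obtains utility $\phi(b,v):=G(b)\,(v-b)$, and by definition a bid is a best response to $\alpha$ for value $v$ precisely when it maximizes $\phi(\cdot,v)$ over $\I$.

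Second, I would establish that $\phi$ has increasing differences in $(b,v)$: for any bids $b\le b'$ and values $v\le v'$ in $\I$, a one-line computation gives $[\phi(b',v')-\phi(b,v')]-[\phi(b',v)-\phi(b,v)]=(v'-v)\,(G(b')-G(b))\ge 0$, using only that $G$ is non-decreasing. Equivalently, $\phi(b',v)-\phi(b,v)$ is non-decreasing in $v$ whenever $b'\ge b$. With this in hand the lemma follows by a standard argument: let $b_2$ be the \emph{largest} best response to $\alpha$ for value $v_2$ (it exists since $\I$ is finite and nonempty), and suppose toward a contradiction that $b_2<b_1$. Since $b_1$ is a best response for $v_1$, $\phi(b_1,v_1)-\phi(b_2,v_1)\ge 0$; applying increasing differences with $b_2<b_1$ and $v_1\le v_2$ yields $\phi(b_1,v_2)-\phi(b_2,v_2)\ge\phi(b_1,v_1)-\phi(b_2,v_1)\ge 0$, so $b_1$ does at least as well as $b_2$ against $\alpha$ when the value is $v_2$. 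But $b_2$ is a best response for $v_2$, hence so is $b_1$ — contradicting the maximality of $b_2$ since $b_1>b_2$. Therefore $b_2\ge b_1$, and $b_2$ is the desired bid.

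I do not expect a genuine obstacle here. The only points requiring care are: (i) correctly absorbing the tie-breaking rule into $G$ so that $\phi$ is a clean product and $G$ is monotone; (ii) checking that the argument never needs any sign condition on $G(b_1)$ or on $v-b$ — in particular the increasing-differences step handles uniformly the ``degenerate'' case where $b_1>v_1$ (a best response that merely avoids paying more than the value) without separate case analysis; and (iii) noting that best responses exist because the bid space $\I$ is finite. If one prefers, the last step can be replaced outright by a citation to Topkis's monotonicity theorem applied to the objective $\phi$, which is supermodular in $(b,v)$.
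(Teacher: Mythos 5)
Your proof is correct and is essentially the same argument the paper gives: the paper's algebraic step $v_2\bigl(F(b_1)-F(b)\bigr)\ge v_1\bigl(F(b_1)-F(b)\bigr)\ge b_1F(b_1)-bF(b)$ for $b\le b_1$ is exactly your increasing-differences inequality $(v_2-v_1)\bigl(G(b_1)-G(b)\bigr)\ge 0$ combined with optimality of $b_1$ at $v_1$, and your $G$ coincides with the paper's CDF $F$ once the tie-breaking rule is folded in. The only cosmetic difference is that the paper concludes directly that $b_1$ weakly dominates every smaller bid at value $v_2$ (so a best response $\ge b_1$ exists), whereas you package the same fact as a contradiction against the largest best response and name the underlying structure (Topkis / single-crossing).
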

\begin{proof}
Let $F$ denote the CDF of the optimizer's bid corresponding to the mixed strategy $\alpha$. Since $b_1$ is a best response against the optimizer's mixed strategy $\alpha$ for the learner with value $v_1$, we have that $(v_1-b_1)F(b_1)\ge (v_1-b)F(b)$ for any $b\in\I$, which is equivalent to $v_1(F(b_1)-F(b))\ge b_1F(b_1)-bF(b)$ after rearrangement. Because $v_2\ge v_1$ and $F(b_1)\ge F(b)$ for any $b\le b_1$, it holds for any $b\le b_1$ that $v_2(F(b_1)-F(b))\ge v_1(F(b_1)-F(b))\ge b_1F(b_1)-bF(b)$, which is equivalent to $(v_2-b_1)F(b_1)\ge (v_2-b)F(b)$ after rearrangement. Thus, if the learner has value $v_2$, then the learner's expected utility by bidding $b_1$ against $\alpha$ is no less than that by bidding any $b\le b_1$, which implies that there exists a best response $b_2\ge b_1$ to $\alpha$ for the learner with value $v_2$.
\end{proof}
\begin{proposition}\label{prop:first-price-better-alg}
For any Bayesian first-price auction $\G_{\fpa}$ under Assumption~\ref{assumption:discrete-bid}, there exists a best response cover $\br(\G_{\fpa})$ such that $|\br(\G_{\fpa})|\le 4^{N}$.
\end{proposition}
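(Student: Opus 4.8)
The plan is to take $\br(\G_{\fpa})$ to be the set of all \emph{monotone} pure strategies and bound its cardinality by a stars-and-bars count. In a Bayesian first-price auction the contexts are the learner's possible values, all lying in $\I$, so there are at most $N$ of them; order them as $v_1 < v_2 < \dots < v_C$ with $C \le N$. Call a pure strategy $f \in [N]^{[C]}$ \emph{monotone} if the action (bid) it assigns is non-decreasing along $v_1, \dots, v_C$, and let $\br(\G_{\fpa})$ collect all such $f$.

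To verify the cover property I would fix an arbitrary $\alpha \in \Delta([M])$ and build a monotone best response greedily from the lowest value up. Take $f(v_1)$ to be any best response to $\alpha$ conditioned on context $v_1$; then for $i = 2, \dots, C$, apply Lemma~\ref{lem:first-price-monotone-best-resp} with the pair $(v_{i-1}, v_i)$ and $b_1 = f(v_{i-1})$ to obtain a best response $f(v_i) \ge f(v_{i-1})$ conditioned on $v_i$. The resulting $f$ is monotone by construction and, since it plays a best response conditioned on every context, it is a best response to $\alpha$ (per the characterization of best responses in the preliminaries); hence $f \in \br(\G_{\fpa})$, so $\br(\G_{\fpa})$ is a best response cover.

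For the size bound, a monotone $f$ is exactly a non-decreasing map from a $C$-element ordered set to an $N$-element ordered set, of which there are $\binom{N+C-1}{C}$. This quantity is non-decreasing in $C$, and $C \le N$, so it is at most $\binom{2N-1}{N} \le 2^{2N-1} < 4^N$, where the first inequality holds because $\binom{2N-1}{N}$ is a single summand of $\sum_{k=0}^{2N-1}\binom{2N-1}{k} = 2^{2N-1}$. This yields $|\br(\G_{\fpa})| \le 4^N$.

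I do not foresee a genuine obstacle: the one insight is that iterating Lemma~\ref{lem:first-price-monotone-best-resp} upward produces a monotone best response, and everything else is a routine count (indeed the bound $4^N$ is quite loose, since $\binom{2N-1}{N} \sim 4^N/\sqrt{\pi N}$). The only point worth a sentence of care is that Lemma~\ref{lem:first-price-monotone-best-resp} concerns best responses conditioned on a fixed value and is insensitive to the prior, so the greedy construction is unaffected by contexts of zero probability and by the tie-breaking convention.
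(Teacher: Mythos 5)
Your proof is correct and takes essentially the same approach as the paper: the cover is the set of monotone non-decreasing pure strategies, justified by Lemma~\ref{lem:first-price-monotone-best-resp}, and the stars-and-bars count $\binom{2N-1}{N}<4^{N}$ gives the bound. You merely spell out the greedy chain of applications of the lemma (from lowest to highest value) where the paper invokes it in one line, and your derivation $\binom{2N-1}{N}\le\sum_{k=0}^{2N-1}\binom{2N-1}{k}=2^{2N-1}<4^{N}$ is a harmless, slightly tidier rephrasing of the same loose estimate.
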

\begin{proof}
Recall that by Assumption~\ref{assumption:discrete-bid}, there are $N$ possible actions (bids) $\I$ and $N$ possible contexts (values) $\I$ for the learner, and a pure strategy $f:\I\to\I$ specifies a bid $f(v_L)\in\I$ for each possible learner's value $v_L\in\I$. Given any mixed strategy $\alpha$ of the optimizer, Lemma~\ref{lem:first-price-monotone-best-resp} implies that there exists a best response $f^{\alpha}:\I\to\I$ of the learner to $\alpha$ which satisfies a nice monotone non-decreasing property: the learner's bid is non-decreasing with respect to the learner's value, i.e., $f^{\alpha}$ is a monotone non-decreasing function. Therefore, we can let $\br(\G_{\fpa})$ be the set of all the pure strategies $f:\I\to\I$ that are monotone non-decreasing functions, and it follows that $\br(\G_{\fpa})$ is a best response cover for $\G_{\fpa}$.

It is folklore that the number of monotone non-decreasing functions from an ordered set of size $N$ (such as $\I$) to itself is $\binom{2N-1}{N}$ (see e.g.,~\citep{ballsintobins}). The proof finishes since $\binom{2N-1}{N}\le\sum_{i=0}^{2N-1}\binom{2N}{i}=2^{2N-1}<4^N$.
\end{proof}
Thus, by combining Proposition~\ref{prop:better-alg} and Proposition~\ref{prop:first-price-better-alg}  with the no-swap-regret algorithm by~\citet{blum2007external} (which incurs swap regret $O(\sqrt{n\log n\cdot T})$ and has runtime $n^{O(1)}$), we have the following corollary:
\begin{corollary}\label{cor:first-price-better-alg}
For any Bayesian first-price auction $\G_{\fpa}$ under Assumption~\ref{assumption:discrete-bid}, there is an algorithm for the learner which runs in time $2^{\Theta(N)}$ and guarantees $\Reg_{\poly}=O(\sqrt{4^{N}N\cdot T})$.
\end{corollary}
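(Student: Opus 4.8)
The plan is to obtain the corollary by instantiating the general reduction of this section (Proposition~\ref{prop:better-alg}) with the best response cover for first-price auctions from Proposition~\ref{prop:first-price-better-alg} and the no-swap-regret algorithm of~\citet{blum2007external}. The only step with genuine content is the invocation of Proposition~\ref{prop:first-price-better-alg}, which in turn rests on the monotonicity of best responses (Lemma~\ref{lem:first-price-monotone-best-resp}); everything else is bookkeeping, so the ``hard part'' has already been carried out.

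Concretely, first I would apply Proposition~\ref{prop:first-price-better-alg} to the given auction $\G_{\fpa}$ to obtain a best response cover $\br(\G_{\fpa})$ with $|\br(\G_{\fpa})|\le 4^N$. Next, I would recall that the algorithm of~\citet{blum2007external} achieves swap regret $r(n,T)=O(\sqrt{n\log n\cdot T})$ with per-round runtime $t(n)=n^{O(1)}$ in any repeated standard game with $n$ learner actions. Feeding this algorithm into Proposition~\ref{prop:better-alg} (applied to $\G_{\fpa}$) produces a learner algorithm for $\G_{\fpa}$ with $\Reg_{\poly}\le r(|\br(\G_{\fpa})|,T)$ and per-round runtime $t(|\br(\G_{\fpa})|)$.

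Finally, I would substitute $|\br(\G_{\fpa})|\le 4^N$ into these bounds. For the regret, since $\log(4^N)=2N\log 2=\Theta(N)$, this gives $\Reg_{\poly}\le r(4^N,T)=O\!\left(\sqrt{4^N\cdot N\cdot T}\right)$. For the runtime, $(4^N)^{O(1)}=2^{O(N)}$, while the matching lower bound $2^{\Omega(N)}$ holds because the algorithm must operate on a cover of size $\binom{2N-1}{N}=2^{\Theta(N)}$, so the per-round runtime is $2^{\Theta(N)}$. I do not expect any real obstacle: the argument is a direct composition of already-established statements, and the only minor points to watch are the simplification $\log(4^N)=\Theta(N)$ in the regret bound and keeping the runtime accounting per round (as in Proposition~\ref{prop:better-alg} and Lemma~\ref{lem:mansour-alg}) rather than aggregate over all $T$ rounds.
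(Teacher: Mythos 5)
Your proposal is correct and matches the paper's own derivation exactly: the paper obtains the corollary by combining Proposition~\ref{prop:better-alg}, Proposition~\ref{prop:first-price-better-alg} (giving $|\br(\G_{\fpa})|\le 4^N$), and the Blum--Mansour no-swap-regret algorithm, then simplifying $\log(4^N)=\Theta(N)$. No gap.
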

In comparison, na\"ively combining Lemma~\ref{lem:mansour-alg} (for $\S=[N]^{[N]}$) with the no-swap-regret algorithm by~\citet{blum2007external} only guarantees $\Reg_{\poly}=O(\sqrt{N^{N+1}\log N\cdot T})$ and runtime $N^{\Theta(N)}$ in the worst case.

\section{Exploiting high-polytope-swap-regret learners in Bayesian game}\label{section:exploitable-polytope-swap}
As we mentioned in Section~\ref{section:preliminaries},~\citet{mansour2022strategizing} showed that no-polytope-swap-regret algorithm is sufficient to cap the optimizer's utility in any Bayesian game (Lemma~\ref{lem:mansour-theorem}). A natural open question posed by~\citet[Question 1]{mansour2022strategizing} is whether no-polytope-swap-regret algorithm is also necessary to cap the optimizer's utility. Let us explain their question formally. As common in the no-regret learning literature,~\citet{mansour2022strategizing} consider\footnote{To be precise, in their formal setup, the learner takes ``reward vectors'' as input not necessarily in the context of games, but obviously we can always associate ``reward vectors'' with the learner's utilities in certain game. The reason why we did not directly adopt their setup in the preliminary is that our results in the previous sections hold for algorithms that are not necessarily reward-based.} the online contextual learning algorithms that maximize the expected utility based on the ``reward vectors'' in the past:

\begin{definition}[Reward-based learner] \label{def:reward-based} 
In a Bayesian game $\G(M,N,C,\D,u_O,u_L)$ repeated for $T$ rounds, suppose that at each round $t\in[T]$, the optimizer plays action $i_{t}\in[M]$, and we call $\vec{r}_{t}:=(u_L(i_t,j,c))_{j\in[N],c\in[C]}$ the reward vector at round $t$. A reward-based algorithm determines the learner's mixed strategies $\beta_t$'s based on the reward vectors $\vec{r}_{t}$'s, in order to maximize the learner's expected utility $\sum_{t\in[T]}\sum_{c\in[C]}p_c\cdot u_L(i_t,\beta_t,c)$. To be precise, an algorithm $\A$ is reward-based if for all $t\in[T]$, the mixed strategy $\beta_t$ used by $\A$ at round $t$ is a deterministic function of the prior distribution $\D$ and the reward vectors in the previous rounds, i.e.,~$\vec{r}_{1},\dots,\vec{r}_{t-1}$.
\end{definition}
In particular, note that reward-based algorithms are oblivious to the optimizer's utility function $u_O$ and actions $i_t$ (as long as they result in the same reward vectors $\vec{r}_t$). Next, we formally define the \emph{exploitability} of reward-based algorithms with high polytope swap regret:
\begin{definition}[Exploitability] \label{def:exploitability} 
Suppose at each round $t\in[T]$ of a repeated Bayesian game $\G(M,N,C,\D,u_O,u_L)$ (where $u_O,u_L$ have range $[-1,1]$), the optimizer plays action $i_t\in[M]$, and a reward-based algorithm $\A$ uses mixed strategy $\beta_t$, and at the end of repeated game $\G$, $\A$ has polytope swap regret $\Reg_{\poly}=\Omega(T)$.
We say that $\A$ is exploitable if there exist a Bayesian game $\G'(M',N,C,\D,u_O',u_L')$ (where $u_O',u_L'$ have range $[-1,1]$) and a sequence of the optimizer's actions $i_1',\dots,i_T'\in[M']$ such that
\begin{itemize}
    \item[(i)] $u_L'(i_t',j,c)=u_L(i_t,j,c)$ for all $t\in[T]$, $j\in[N]$ and $c\in[C]$ (which implies that $\A$ also uses mixed strategies $\beta_1,\dots,\beta_T$ when facing the optimizer's actions $i_1',\dots,i_T'$ in repeated game $\G'$, because it faces the same sequence of reward vectors),
    \item[(ii)] and moreover, the optimizer's expected utility by playing actions $i_1',\dots,i_T'$ against algorithm $\A$ in repeated game $\G'$, is $V(\G')\cdot T+\Omega(T)$, where $V(\G')$ is the Stackelberg utility of $\G'$.
\end{itemize}
\end{definition}
The question of~\citet[Question 1]{mansour2022strategizing} can now be formally stated as follows: Consider the Bayesian games where the learner's and the optimizer's utilities are in the range $[-1,1]$. Is it true that if a reward-based algorithm has $\Omega(T)$ polytope swap regret in some repeated Bayesian game, then it must be exploitable? We answer this question affirmatively under a reasonable and necessary condition (indeed, in Section~\ref{section:negligible_p_c}, we will show that without this condition, the answer is negative):
\begin{definition}[No-negligible-context] \label{definition:p_c_is_not_negligible} 
Given a Bayesian game $\G(M,N,C,\D,u_O,u_L)$ where the probability of each context $c\in[C]$ is $p_c$, we say $\G$ has no negligible context if $p_c=\Omega(1)$ for all $c\in[C]$. In particular, this implies that (i) there is no zero-probability context, and (ii) $C=O(1)$.
\end{definition}
\begin{theorem}\label{thm:exploitable-poly-swap-regret}
Consider Bayesian games where the learner's and the optimizer's utilities are in the range $[-1,1]$. If a reward-based learning algorithm $\A$ has $\Omega(T)$ polytope swap regret in some repeated Bayesian game that has no negligible context, then $\A$ is exploitable.
\end{theorem}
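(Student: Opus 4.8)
The plan is to follow, at a higher level of generality, Mansour et al.'s proof that no-swap-regret learners are necessary in standard games, with a minimax theorem standing in for their explicit construction.

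\textbf{Extracting a hard instance.} First I would condition on the optimizer's internal randomness (and hence its adaptive choices) in the game $\G$ that witnesses the $\Omega(T)$ polytope swap regret, so that the transcript becomes deterministic: the optimizer plays a fixed sequence $i_1,\dots,i_T$ and the learner uses fixed mixed strategies $\beta_1,\dots,\beta_T$, still with $\Reg_{\poly}=\Omega(T)$. Since $\A$ is reward-based, $\beta_t$ --- and in particular its per-context marginals $q_t^c\in\Delta([N])$, which are exactly the data pinning down the equivalence class $\class{\beta_t}$ --- is a deterministic function of the reward vectors $\vec r_1,\dots,\vec r_{t-1}$. The consequence I will use throughout: if in a new game $\G'$ the optimizer plays a sequence $i_1',\dots,i_T'$ with $u_L'(i_t',\cdot,\cdot)\equiv u_L(i_t,\cdot,\cdot)$, then $\A$ sees identical reward vectors and replays exactly $\beta_1,\dots,\beta_T$; so it suffices to design $\G'$ and the $i_t'$ so that (a) the reward vectors match and (b) the new optimizer utility $u_O'$ lets the optimizer ``cash in'' the learner's regret.

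\textbf{From polytope swap regret to an encodable witness.} Spelling out Definition~\ref{def:S-poly-regret}, $\min_{\rho_t\in\class{\beta_t}}\max_{\pi}\sum_t\bigl(\sum_f\rho_{t,f}u_L(i_t,\pi(f))-u_L(i_t,\beta_t)\bigr)=\Omega(T)$. The inner maximum over deterministic maps $\pi\colon[N]^{[C]}\to[N]^{[C]}$ equals the maximum over distributions over such maps, which turns the objective into a bilinear function of the couplings $(\rho_t)_t$ (each ranging over the transportation polytope $\class{\beta_t}$ cut out by the marginals $q_t^c$) and of the randomized swap. Invoking a minimax theorem (e.g.\ Sion's), I would interchange $\min$ and $\max$ to produce a single randomized swap strategy $\Phi$ that still gains $\Omega(T)$ against \emph{every} admissible choice of couplings; exploiting that $\class{\beta_t}$ contains all couplings with marginals $q_t^c$ then lets one reduce $\Phi$ to a witness whose gain against the learner's play depends only on the marginals and the reward vectors, hence is realizable inside a Bayesian game. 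Carrying out this reduction while keeping the witness of bounded description length is the technical heart of the proof and the step I expect to be the main obstacle; it is also where $C=O(1)$ (hence the no-negligible-context hypothesis) enters, keeping the dimension of the minimax and the downstream bookkeeping constant.

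\textbf{Constructing $\G'$ and verifying exploitability.} I would let $\G'$ keep $N$, $C$, $\D$ and the learner's utility $u_L'=u_L$, and enlarge the optimizer's action set by a family of ``harvesting'' actions, each indexed by (a discretized) swap datum together with a reward profile $\vec r$; playing such an action feeds the learner the reward profile $\vec r$ and pays the optimizer a rescaling into $[-1,1]$ of the associated swap gain. This is precisely the step that blows up the number of optimizer actions, which Proposition~\ref{proposition:same_action} shows is unavoidable. Clause (i) of Definition~\ref{def:exploitability} then follows: the optimizer plays, at round $t$, the harvesting action indexed by $\Phi$ and $\vec r_t$; the reward vectors match so $\A$ replays $\beta_1,\dots,\beta_T$, and by the minimax guarantee the optimizer's expected total payoff is $\Omega(T)$. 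Clause (ii)'s implicit requirement is that $V(\G')$ be at most $o(1)$ per round: in a Stackelberg equilibrium the committed optimizer faces the learner's \emph{genuine} per-context best response, whose swap gain is nonpositive in every context, so the harvesting actions buy the optimizer essentially nothing --- and $p_c=\Omega(1)$ for every $c$ is exactly what makes this per-context best-response accounting aggregate with only $o(T)$ slack. Combining the two clauses yields optimizer payoff $\Omega(T)\ge V(\G')\cdot T+\Omega(T)$, as required. The remaining work --- tracking the $o(T)$ error terms, fixing rescaling constants so all utilities stay in $[-1,1]$, and the combinatorics of the harvesting-action index set --- is routine.
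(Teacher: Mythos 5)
Your high-level strategy—keep $u_L'$ so the reward vectors repeat, blow up the optimizer's action set, use a minimax theorem where an explicit construction is out of reach, and argue that a genuine best response has nonpositive swap gain—echoes several structural ideas the paper uses. But you apply minimax to a very different pair of objects than the paper does, and the step you flag as ``the technical heart'' is where the argument actually breaks. You apply minimax to the $\min_{\rho_t}\max_\pi$ in the definition of polytope swap regret to extract a randomized swap $\Phi$ gaining $\Omega(T)$ against every family of couplings, and then assert that since this gain ``depends only on the marginals and the reward vectors'' it is ``realizable inside a Bayesian game.'' That inference is false. The optimizer's per-round expected utility in any Bayesian game is a \emph{linear} functional of the learner's per-context marginals, whereas $\min_{\rho_t\in\class{\beta_t}}\sum_f\rho_{t,f}\,\E_{\pi\sim\Phi}[u_L(i_t,\pi(f))-u_L(i_t,f)]$ is the optimal value of a transportation LP whose right-hand sides are those marginals---a concave, piecewise-linear, but not linear, function of them. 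Being a function of the marginals is strictly weaker than what realizability requires, and that mismatch is precisely the difficulty of the Bayesian case. One could try to repair your plan via LP duality: fix, for each round $t$, dual potentials $\lambda^{(t)}_{j,c}$ for the round-$t$ LP, set $u_O'(a_t,j,c)\propto\lambda^{(t)}_{j,c}/p_c$, and use dual feasibility $\sum_c\lambda^{(t)}_{f(c),c}\le\E_\pi[u_L(i_t,\pi(f))-u_L(i_t,f)]$ to make every best response yield nonpositive optimizer utility, hence $V(\G')\le 0$. But your writeup never invokes duality; your indexing of harvesting actions by (swap datum, reward profile) rather than by round $t$ cannot realize this construction, since two rounds with identical reward vectors but different histories carry different marginals and hence need different potentials; and uniformly bounding those potentials so that the rescaled $u_O'$ lands in $[-1,1]$ while the optimizer's advantage remains $\Omega(T)$---the place the $p_c=\Omega(1)$ hypothesis has to enter---is genuinely nontrivial, not ``routine.''

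For comparison, the paper never extracts a randomized swap. It applies minimax to $\max_{u_O'}\min_{\D(\J)}$, where $\J$ is a fine finite net of (optimizer mixed strategy, learner best response) pairs, and then case-analyzes the fixed $\D(\J)$: it reads off hypothetical play $(\eta_t,\hat\beta_t)$; if the $\eta_t$'s or the marginals of $\hat\beta_t$ are far (in weighted $L^1$) from $1/T$ or from $\beta_t$, a separating $u_O'$ already delivers an $\Omega(T)$ gap (Claim~\ref{claim:large_L1_distance_cases}); if they are close, a coupling-matching lemma over the transportation polytope (Lemma~\ref{lemma:construct_rho}) together with the no-negligible-context hypothesis forces $\Reg_{\poly}=o(T)$, a contradiction (Claims~\ref{claim:hat_reg_near_hat_reg'},~\ref{claim:reg_near_hat_reg},~\ref{claim:hat_reg'_is_0}). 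The minimax sits at a different level, the convex bodies are different, and the concave-in-marginals obstruction you would need to overcome never arises because the paper does not attempt to encode the swap gain into $u_O'$ at all.
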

\begin{proof}
Suppose at each round $t\in[T]$ of a repeated Bayesian game $\G(M,N,C,\D,u_O,u_L)$ ($u_O, u_L$ have range $[-1,1]$), the optimizer plays action $i_t\in[M]$, and a reward-based algorithm $\A$ uses mixed strategy $\beta_t\in\Delta([N]^{[C]})$, and in the end, $\A$ has polytope swap regret $\Reg_{\poly}(\G)=\Omega(T)$. We will prove that there exists another Bayesian game $\G'(M',N,C,\D,u_O',u_L')$ ($u_O', u_L'$ have range $[-1,1]$) such that the optimizer can achieve expected utility $V(\G')\cdot T+\Omega(T)$ when playing $\G'$ repeated for $T$ rounds against $\A$. First, we construct $u_L'$ and the optimizer's oblivious strategy for playing repeated game $\G'$ against $\A$ (on the other hand, instead of explicitly constructing $u_O'$, we will show that {\em there exists} $u_O'$ which satisfy our desiderata using a min-max program).
\begin{proof}[Construction of $u_L'$ and the optimizer's strategy]\renewcommand{\qedsymbol}{}
In the Bayesian game $\G'$, the optimizer has $M'=T$ actions $A'=\{a_1,\dots,a_T\}$. The learner's utility $u_L'$ is defined ``similarly'' to $u_L$. That is, for each optimizer's action $a_t$ with $t\in[T]$, each learner's action $j\in[N]$ and each context $c\in[C]$, we let $u_L'(a_t,j, c):=u_L(i_t,j, c)$ (in other words, the optimizer's action $a_t$ in $\G'$ implies the same reward vector as action $i_t$ in $\G$). Finally, the optimizer's strategy when playing repeated game $\G'$ against $\A$ is simply playing action $a_t$ at each round $t\in[T]$.
\end{proof}
Before we proceed, note that by our construction above, the reward vector at each round $t\in[T]$ of repeated game $\G'$ when the optimizer plays action $a_t$, is exactly the same as the reward vector at each round $t\in[T]$ of repeated game $\G$ when the optimizer plays action $i_t$, and therefore, the reward-based algorithm $\A$ uses the same mixed strategy at each round $t\in[T]$ in repeated game $\G'$ as in repeated game $\G$, which is mixed strategy $\beta_t$. Let $\Reg_{\poly}(\G')$ denote the polytope swap regret of $\A$ when playing repeated game $\G'$ against the optimizer's sequence of actions $a_1,\dots,a_T$. By Definition~\ref{def:S-poly-regret}, we have that
\begin{align}
\Reg_{\poly}(\G')&=\min_{\forall t\in[T],\,\rho_t\in\class{\beta_t}}\max_{\pi:[N]^{[C]}\to[N]^{[C]}}\sum_{t\in[T]}\sum_{f\in[N]^{[C]}}\rho_{t,f}\cdot u_L'(a_t,\pi(f))-u_L'(a_t,\beta_t)\nonumber\\
&=\min_{\forall t\in[T],\,\rho_t\in\class{\beta_t}}\max_{\pi:[N]^{[C]}\to[N]^{[C]}}\sum_{t\in[T]}\sum_{f\in[N]^{[C]}}\rho_{t,f}\cdot u_L(i_t,\pi(f))-u_L(i_t,\beta_t) \,\,\text{(By construction)}\nonumber\\
&=\Reg_{\poly}(\G)=\Omega(T).\label{eq:poly_swap_regret_G'}
\end{align}

\subsubsection*{Formulating a max-min program to find $u_O'$}
Let $\beta_t(j,c)$ denote the marginal probability that the learner plays action $j\in[N]$ conditioned on context $c\in[C]$ when the learner uses mixed strategy $\beta_t$, and recall that $p_c$ denotes the probability of context $c$ according to the prior distribution $\D$. Now if we think of $(u_O'(a_t,j,c))_{t\in[T],j\in[N],c\in[C]}$, which we have not specified, as variables, we can represent the optimizer's expected utility of playing repeated game $\G'$ against $\A$ using the oblivious strategy in our construction, which we denote by $U_O$, as a linear function of variables $(u_O'(a_t,j,c))_{t\in[T],j\in[N],c\in[C]}$,
\begin{align}\label{eq:U_O}
U_O=\sum_{t\in[T]} u_O'(a_t,\beta_t)=\sum_{t\in[T]} \sum_{c\in[C]}\sum_{j\in[N]} p_c\cdot \beta_t(j,c)\cdot u_O'(a_t,j,c).
\end{align}

On the other hand, the optimizer's expected utility, achieved using any mixed strategy $\alpha\in\Delta(A')$ against any best response $f\in[N]^{[C]}$ of the learner to $\alpha$, can also be written as a linear function of variables $(u_O'(a_t,j,c))_{t\in[T],j\in[N],c\in[C]}$ (let $\alpha_{a_t}$ denote the probability of playing action $a_t$ according to mixed strategy $\alpha$),
\begin{align}\label{eq:u_O'_alpha_f}
u_O'(\alpha, f)=\sum_{t\in [T]}\sum_{c\in[C]} p_c\cdot\alpha_{a_t}\cdot u_O'(a_t, f(c),c).
\end{align}
Note that the Stackelberg utility $V(\G')$, which we eventually want to compare $\frac{U_O}{T}$ with, is by definition the maximum of $u_O'(\alpha, f)$
over all possible pairs $(\alpha,f)$ where $f$ is a best response to $\alpha$. To simplify the analysis, instead of all possible pairs, we consider a finite subset $\J$ of such pairs, which will give an approximation of $V(\G')$. Specifically, we first partition the probability simplex $\Delta(A')$ into $(T^2+1)^T$ parts as follows: For each $x\in\{0,\frac{1}{T^2},\dots,1\}^T$, let $\Delta_x:=\{\alpha\in\Delta(A')\mid \forall t\in[T],\,\alpha_{a_t}\in[x_t,x_t+\frac{1}{T^2})\}$, and hence $\Delta(A')=\dot{\bigcup}_{x\in\{0,1,\dots,T^2\}^T}\Delta_x$ (some $\Delta_x$'s are empty, but they are irrelevant). Then, $\J$ is constructed as follows: For each $x\in\{0,\frac{1}{T^2},\dots,1\}^T$ and each $f\in[N]^{[C]}$, if there exists $\alpha\in \Delta_x$ such that $f$ is a best response to mixed strategy $\alpha$ (if there are multiple such $\alpha$, we choose an arbitrary one), we add the pair $(\alpha,f)$ to $\J$. In the following claim, we show that $\J$ indeed gives a good approximation of $V(\G')$, given any $u_O'$ with range $[-1,1]$.
\begin{claim}\label{claim:J_good_approximation}
For any $u_O'$ with range $[-1,1]$, it holds that $\max_{(\alpha,f)\in\J} u_O'(\alpha, f)+\frac{1}{T}\ge V(\G')$.
\end{claim}
\begin{proof}[Proof of Claim~\ref{claim:J_good_approximation}]
Consider any $u_O'$ that has range $[-1,1]$. Let $(\alpha^*, f^*)$ be a pair of mixed strategy and corresponding best response, which achieve the Stackelberg utility $V(\G')$. Let $x\in\{0,\frac{1}{T^2},\dots,1\}^T$ be such that $\alpha^*\in\Delta_x$. Because $\alpha^*\in\Delta_x$, and $f^*$ is a best response to $\alpha^*$, by our construction of $\J$, there should exist $(\alpha, f^*)\in\J$ such that $\alpha\in\Delta_x$. We derive that
\begin{align*}
    u_O'(\alpha^*, f^*)-u_O'(\alpha, f^*)&=\sum_{t\in [T]}\sum_{c\in[C]} p_c\cdot\alpha^*_{a_t}\cdot u_O'(a_t, f^*(c),c)-\sum_{t\in [T]}\sum_{c\in[C]} p_c\cdot\alpha_{a_t}\cdot u_O'(a_t, f^*(c),c)\\
    &=\sum_{t\in [T]}\sum_{c\in[C]} p_c\cdot(\alpha^*_{a_t}-\alpha_{a_t})\cdot u_O'(a_t, f^*(c),c)\\
    &\le\sum_{t\in [T]}\sum_{c\in[C]} p_c\cdot|\alpha^*_{a_t}-\alpha_{a_t}|\cdot |u_O'(a_t, f^*(c),c)|\\
    &\le \sum_{t\in [T]}\sum_{c\in[C]} p_c\cdot\frac{1}{T^2}\cdot 1=\frac{1}{T},
\end{align*}
where the last inequality follows by $\alpha,\alpha^*\in\Delta_x$ and $u_O'(a_t, f^*(c),c)\in[-1,1]$.
\end{proof}
We are ready to formulate a max-min program that finds $u_O'$ to prove the theorem. Specifically, to prove the theorem, we want to find $(u_O'(a_t,j,c))_{t\in[T],j\in[N],c\in[C]}$ that maximize $U_O-V(\G')\cdot T$ (note that both $U_O$ and $V(\G')$ are functions of variables $(u_O'(a_t,j,c))_{t\in[T],j\in[N],c\in[C]}$) and show that the maximum is $\Omega(T)$. By Claim~\ref{claim:J_good_approximation}, $U_O-V(\G')\cdot T\ge U_O-(\max_{(\alpha,f)\in\J} u_O'(\alpha, f)\cdot T+1)=\min_{(\alpha,f)\in\J} U_O-u_O'(\alpha, f)\cdot T-1$. Therefore, to prove the theorem, it suffices to show that
\[
\max_{\substack{\forall t\in[T],j\in[N],c\in[C] \\ u_O'(a_t,j,c)\in[-1,1]}}\min_{(\alpha,f)\in\J} U_O-u_O'(\alpha, f)\cdot T = \Omega(T).
\]
Moreover, it holds straightforwardly that
\[
\max_{\substack{\forall t\in[T],j\in[N],c\in[C] \\ u_O'(a_t,j,c)\in[-1,1]}}\min_{(\alpha,f)\in\J} U_O-u_O'(\alpha, f)\cdot T=\max_{\substack{\forall t\in[T],j\in[N],c\in[C] \\ u_O'(a_t,j,c)\in[-1,1]}} \min_{\D(\J)\in\Delta(\J)}\E_{(\alpha,f)\sim \D(\J)}[U_O-u_O'(\alpha, f)\cdot T].
\]
We notice that in the max-min program on R.H.S., both $U_O$ and $u_O'(\alpha, f)$ are linear functions of variables $(u_O'(a_t,j,c))_{t\in[T],j\in[N],c\in[C]}$ as we showed in Eq.~\eqref{eq:U_O} and~\eqref{eq:u_O'_alpha_f}, and clearly $\E_{(\alpha,f)\sim \D(\J)}[U_O-u_O'(\alpha, f)\cdot T]$ is a linear function of the probabilities that specify $\D(\J)$. The constraints also clearly correspond to compact convex subsets of Euclidean space. Therefore, by Minimax theorem (e.g.,~\citet{sion1958general}), we have that
\begin{align}
&\max_{\substack{\forall t\in[T],j\in[N],c\in[C] \\ u_O'(a_t,j,c)\in[-1,1]}}\min_{\D(\J)\in\Delta(\J)} \E_{(\alpha,f)\sim \D(\J)}[U_O-u_O'(\alpha, f)\cdot T]\nonumber\\
&=\min_{\D(\J)\in\Delta(\J)}\max_{\substack{\forall t\in[T],j\in[N],c\in[C] \\ u_O'(a_t,j,c)\in[-1,1]}} \E_{(\alpha,f)\sim \D(\J)}[U_O-u_O'(\alpha, f)\cdot T].\label{eq:min_max}
\end{align}
Hence, it suffices to prove the optimal value of the min-max program in Eq.~\eqref{eq:min_max} is $\Omega(T)$.

\subsubsection*{Lower bounding the optimal value of the min-max program}
Given any distribution $\D(\J)$ that is feasible in the min-max program in Eq.~\eqref{eq:min_max}, let $\Pr[\alpha,f]$ denote the probability of each $(\alpha,f)\in\J$ according to $\D(\J)$. Then, for $t\in[T]$, let $\eta_t:=\sum_{(\alpha,f)\in\J}\alpha_{a_t}\cdot \Pr[\alpha,f]$ (one can think of $\eta_t$ as the marginal probability of optimizer's action $a_t$ according to $\D(\J)$), and let $\hat{\beta}_t$ denote the learner's mixed strategy such that the probability of playing pure strategy $f\in[N]^{[C]}$ is $\hat{\beta}_{t,f}=\frac{\sum_{(\alpha,\hat{f})\in\J\textrm{ s.t. }\hat{f}=f}\alpha_{a_t}\cdot \Pr[\alpha,f]}{\eta_t}$ if $\eta_t>0$ (in this case, one can think of $\hat{\beta}_t$ as the learner's mixed strategy according to distribution $\D(\J)$ conditioned on the optimizer's action $a_t$), and $\hat{\beta}_{t,f}=\beta_{t,f}$ if $\eta_t=0$ (this case is only for the convenience of the analysis). Similar to $\beta_t(j,c)$, we also let $\hat{\beta}_t(j,c)$ denote the marginal probability that the learner plays action $j\in[N]$ conditioned on context $c\in[C]$ according to mixed strategy $\hat{\beta}$. Now we lower bound the optimal value of the min-max program in Eq.~\eqref{eq:min_max} under certain assumptions on $\D(\J)$.
\begin{claim}\label{claim:large_L1_distance_cases}
If $\D(\J)$ satisfies either of the following conditions:
\begin{enumerate}
    \item[(i)] $\sum_{t\in[T]} |1-\eta_t\cdot T|=\Omega(T)$,
    \item[(ii)] $\sum_{t\in[T]} |1-\eta_t\cdot T|=o(T)$ and $\sum_{t\in[T]}\sum_{j\in[N]}\sum_{c\in[C]} p_c\cdot |\beta_t(j,c)-\hat{\beta}_t(j,c)|=\Omega(T)$,
\end{enumerate}
then $\max\limits_{\substack{\forall t\in[T],j\in[N],c\in[C] \\ u_O'(a_t,j,c)\in[-1,1]}} \E\limits_{(\alpha,f)\sim \D(\J)}[U_O-u_O'(\alpha, f)\cdot T]=\Omega(T).$
\end{claim}
\begin{proof}[Proof of Claim~\ref{claim:large_L1_distance_cases}]
First, we derive $\E_{(\alpha,f)\sim \D(\J)}[u_O'(\alpha, f)]$ as follows (intuitively, this is probably straightforward from our intuitive explanation of $\eta_t$ and $\hat{\beta}_t$)
\begin{align}
    \E\limits_{(\alpha,f)\sim \D(\J)}[u_O'(\alpha, f)]&=\E\limits_{(\alpha,f)\sim \D(\J)}[\sum_{t\in[T]}\alpha_{a_t}\cdot u_O'(a_t, f)]\nonumber\\ &=\sum_{(\alpha,f)\in\J}\Pr[\alpha,f]\cdot\sum_{t\in[T]}\alpha_{a_t}\cdot u_O'(a_t, f)\nonumber\\
    &=\sum_{t\in[T]\textrm{ s.t. }\eta_t>0}\eta_t\cdot\frac{\sum_{(\alpha,f)\in\J}\alpha_{a_t}\cdot\Pr[\alpha,f]\cdot u_O'(a_t, f)}{\eta_t}\nonumber\\
    &=\sum_{t\in[T]\textrm{ s.t. }\eta_t>0}\eta_t\cdot\frac{\sum_{\hat{f}\in[N]^{[C]}}\sum_{(\alpha,f)\in\J\textrm{ s.t. }f=\hat{f}}\alpha_{a_t}\cdot\Pr[\alpha,\hat{f}]\cdot u_O'(a_t, \hat{f})}{\eta_t}\nonumber\\
    &=\sum_{t\in[T]\textrm{ s.t. }\eta_t>0}\eta_t\cdot\sum_{\hat{f}\in[N]^{[C]}}\hat{\beta}_{t,\hat{f}}\cdot u_O'(a_t, \hat{f})\nonumber\\
    &=\sum_{t\in[T]\textrm{ s.t. }\eta_t>0}\eta_t\cdot u_O'(a_t,\hat{\beta}_t)\nonumber\\
    &=\sum_{t\in[T]}\eta_t\cdot u_O'(a_t,\hat{\beta}_t)=\sum_{t\in[T]}\eta_t\cdot\sum_{j\in[N]}\sum_{c\in[C]}p_c\cdot\hat{\beta}_t(j,c)\cdot u_O'(a_t,j,c)\label{eq:expected_utility_D(I')}.
\end{align}
By Eq.~\eqref{eq:U_O} and~\eqref{eq:expected_utility_D(I')}, we have that
\[
\E\limits_{(\alpha,f)\sim \D(\J)}[U_O-u_O'(\alpha, f)\cdot T]=\sum_{t\in [T]}\sum_{j\in[N]}\sum_{c\in[C]}p_c\cdot(\beta_t(j,c)-\eta_t\cdot T\cdot\hat{\beta}_t(j,c))\cdot u_O'(a_t,j,c).
\]
Under condition (i), we can choose $u_O'(a_t,j,c)$ to be $1$ if $\eta_t\cdot T\le 1$ and $-1$ if otherwise, then we get
\begin{align*}
\E\limits_{(\alpha,f)\sim \D(\J)}[U_O-u_O'(\alpha, f)\cdot T]&=\sum_{\substack{t\in [T] \\\textrm{s.t. } \eta_t\cdot T\le 1}}\sum_{c\in[C]}\sum_{j\in[N]}p_c\cdot(\beta_t(j,c)-\eta_t\cdot T\cdot\hat{\beta}_t(j,c))\\
&\quad-\sum_{\substack{t\in [T] \\\textrm{s.t. } \eta_t\cdot T> 1}}\sum_{c\in[C]}\sum_{j\in[N]}p_c\cdot(\beta_t(j,c)-\eta_t\cdot T\cdot\hat{\beta}_t(j,c))\\
&=\sum_{\substack{t\in [T] \\\textrm{s.t. } \eta_t\cdot T\le 1}}\sum_{c\in[C]}p_c\cdot(1-\eta_t\cdot T)-\sum_{\substack{t\in [T] \\\textrm{s.t. } \eta_t\cdot T> 1}}\sum_{c\in[C]}p_c\cdot(1-\eta_t\cdot T)\\
&\qquad\qquad\qquad\qquad\qquad\qquad\qquad\text{(By $\sum_{j\in[N]} \beta_t(j,c)=\sum_{j\in[N]} \hat{\beta}_t(j,c)=1$)}\\
&=\bigg(\sum_{\substack{t\in [T] \\\textrm{s.t. } \eta_t\cdot T\le 1}}1-\eta_t\cdot T\bigg)-\bigg(\sum_{\substack{t\in [T] \\\textrm{s.t. } \eta_t\cdot T> 1}}1-\eta_t\cdot T\bigg)\\
&=\sum_{t\in[T]}|1-\eta_t\cdot T|=\Omega(T).
\end{align*}
Under condition (ii), we can choose $u_O'(a_t,j,c)$ to be $1$ if $\beta_t(j,c)\ge\eta_t\cdot T\cdot\hat{\beta}_t(j,c)$ and $-1$ if otherwise, and we get
\begin{align*}
&\E\limits_{(\alpha,f)\sim \D(\J)}[U_O-u_O'(\alpha, f)\cdot T]\\
=&\sum_{t\in [T]}\sum_{j\in[N]}\sum_{c\in[C]}p_c\cdot|\beta_t(j,c)-\eta_t\cdot T\cdot \hat{\beta}_t(j,c)|\\
\ge&\sum_{t\in [T]}\sum_{j\in[N]}\sum_{c\in[C]}p_c\cdot(|\beta_t(j,c)-\hat{\beta}_t(j,c)| - |\eta_t\cdot T\cdot \hat{\beta}_t(j,c)-\hat{\beta}_t(j,c)|)\qquad\qquad\text{(Triangle inequality)}\\
=&\sum_{t\in [T]}\sum_{j\in[N]}\sum_{c\in[C]}p_c\cdot|\beta_t(j,c)-\hat{\beta}_t(j,c)|-\sum_{t\in [T]}|1-\eta_t\cdot T|\cdot\sum_{c\in[C]}\sum_{j\in[N]}p_c\cdot \hat{\beta}_t(j,c)\\
=&\sum_{t\in [T]}\sum_{j\in[N]}\sum_{c\in[C]}p_c\cdot|\beta_t(j,c)-\hat{\beta}_t(j,c)|-\sum_{t\in [T]}|1-\eta_t\cdot T|=\Omega(T).
\end{align*}
\end{proof}
By Claim~\ref{claim:large_L1_distance_cases}, it remains to handle the case when $\D(\J)$ satisfies both $\sum_{t\in T}|1-\eta_t\cdot T|=o(T)$ and $\sum_{t\in[T]}\sum_{j\in[N]}\sum_{c\in[C]} p_c\cdot |\beta_t(j,c)-\hat{\beta}_t(j,c)|=o(T)$. We will show that this case contradicts the assumption that $\A$ has $\Omega(T)$ polytope swap regret in repeated game $\G'$ (i.e., Eq.~\eqref{eq:poly_swap_regret_G'}).
\subsubsection*{Deriving a contradiction with $\Omega(T)$ polytope swap regret}
We define $\widehat{\Reg_{\poly}}(\G')$ and $\widehat{\Reg_{\poly}}'(\G')$ as follows (essentially, we get $\widehat{\Reg_{\poly}}(\G')$ by replacing all the $\beta_t$ in $\Reg_{\poly}(\G')$ with $\hat{\beta}_t$ and then get $\widehat{\Reg_{\poly}}'(\G')$ by re-weighting $\widehat{\Reg_{\poly}}(\G')$),
\begin{align}
&\widehat{\Reg_{\poly}}(\G'):=\min_{\forall t\in[T],\,\rho_t\in\class{\hat{\beta}_t}}\max_{\pi:[N]^{[C]}\to[N]^{[C]}}\sum_{t\in[T]}\sum_{f\in[N]^{[C]}}\rho_{t,f}\cdot u_L'(a_t,\pi(f))-u_L'(a_t,\hat{\beta}_t),\label{eq:hat_reg}\\
&\widehat{\Reg_{\poly}}'(\G'):=\min_{\forall t\in[T],\,\rho_t\in\class{\hat{\beta}_t}}\max_{\pi:[N]^{[C]}\to[N]^{[C]}}\sum_{t\in[T]}\eta_t\cdot T\cdot\left(\sum_{f\in[N]^{[C]}}\rho_{t,f}\cdot u_L'(a_t,\pi(f))-u_L'(a_t,\hat{\beta}_t)\right).\label{eq:hat_reg'}
\end{align}
Our plan to show that, $\sum_{t\in [T]}|1-\eta_t\cdot T|=o(T)$ and $\sum_{t\in[T]}\sum_{j\in[N]}\sum_{c\in[C]} p_c\cdot |\beta_t(j,c)-\hat{\beta}_t(j,c)|=o(T)$ together contradict with $\Reg_{\poly}(\G')=\Omega(T)$, is the following: First, we show that if $\sum_{t\in [T]}|1-\eta_t\cdot T|=o(T)$, then $\widehat{\Reg_{\poly}}(\G')\le\widehat{\Reg_{\poly}}'(\G')+o(T)$ (which is rather straightforward). Then, we prove that if $\sum_{t\in[T]}\sum_{j\in[N]}\sum_{c\in[C]} p_c\cdot |\beta_t(j,c)-\hat{\beta}_t(j,c)|=o(T)$, then $\Reg_{\poly}(\G')\le\widehat{\Reg_{\poly}}(\G')+o(T)$ (which is the most technical part of the proof, but intuitively this simply means that the polytope swap regret does not change much after ``small perturbations'' to the learner's mixed strategies $\beta_t$'s). Finally, we derive a contradiction by showing $\widehat{\Reg_{\poly}}'(\G')=0$ (intuitively, this is because $\widehat{\Reg_{\poly}}'(\G')$ is closely related to $\D(\J)$, and $\D(\J)$ is a distribution over pairs of the optimizer's mixed strategy and the learner's corresponding best response, which provides certain way to decompose $\hat{\beta}_t$'s such that there is no polytope swap regret in the sense of $\widehat{\Reg_{\poly}}'(\G')$).

Now we implement our plan through the following three claims.

\begin{claim}\label{claim:hat_reg_near_hat_reg'}
If $\D(\J)$ satisfies that $\sum_{t\in [T]}|1-\eta_t\cdot T|=o(T)$, then $\widehat{\Reg_{\poly}}(\G')\le\widehat{\Reg_{\poly}}'(\G')+o(T)$.
\end{claim}
\begin{proof}[Proof of Claim~\ref{claim:hat_reg_near_hat_reg'}]
First, we define functions 
\begin{align*}
&g((\rho_t)_{t\in[T]},\pi):=\sum_{t\in[T]}\sum_{f\in[N]^{[C]}}\rho_{t,f}\cdot u_L'(a_t,\pi(f))-u_L'(a_t,\hat{\beta}_t),\\
&g'((\rho_t)_{t\in[T]},\pi):=\sum_{t\in[T]}\eta_t\cdot T\cdot\left(\sum_{f\in[N]^{[C]}}\rho_{t,f}\cdot u_L'(a_t,\pi(f))-u_L'(a_t,\hat{\beta}_t)\right),
\end{align*}
and we notice that
\begin{align*}
    g((\rho_t)_{t\in[T]},\pi)-g'((\rho_t)_{t\in[T]},\pi)&=\sum_{t\in[T]}(1-\eta_t\cdot T)\cdot\left(\sum_{f\in[N]^{[C]}}\rho_{t,f}\cdot u_L'(a_t,\pi(f))-u_L'(a_t,\hat{\beta}_t)\right)\nonumber\\
    &\le\sum_{t\in[T]}|1-\eta_t\cdot T|\cdot\left|\sum_{f\in[N]^{[C]}}\rho_{t,f}\cdot u_L'(a_t,\pi(f))-u_L'(a_t,\hat{\beta}_t)\right|\nonumber\\
    &\le\sum_{t\in[T]}|1-\eta_t\cdot T|\cdot\left(\sum_{f\in[N]^{[C]}}\rho_{t,f}\cdot 1+1\right) \quad\quad\text{($u_L'$ has range $[-1,1]$)}\nonumber\\
    &=2\cdot\sum_{t\in[T]}|1-\eta_t\cdot T|=o(T).
\end{align*}
Therefore, we have that $\max_{\pi:[N]^{[C]}\to[N]^{[C]}} g((\rho_t)_{t\in[T]},\pi)\le\max_{\pi:[N]^{[C]}\to[N]^{[C]}} g'((\rho_t)_{t\in[T]},\pi)+o(T)$ for any $(\rho_t)_{t\in[T]}$, and
it follows that $$\min_{\forall t\in[T],\,\rho_t\in\class{\hat{\beta}_t}}\max_{\pi:[N]^{[C]}\to[N]^{[C]}} g((\rho_t)_{t\in[T]},\pi)\le \min_{\forall t\in[T],\,\rho_t\in\class{\hat{\beta}_t}}\max_{\pi:[N]^{[C]}\to[N]^{[C]}} g'((\rho_t)_{t\in[T]},\pi)+o(T).$$
The proof finishes because the L.H.S.~is exactly $\widehat{\Reg_{\poly}}(\G')$, and the R.H.S.~is exactly $\widehat{\Reg_{\poly}}'(\G')$.
\end{proof}

\begin{claim}\label{claim:reg_near_hat_reg}
If $\D(\J)$ satisfies that $\sum_{t\in[T]}\sum_{j\in[N]}\sum_{c\in[C]} p_c\cdot |\beta_t(j,c)-\hat{\beta}_t(j,c)|=o(T)$, then it holds that $\Reg_{\poly}(\G')\le\widehat{\Reg_{\poly}}(\G')+o(T)$.
\end{claim}
\begin{proof}[Proof of Claim~\ref{claim:reg_near_hat_reg}]
Let $(\hat{\rho}_t)_{t\in[T]}$ and $\hat{\pi}$ be the optimal solution to the min-max program that defines $\widehat{\Reg_{\poly}}(\G')$ (see Eq.~\eqref{eq:hat_reg}). We will construct $\rho_t\in\class{\beta_t}$ for all $t\in[T]$ such that
\begin{align}
&\max_{\pi:[N]^{[C]}\to[N]^{[C]}}\sum_{t\in[T]}\sum_{f\in[N]^{[C]}}\rho_{t,f}\cdot u_L'(a_t,\pi(f))-u_L'(a_t,\beta_t)\nonumber\\
\le&\sum_{t\in[T]}\sum_{f\in[N]^{[C]}}\hat{\rho}_{t,f}\cdot u_L'(a_t,\hat{\pi}(f))-u_L'(a_t,\hat{\beta}_t)+o(T).\label{eq:reg_near_hat_reg_goal}
\end{align}
Notice that this will prove the claim because the L.H.S.~trivially upper bounds $\Reg_{\poly}(\G')$ by definition (see Eq.~\eqref{eq:poly_swap_regret_G'}), and the R.H.S.~is obviously $\widehat{\Reg_{\poly}}(\G')+o(T)$.

We start by constructing $\rho_t\in\class{\beta_t}$ for each $t\in[T]$ with bounded statistical distance from $\hat{\rho}_t$ in the following lemma.
\begin{lemma}\label{lemma:construct_rho}
For any $\hat{\rho}_t\in\class{\hat{\beta}_t}$, there exists $\rho_t\in\class{\beta_t}$ such that 
\begin{equation}\label{eq:construct_rho}
\sum_{f\in[N]^{[C]}}|\rho_{t,f}-\hat{\rho}_{t,f}|\le2\cdot\sum_{j\in[N]}\sum_{c\in[C]}|\beta_t(j,c)-\hat{\beta}_t(j,c)|.
\end{equation}
\end{lemma}
\begin{proof}[Proof of Lemma~\ref{lemma:construct_rho}]
We construct $\rho_{t}$ in two steps: First, we construct $\rho_{t,f}^{(1)}$ for all $f\in[N]^{[C]}$ that satisfy
\begin{align*}
&\textrm{Property (i): }\rho_{t,f}^{(1)}\le\hat{\rho}_{t,f},\\
&\textrm{Property (ii): }1-\sum_{f\in[N]^{[C]}}\rho_{t,f}^{(1)}\le\sum_{j\in[N]}\sum_{c\in[C]}|\beta_t(j,c)-\hat{\beta}_t(j,c)|.
\end{align*}
Then, we will construct $\rho_{t,f}^{(2)}$ for all $f\in[N]^{[C]}$ such that $\rho_{t,f}=\rho_{t,f}^{(1)}+\rho_{t,f}^{(2)}$. Finally, we will prove that the constructed $\rho_t$ satisfies Ineq.~\eqref{eq:construct_rho}.

\subsubsection*{Constructing $\rho_{t,f}^{(1)}$}
We order all the $f\in[N]^{[C]}$ arbitrarily, which we denote by $f_1,\dots,f_{N^C}$, and then we use Procedure~\ref{procedure:construct_rho} to construct $\rho_{t,f}^{(1)}$ for all $f\in[N]^{[C]}$. The main idea of Procedure~\ref{procedure:construct_rho} is simply iterating over $f_1,\dots,f_{N^C}$ and letting each $\rho_{t,f_i}^{(1)}$ be the maximum value that is ``capped'' by $\hat{\rho}_{t,f_i}$ and $\beta_t$.
\begin{algorithm}[ht]
\SetAlgoLined
\SetAlgorithmName{Procedure}~~
Initialize $\rho_{t,f}^{(1)}=0$ for all $f\in[N]^{[C]}$, $\rho_t^{(1)}(j,c)=0$ for all $j\in[N]$ and $c\in[C]$, and $E=\{f\in[N]^{[C]}\mid \beta_{t}(f(c),c)>0\}$\;
\For{$i\in[N^C]$}{
    \If{$f_i\in E$}{
        $\rho_{t,f_i}^{(1)}\gets \min\{\hat{\rho}_{t,f_i},\,\min_{c\in[C]}\beta_{t}(f_i(c),c)-\rho_t^{(1)}(f_i(c),c)\}$\;\label{line:construct_rho}
        \For{$c\in[C]$}{
            $\rho_t^{(1)}(f_i(c),c)\gets\rho_t^{(1)}(f_i(c),c)+\rho_{t,f_i}^{(1)}$\;\label{line:track_marginal}
            \If{$\beta_{t}(f_i(c),c)=\rho_t^{(1)}(f_i(c),c)$}{
                $E=E\setminus\{f\in[N]^{[C]}\mid f(c)=f_i(c)\}$\;\label{line:track_active_f}
            }
        }
    }
}
\caption{Constructing $\rho_{t,f}^{(1)}$ for all $f\in[N]^{[C]}$}
\label{procedure:construct_rho}
\end{algorithm}
Specifically, throughout Procedure~\ref{procedure:construct_rho}, each variable $\rho_t^{(1)}(j,c)$ keeps track of the cumulative probability of action $j$ conditioned on context $c$ according to the $\rho_{t,f}^{(1)}$'s, i.e., we keep the invariant 
\[
\textrm{Invariant (1): }\rho_t^{(1)}(j,c)=\sum_{f\in[N]^{[C]}\textrm{ s.t. }f(c)=j}\rho_{t,f}^{(1)},
\]
and the subset $E$ maintains all the $f\in[N]^{[C]}$ that satisfy $\rho_t^{(1)}(f(c),c)<\beta_{t}(f(c),c)$ for all $c\in[C]$, i.e., $E$ consists of those $f\in[N]^{[C]}$ for which we can strictly increase $\rho_{t,f}^{(1)}$ while keeping the invariant
\[
\textrm{Invariant (2): }\rho_t^{(1)}(f(c),c)\le\beta_{t}(f(c),c) \textrm{ for all } c\in[C].
\]
The outer for loop iterates over $f\in[N]^{[C]}$ and checks if $f\in E$, and if so, Line~\ref{line:construct_rho} increases $\rho_{t,f}^{(1)}$ to the maximum possible value that does not exceed $\hat{\rho}_{t,f}$ or violate $\rho_t^{(1)}(f(c),c)\le\beta_{t}(f(c),c)$ for any $c\in[C]$. Then, Line~\ref{line:track_marginal} and Line~\ref{line:track_active_f} update $\rho_t^{(1)}(j,c)$'s and $E$ correspondingly to maintain the invariants mentioned above.

Note that Property (i), $\rho_{t,f}^{(1)}\le\hat{\rho}_{t,f}$ for all $f\in[N]^{[C]}$, is a trivial consequence of Line~\ref{line:construct_rho}. Now we prove $\sum_{f\in[N]^{[C]}}\hat{\rho}_{t,f}-\rho_{t,f}^{(1)}\le\sum_{j\in[N]}\sum_{c\in[C]}|\beta_t(j,c)-\hat{\beta}_t(j,c)|$, which is equivalent to Property (ii) since $\sum_{f\in[N]^{[C]}}\hat{\rho}_{t,f}=1$. To this end, we label some of $f\in[N]^{[C]}$ as follows: For each $f\in[N]^{[C]}$, consider the very first time $f$ is removed from $E$ in Procedure~\ref{procedure:construct_rho}, which can only happen at Line~\ref{line:track_active_f} when the condition $\beta_t(f(c),c)=\rho_t^{(1)}(f(c),c)$ is true for some $c\in[C]$, in which case we label $f$ with $(f(c),c)$. If $f$ is never removed from $E$, then we keep $f$ unlabeled. Notice that if $f$ (say $f=f_i$ for $i\in[N^C]$) is never removed from $E$, then in particular, $f_i$ is not removed during the $i$-th iteration of the outer for loop of Procedure~\ref{procedure:construct_rho}, which implies $\rho_{t,f_i}^{(1)}=\hat{\rho}_{t,f_i}$ (because otherwise by Line~\ref{line:construct_rho}, $\rho_{t,f_i}^{(1)}=\beta_t(f_i(c),c)-\rho_t^{(1)}(f_i(c),c)$ for some $c\in[C]$, and thus $\beta_t(f_i(c),c)=\rho_t^{(1)}(f_i(c),c)$ after the update at Line~\ref{line:track_marginal}, and then $f_i$ should be removed from $E$ in that iteration). Moreover, for each $(j,c)\in[N]\times[C]$ such that there exists some $g\in[N]^{[C]}$ that is labeled by $(j,c)$ (which implies $\beta_t(j,c)=\rho_t^{(1)}(j,c)$ because this is the only way $g$ gets labeled by $(j,c)$ in our labeling process), we have that
\begin{align*}
&\sum_{\substack{f\in[N]^{[C]} \\ \textrm{labeled by } (j,c)}}\hat{\rho}_{t,f}-\rho^{(1)}_{t,f}\\
&\le\sum_{\substack{f\in[N]^{[C]} \\ \textrm{s.t. } f(c)=j}}\hat{\rho}_{t,f}-\rho^{(1)}_{t,f}&&\text{(By Property (i), and $f$ is labeled by $(j,c)$ only if $f(c)=j$)}\\
&=\hat{\beta}_t(j,c)-\rho_t^{(1)}(j,c)&&\text{(By $\hat{\beta}_t(j,c)=\sum_{\substack{f\in[N]^{[C]} \\ \textrm{s.t. } f(c)=j}}\hat{\rho}_{t,f}$, and $\rho_t^{(1)}(j,c)=\sum_{\substack{f\in[N]^{[C]} \\ \textrm{s.t.} f(c)=j}}\rho^{(1)}_{t,f}$)}\\
&=\hat{\beta}_t(j,c)-\beta_t(j,c)&&\text{(By $\rho_t^{(1)}(j,c)=\beta_t(j,c)$)}.
\end{align*}
Therefore, we have that
\begin{align*}
    \sum_{f\in[N]^{[C]}}\hat{\rho}_{t,f}-\rho^{(1)}_{t,f}&=\sum_{\substack{f\in[N]^{[C]} \\ \textrm{unlabeled}}}\hat{\rho}_{t,f}-\rho^{(1)}_{t,f} + \sum_{\substack{j\in[N],c\in[C] \\ \textrm{s.t. } \exists g\in[N]^{[C]} \\ \textrm{labeled by } (j,c)}}\sum_{\substack{f\in[N]^{[C]} \\ \textrm{labeled by } (j,c)}}\hat{\rho}_{t,f}-\rho^{(1)}_{t,f}\\
    &=0 + \sum_{\substack{j\in[N],c\in[C] \\ \textrm{s.t. } \exists g\in[N]^{[C]} \\ \textrm{labeled by } (j,c)}}\hat{\beta}_t(j,c)-\beta_t(j,c)\le\sum_{j\in[N]}\sum_{c\in[C]}|\hat{\beta}_t(j,c)-\beta_t(j,c)|.
\end{align*}

\subsubsection*{Constructing $\rho_{t,f}^{(2)}$}
Next, we proceed to construct $\rho_{t,f}^{(2)}$ for all $f\in[N]^{[C]}$ such that $\rho_{t}$ defined by $\rho_{t,f}=\rho_{t,f}^{(1)}+\rho_{t,f}^{(2)}$ is indeed equivalent to $\beta_t$, i.e., $\rho_{t}(j,c)=\beta_{t}(j,c)$ for all $j\in[N]$ and $c\in[C]$ (where $\rho_{t}(j,c):=\sum_{f\in[N]^{[C]}\textrm{ s.t. }f(c)=j}\rho_{t,f}$). First, note that if $\sum_{f\in[N]^{[C]}}\rho_{t,f}^{(1)}=1$, then for all $c\in[C]$
\begin{align*}
    \sum_{j\in[N]}\rho_t^{(1)}(j,c)=\sum_{j\in[N]}\sum_{f\in[N]^{[C]}\textrm{ s.t. }f(c)=j}\rho_{t,f}^{(1)}
    =\sum_{f\in[N]^{[C]}}\rho_{t,f}^{(1)}
    =1=\sum_{j\in[N]}\beta_t(j,c),
\end{align*}
and by Invariant (2), we must have $\rho_t^{(1)}(j,c)=\beta_t(j,c)$ for all $j\in[N]$. Therefore, in this case, we can simply let $\rho_{t,f}^{(2)}=0$, and then $\rho_{t}$ defined by $\rho_{t,f}=\rho_{t,f}^{(1)}+\rho_{t,f}^{(2)}=\rho_{t,f}^{(1)}$ is equivalent to $\beta_t$ since $\rho_t(j,c)=\rho_t^{(1)}(j,c)=\beta_t(j,c)$ for all $j\in[N],c\in[C]$. If otherwise $\sum_{f\in[N]^{[C]}}\rho_{t,f}^{(1)}<1$ ($\sum_{f\in[N]^{[C]}}\rho_{t,f}^{(1)}>1$ is impossible because $\sum_{f\in[N]^{[C]}}\rho_{t,f}^{(1)}\le\sum_{f\in[N]^{[C]}}\hat{\rho}_{t,f}=1$ by Property (i)), the construction is also straightforward: We first let $\rho_{t}^{(2)}(j,c)=\beta_{t}(j,c)-\rho_{t}^{(1)}(j,c)$ for all $j\in[N],c\in[C]$ (note that $\rho_{t}^{(2)}(j,c)\ge0$ by Invariant (2)), and then we let $\rho_{t,f}^{(2)}=\frac{\prod_{c\in[C]}\rho_{t}^{(2)}(f(c),c)}{(1-\sum_{f\in[N]^{[C]}}\rho_{t,f}^{(1)})^{C-1}}$ for all $f\in[N]^{[C]}$ (note this value is well-defined and non-negative, since $\sum_{f\in[N]^{[C]}}\rho_{t,f}^{(1)}<1$, and $\rho_{t}^{(2)}(j,c)$'s are non-negative). By this construction, we have that
\begin{align*}
\sum_{f\in[N]^{[C]}\textrm{ s.t. }f(c)=j}\rho_{t,f}^{(2)}&=\rho_{t}^{(2)}(j,c)\cdot\frac{\sum_{f'\in[N]^{[C]\setminus\{c\}}}\prod_{c'\in[C]\setminus\{c\}}\rho_{t}^{(2)}(f'(c'),c')}{(1-\sum_{f\in[N]^{[C]}}\rho_{t,f}^{(1)})^{C-1}}\\
&=\rho_{t}^{(2)}(j,c)\cdot\frac{\prod_{c'\in[C]\setminus\{c\}}(\sum_{j\in[N]}\rho_{t}^{(2)}(j,c'))}{(1-\sum_{f\in[N]^{[C]}}\rho_{t,f}^{(1)})^{C-1}}\\
&=\rho_{t}^{(2)}(j,c)\cdot\frac{\prod_{c'\in[C]\setminus\{c\}}(\sum_{j\in[N]}\beta_{t}(j,c')-\rho_{t}^{(1)}(j,c'))}{(1-\sum_{f\in[N]^{[C]}}\rho_{t,f}^{(1)})^{C-1}}\\
&=\rho_{t}^{(2)}(j,c)\cdot\frac{\prod_{c'\in[C]\setminus\{c\}}(1-\sum_{j\in[N]}\rho_{t}^{(1)}(j,c'))}{(1-\sum_{f\in[N]^{[C]}}\rho_{t,f}^{(1)})^{C-1}}\\
&=\rho_{t}^{(2)}(j,c)\cdot\frac{\prod_{c'\in[C]\setminus\{c\}}(1-\sum_{f\in[N]^{[C]}}\rho_{t,f}^{(1)})}{(1-\sum_{f\in[N]^{[C]}}\rho_{t,f}^{(1)})^{C-1}}=\rho_{t}^{(2)}(j,c),
\end{align*}
and hence, in this case, $\rho_{t}$ defined by $\rho_{t,f}=\rho_{t,f}^{(1)}+\rho_{t,f}^{(2)}$ is also equivalent to $\beta_t$ since $\rho_{t}(j,c)=\rho_{t}^{(1)}(j,c)+\rho_{t}^{(2)}(j,c)=\rho_{t}^{(1)}(j,c)+(\beta_{t}(j,c)-\rho_{t}^{(1)}(j,c))=\beta_{t}(j,c)$ for all $j\in[N],c\in[C]$.

\subsubsection*{Proving Ineq.~\eqref{eq:construct_rho}}
Having constructed $\rho_t$, we are ready to show that it satisfies Ineq.~\eqref{eq:construct_rho}. To this end, for all $f\in[N]^{[C]}$, we let $\hat{\rho}^{(2)}_{t,f}:=\hat{\rho}_{t,f}-\rho^{(1)}_{t,f}$, which is non-negative by Property (i), and we derive that
\begin{align}\label{eq:bound_l1_hat_rho_vs_rho}
\sum_{f\in[N]^{[C]}}|\rho_{t,f}-\hat{\rho}_{t,f}|&=\sum_{f\in[N]^{[C]}}|\rho^{(1)}_{t,f}+\rho^{(2)}_{t,f}-\hat{\rho}_{t,f}|\nonumber\\
&=\sum_{f\in[N]^{[C]}}|\rho^{(2)}_{t,f}-\hat{\rho}^{(2)}_{t,f}|\nonumber\\
&\le\sum_{f\in[N]^{[C]}}|\rho^{(2)}_{t,f}|+|\hat{\rho}^{(2)}_{t,f}|\nonumber\\
&=\sum_{f\in[N]^{[C]}}\rho^{(2)}_{t,f}+\hat{\rho}^{(2)}_{t,f}.
\end{align}
Furthermore, we notice that 
\begin{align*}
&\sum_{f\in[N]^{[C]}}\hat{\rho}^{(2)}_{t,f}=\sum_{f\in[N]^{[C]}}\hat{\rho}_{t,f}-\rho^{(1)}_{t,f}=1-\sum_{f\in[N]^{[C]}}\rho^{(1)}_{t,f},\\
&\sum_{f\in[N]^{[C]}}\rho^{(2)}_{t,f}=\sum_{f\in[N]^{[C]}}\beta_{t,f}-\rho^{(1)}_{t,f}=1-\sum_{f\in[N]^{[C]}}\rho^{(1)}_{t,f}.
\end{align*}
and then it follows by Ineq.~\eqref{eq:bound_l1_hat_rho_vs_rho} that
$\sum_{f\in[N]^{[C]}}|\rho_{t,f}-\hat{\rho}_{t,f}|\le 2\cdot(1-\sum_{f\in[N]^{[C]}}\rho^{(1)}_{t,f})$, which is at most $2\cdot\sum_{j\in[N]}\sum_{c\in[C]}|\beta_t(j,c)-\hat{\beta}_t(j,c)|$ by Property (ii).
\end{proof}
Now we prove that the $\rho_t$'s constructed by Lemma~\ref{lemma:construct_rho} satisfy Ineq.~\eqref{eq:reg_near_hat_reg_goal}, which will finish the proof of Claim~\ref{claim:reg_near_hat_reg} as we explained. To this end, let $\pi:[N]^{[C]}\to[N]^{[C]}$ denote the optimal solution to the maximization program on L.H.S.~of Ineq.~\eqref{eq:reg_near_hat_reg_goal}. Since $(\hat{\rho}_t)_{t\in[T]}$ and $\hat{\pi}$ are optimal solutions to the min-max program of Eq.~\eqref{eq:hat_reg}, we have that
\begin{equation*}
    \sum_{t\in[T]}\sum_{f\in[N]^{[C]}}\hat{\rho}_{t,f}\cdot u_L'(a_t,\hat{\pi}(f))-u_L'(a_t,\hat{\beta}_t)\ge\sum_{t\in[T]}\sum_{f\in[N]^{[C]}}\hat{\rho}_{t,f}\cdot u_L'(a_t,\pi(f))-u_L'(a_t,\hat{\beta}_t).
\end{equation*}
It follows that
\begin{align}\label{eq:reg_near_hat_reg_final_ineq}
&\left(\sum_{t\in[T]}\sum_{f\in[N]^{[C]}}\rho_{t,f}\cdot u_L'(a_t,\pi(f))-u_L'(a_t,\beta_t)\right)-\left(\sum_{t\in[T]}\sum_{f\in[N]^{[C]}}\hat{\rho}_{t,f}\cdot u_L'(a_t,\hat{\pi}(f))-u_L'(a_t,\hat{\beta}_t)\right)\nonumber\\
\le&\left(\sum_{t\in[T]}\sum_{f\in[N]^{[C]}}\rho_{t,f}\cdot u_L'(a_t,\pi(f))-u_L'(a_t,\beta_t)\right)-\left(\sum_{t\in[T]}\sum_{f\in[N]^{[C]}}\hat{\rho}_{t,f}\cdot u_L'(a_t,\pi(f))-u_L'(a_t,\hat{\beta}_t)\right)\nonumber\\
=&\sum_{t\in[T]}\sum_{f\in[N]^{[C]}}\rho_{t,f}\cdot (u_L'(a_t,\pi(f))-u_L'(a_t,f))-\sum_{t\in[T]}\sum_{f\in[N]^{[C]}}\hat{\rho}_{t,f}\cdot (u_L'(a_t,\pi(f))- u_L'(a_t,f))\nonumber\\
=&\sum_{t\in[T]}\sum_{f\in[N]^{[C]}}(\rho_{t,f}-\hat{\rho}_{t,f})\cdot (u_L'(a_t,\pi(f))-u_L'(a_t,f))\nonumber\\
\le&\sum_{t\in[T]}\sum_{f\in[N]^{[C]}}|\rho_{t,f}-\hat{\rho}_{t,f}|\cdot |u_L'(a_t,\pi(f))-u_L'(a_t,f)|\nonumber\\
\le&2\cdot \sum_{t\in[T]}\sum_{f\in[N]^{[C]}}|\rho_{t,f}-\hat{\rho}_{t,f}|\le4\cdot\sum_{t\in[T]}\sum_{j\in[N]}\sum_{c\in[C]}|\beta_t(j,c)-\hat{\beta}_t(j,c)|,
\end{align}
where the penultimate inequality is because $u_L'$ has range $[-1,1]$, and the last inequality is by Lemma~\ref{lemma:construct_rho}. Furthermore, let $p_{\min}:=\min_{c\in[C]}p_c$ (note $p_{\min}=\Omega(1)$ because we assume $\G$ has no negligible context in the statement of Theorem~\ref{thm:exploitable-poly-swap-regret}, and Ineq.~\eqref{eq:p_c} will be the only place where we use this assumption), and notice that
\begin{equation}\label{eq:p_c}
    \sum_{t\in[T]}\sum_{j\in[N]}\sum_{c\in[C]} |\beta_t(j,c)-\hat{\beta}_t(j,c)|\le\sum_{t\in[T]}\sum_{j\in[N]}\sum_{c\in[C]} \frac{p_c}{p_{\min}}\cdot |\beta_t(j,c)-\hat{\beta}_t(j,c)|=o(T/p_{\min})=o(T).
\end{equation}
The proof of Claim~\ref{claim:reg_near_hat_reg} finishes by combining Ineq.~\eqref{eq:reg_near_hat_reg_final_ineq} and~\eqref{eq:p_c}.
\end{proof}
\begin{claim}\label{claim:hat_reg'_is_0}
$\widehat{\Reg_{\poly}}'(\G')=0$.
\end{claim}
\begin{proof}[Proof of Claim~\ref{claim:hat_reg'_is_0}]
We will prove
$\sum_{t\in[T]}\eta_t\cdot T\cdot\left(\sum_{f\in[N]^{[C]}}\hat{\beta}_{t,f}\cdot u_L'(a_t,\pi(f))-u_L'(a_t,\hat{\beta}_t)\right)\le 0$ for all $\pi:[N]^{[C]}\to[N]^{[C]}$, which implies $\widehat{\Reg_{\poly}}'(\G')\le 0$ because $\hat{\beta}$ is a feasible solution for the minimization part of the min-max program in Eq.~\eqref{eq:hat_reg'} (and note that $\widehat{\Reg_{\poly}}'(\G')\ge 0$ is obvious because the identity map is a feasible solution for the maximization part).

For any $\pi:[N]^{[C]}\to[N]^{[C]}$, we notice that
\begin{align*}
&\sum_{t\in[T]}\eta_t\cdot T\cdot\left(\sum_{f\in[N]^{[C]}}\hat{\beta}_{t,f}\cdot u_L'(a_t,\pi(f))-u_L'(a_t,\hat{\beta}_t)\right)\\
=&\sum_{t\in[T]\textrm{ s.t. }\eta_t>0}\eta_t\cdot T\cdot\left(\sum_{f\in[N]^{[C]}}\hat{\beta}_{t,f}\cdot u_L'(a_t,\pi(f))-u_L'(a_t,\hat{\beta}_t)\right)\\
=&\sum_{t\in[T]\textrm{ s.t. }\eta_t>0}\eta_t\cdot T\cdot\left(\sum_{f\in[N]^{[C]}}\hat{\beta}_{t,f}\cdot u_L'(a_t,\pi(f))-\sum_{f\in[N]^{[C]}}\hat{\beta}_{t,f}\cdot u_L'(a_t,f)\right)\\
=&T\cdot\sum_{f\in[N]^{[C]}}\left(\sum_{t\in[T]\textrm{ s.t. }\eta_t>0}\eta_t\cdot\hat{\beta}_{t,f}\cdot u_L'(a_t,\pi(f))-\sum_{t\in[T]\textrm{ s.t. }\eta_t>0}\eta_t\cdot\hat{\beta}_{t,f}\cdot u_L'(a_t,f)\right),
\end{align*}
and hence, it suffices to prove that $\sum_{t\in[T]\textrm{ s.t. }\eta_t>0}\eta_t\cdot\hat{\beta}_{t,f}\cdot u_L'(a_t,f)\ge\sum_{t\in[T]\textrm{ s.t. }\eta_t>0}\eta_t\cdot\hat{\beta}_{t,f}\cdot u_L'(a_t,\pi(f))$ for all $f\in[N]^{[C]}$. Recall that by our construction of $\J$, $f$ is a best response to the optimizer's mixed strategy $\alpha$ for any $(\alpha,f)\in\J$, and observe that
\begin{align*}
&\sum_{\substack{t\in[T]\\\textrm{s.t. }\eta_t>0}}\eta_t\cdot\hat{\beta}_{t,f}\cdot u_L'(a_t,f)\\
=&\sum_{\substack{t\in[T]\\\textrm{s.t. }\eta_t>0}}\sum_{(\alpha,\hat{f})\in\J\textrm{ s.t. } \hat{f}=f}\Pr[\alpha,f]\cdot \alpha_{a_t}\cdot u_L'(a_t,f) &&\text{(By definition of $\eta_t,\hat{\beta}_{t,f}$)}\\
=&\sum_{(\alpha,\hat{f})\in\J\textrm{ s.t. } \hat{f}=f}\Pr[\alpha,f]\cdot\sum_{\substack{t\in[T]\\\textrm{s.t. }\eta_t>0}}\alpha_{a_t}\cdot u_L'(a_t,f)\\
=&\sum_{(\alpha,\hat{f})\in\J\textrm{ s.t. } \hat{f}=f}\Pr[\alpha,f]\cdot\sum_{\substack{t\in[T]\\\textrm{s.t. }\eta_t>0}}u_L'(\alpha,f)\\
\ge&\sum_{(\alpha,\hat{f})\in\J\textrm{ s.t. } \hat{f}=f}\Pr[\alpha,f]\cdot\sum_{\substack{t\in[T]\\\textrm{s.t. }\eta_t>0}}u_L'(\alpha,\pi(f)) &&\text{(Since $(\alpha,f)\in\J$)}\\
=&\sum_{\substack{t\in[T]\\\textrm{s.t. }\eta_t>0}}\eta_t\cdot\hat{\beta}_{t,f}\cdot u_L'(a_t,\pi(f)),
\end{align*}
where the last equality follows by the same derivation as the first three equalities.
\end{proof}
To finish the proof of Theorem~\ref{thm:exploitable-poly-swap-regret}, we combine Claim~\ref{claim:reg_near_hat_reg},~\ref{claim:hat_reg_near_hat_reg'},~\ref{claim:hat_reg'_is_0} and get $\Reg_{\poly}(\G')\le\widehat{\Reg_{\poly}}(\G')+o(T)\le\widehat{\Reg_{\poly}}'(\G')+o(T)=o(T)$, which contradicts $\Reg_{\poly}(\G')=\Omega(T)$.
\end{proof}

\subsection{The case of negligible context}\label{section:negligible_p_c}
Now we give an example which consists of a Bayesian game (with a negligible context), a sequence of $T$ actions played by the optimizer, and a sequence of $T$ pure strategies played by a reward-based learner, and we show that the learner has $\Omega(T)$ polytope swap regret but is not exploitable.
\begin{example}\label{example:negligible_context}
\normalfont
In Bayesian game $\G(M,N,C,\D,u_O,u_L)$, we let $M=N=C=2$, and the probabilities $(p_c)_{c\in[C]}$ that specify $\D$ are given as follows: $p_1=o(1)$ (i.e., $p_1$ can be anything as long as it is $o(1)$, such as $0$ or $\frac{1}{T}$) and $p_2=1-p_1$. $u_O$ will not play any role, and for completeness we can let $u_O(i,j,c)=0$ for all $i\in[M],j\in[N],c\in[C]$. $u_L$ is specified as follows: $u_L(i,j,1)=0$ for all $i\in[M],j\in[N]$ and $u_L(1,1,2)=1,\, u_L(1,2,2)=0,\, u_L(2,1,2)=0,\, u_L(2,2,2)=\frac{1}{2}.$

$\G$ is repeated for $T$ rounds. The optimizer plays action $1$ in the first $\frac{T}{2}$ rounds and plays action $2$ in the second $\frac{T}{2}$ rounds. In the first $\frac{T}{2}$ rounds, the learner's reward-based algorithm $\A$ plays action $1$ regardless of the context, and in the second $\frac{T}{2}$ rounds, $\A$ plays action $2$ conditioned on context $1$ and plays action $1$ conditioned on context $2$.
\end{example}

\begin{proposition}\label{proposition:negligible_context}
In Example~\ref{example:negligible_context}, the learner's reward-based algorithm $\A$ has $\Omega(T)$ polytope swap regret, but it is not exploitable.
\end{proposition}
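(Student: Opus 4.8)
The plan is to prove the two assertions of Proposition~\ref{proposition:negligible_context} separately. For the regret lower bound, the key observation is that at every round $\A$ plays a \emph{pure} strategy: writing $f_{11}$ for the strategy sending both contexts to action $1$, and $f_{21}$ for the strategy with $f(1)=2,f(2)=1$, the algorithm plays $f_{11}$ in rounds $1,\dots,T/2$ and $f_{21}$ in rounds $T/2+1,\dots,T$. Hence each equivalence class $\class{\beta_t}$ is a singleton, so the outer minimum in Definition~\ref{def:S-poly-regret} is vacuous and we only need to exhibit one good swap map $\pi:[N]^{[C]}\to[N]^{[C]}$. Since $u_L(\cdot,\cdot,1)\equiv 0$, the learner's conditional utility satisfies $u_L(i,f)=p_2\,u_L(i,f(2),2)$ and depends only on $f(2)$. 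In the first half (optimizer plays $1$) the learner already uses the optimal response $f(2)=1$, earning $p_2$ per round, so $\pi$ may fix $f_{11}$; in the second half (optimizer plays $2$) the learner earns $p_2\,u_L(2,1,2)=0$ per round, whereas remapping $f_{21}\mapsto f_{22}$ (any $f$ with $f(2)=2$) yields $p_2\,u_L(2,2,2)=p_2/2$ per round. Thus $\Reg_{\poly}(\G)\ge \tfrac{T}{2}\cdot\tfrac{p_2}{2}=\tfrac{T}{4}(1-o(1))=\Omega(T)$.

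For non-exploitability, I would take an arbitrary Bayesian game $\G'(M',N,C,\D,u_O',u_L')$ with payoffs in $[-1,1]$ and an arbitrary sequence $i_1',\dots,i_T'\in[M']$ satisfying condition (i) of Definition~\ref{def:exploitability} (so $\A$ again uses $\beta_1,\dots,\beta_T$ in $\G'$), and show the optimizer's total utility $U_O'$ is at most $V(\G')\,T+o(T)$. The structural fact that makes this work is that $\A$ plays action $1$ in context $2$ at \emph{every} round, i.e.\ $\beta_t(1,2)=1$ for all $t$. Combined with $|u_O'|\le 1$ and $p_1=o(1)$, this yields $U_O'=\sum_{t\in[T]} u_O'(i_t',\beta_t)\le T p_1+p_2\sum_{t\in[T]} u_O'(i_t',1,2)=Tp_1+p_2\,T\cdot u_O'(\hat\alpha,1,2)$, where $\hat\alpha\in\Delta([M'])$ denotes the empirical distribution of $i_1',\dots,i_T'$.

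It remains to lower bound $V(\G')$ using $\hat\alpha$ as a committed strategy. By condition (i), the learner's conditional payoff against $\hat\alpha$ in context $2$ is the time-average of $u_L(i_t,\cdot,2)$: action $1$ scores $\tfrac12\cdot 1+\tfrac12\cdot 0=\tfrac12$ while action $2$ scores $\tfrac12\cdot 0+\tfrac12\cdot\tfrac12=\tfrac14$, so action $1$ is the \emph{unique} best response in context $2$ to $\hat\alpha$. Hence, for any best response $f^*$ of the learner to $\hat\alpha$, the context-$2$ part of $u_O'(\hat\alpha,f^*)$ is exactly $p_2\,u_O'(\hat\alpha,1,2)$ and the context-$1$ part is at least $-p_1$, so by Definition~\ref{def:bayesian_stackelberg} $V(\G')\ge u_O'(\hat\alpha,f^*)\ge -p_1+p_2\,u_O'(\hat\alpha,1,2)$. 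Combining with the upper bound on $U_O'$ gives $U_O'/T\le V(\G')+2p_1=V(\G')+o(1)$, i.e.\ $\A$ is not exploitable.

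The only genuinely non-routine point is the observation in the last paragraph: the optimizer's own empirical play $\hat\alpha$ automatically makes action $1$ the learner's best response in context $2$, which is what forces the one-shot benchmark $V(\G')$ to capture essentially all the utility the optimizer extracts online. This is exactly where the numerical values in Example~\ref{example:negligible_context} (the $1$ versus $\tfrac12$ payoffs, diluted by the $T/2$--$T/2$ split of the optimizer's actions) are chosen to make the inequality go through; the rest is bookkeeping, with every error term $o(1)$ traceable to $p_1=o(1)$.
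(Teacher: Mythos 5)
Your proof is correct and follows essentially the same route as the paper's: for the regret lower bound you observe that the learner's per-round mixed strategies are pure (hence each $\class{\beta_t}$ is a singleton and the outer minimum is vacuous) and exhibit the swap $f_{21}\mapsto f_{22}$, which is exactly the paper's swap of action $1$ to action $2$ conditioned on context $2$ in the second half; for non-exploitability you use the empirical distribution $\hat\alpha$ of $i_1',\dots,i_T'$, check that action $1$ is the learner's best response in context $2$ to $\hat\alpha$ (the same $1/2$ vs.\ $1/4$ computation), and absorb context $1$ into $o(1)$ error via $p_1=o(1)$ and $|u_O'|\le 1$, just as the paper does.
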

\begin{proof}
Observe that in Example~\ref{example:negligible_context}, the learner uses a pure strategy in the first $\frac{T}{2}$ rounds and uses another pure strategy in the second $\frac{T}{2}$ rounds. We show that the learner would have achieved $\Omega(T)$ more utility by playing a different pure strategy in the second $\frac{T}{2}$ rounds, and hence the learner indeed has $\Omega(T)$ polytope swap regret. Specifically, in Example~\ref{example:negligible_context}, the learner's total utility in the second $\frac{T}{2}$ rounds is $\frac{T}{2}\cdot (p_1\cdot u_L(2,2,1)+p_2\cdot u_L(2,1,2))=0$, but if the learner had played action $2$ regardless of the context, the learner would have achieved utility $\frac{T}{2}\cdot (p_1\cdot u_L(2,2,1)+p_2\cdot u_L(2,2,2))=\frac{(1-o(1))T}{4}$ in the second $\frac{T}{2}$ rounds.

It remains to show that the learner's algorithm $\A$ is not exploitable. Consider any Bayesian game $\G'(M',N,C,\D,u_O',u_L')$ (where $u_O',u_L'$ have range $[-1,1]$) and a sequence of the optimizer's actions $i_1',\dots,i_T'\in[M']$ such that for all $j\in[N]$ and $c\in[C]$, we have $u_L'(i_t',j,c)=u_L(1,j,c)$ for $t\in[\frac{T}{2}]$ and  $u_L'(i_t',j,c)=u_L(2,j,c)$ for $t\in\{\frac{T}{2}+1,\dots,T\}$ (note that this is the only case we need to consider by definition of exploitability -- see Definition~\ref{def:exploitability}). Since $\A$ is reward-based, when faced with the the optimizer's actions $i_1',\dots,i_T'$ in repeated game $\G'$, $\A$ still plays the same pure strategies as in repeated game $\G$, which are specified in Example~\ref{example:negligible_context}. Thus, the total utility of the optimizer in the repeated game $\G'$ is
\begin{align}\label{eq:negaligible_context_example_online_utility}
&\sum_{t\in[\frac{T}{2}]}p_1\cdot u_O'(i_t',1,1)+p_2\cdot u_O'(i_t',1,2)+\sum_{t\in\{\frac{T}{2}+1,\dots,T\}}p_1\cdot u_O'(i_t',2,1)+p_2\cdot u_O'(i_t',1,2)\nonumber\\
=&p_2\cdot\sum_{t\in[T]} u_O'(i_t',1,2)+o(T) \qquad\qquad\qquad\qquad\text{(By $p_1=o(1)$ and $u_O'(i_t',1,1)\le1$)}.
\end{align}
On the other hand, consider the mixed strategy $\alpha\in\Delta([M'])$ of the optimizer corresponding to the following sampling process: sample $t\in[T]$ uniformly at random and play action $i_t'$. We show that playing action $1$ regardless of the context is a best response of the learner to $\alpha$. Notice that we can ignore context $1$ because for all $j\in[N]$, we have $u_L'(i_t',j,1)=0$ for $t\in[T]$ (i.e., against the optimizer's mixed strategy $\alpha$, the learner's utility conditioned on context $1$ can only be zero regardless of which action the learner plays). Thus, we only need to show that playing action $1$ is the learner's best response conditioned on context $2$. To this end, we derive that
\begin{align*}
u_L'(\alpha,1,2)=\frac{1}{T}\cdot\sum_{t\in[T]}u_L'(i_t',1,2)=\frac{1}{T}\cdot\sum_{t\in[\frac{T}{2}]}u_L(1,1,2)+\frac{1}{T}\cdot\sum_{t\in\{\frac{T}{2}+1,\dots,T\}}u_L(2,1,2)=\frac{1}{2}+0=\frac{1}{2},\\
u_L'(\alpha,2,2)=\frac{1}{T}\cdot\sum_{t\in[T]}u_L'(i_t',2,2)=\frac{1}{T}\cdot\sum_{t\in[\frac{T}{2}]}u_L(1,2,2)+\frac{1}{T}\cdot\sum_{t\in\{\frac{T}{2}+1,\dots,T\}}u_L(2,2,2)=0+\frac{1}{4}=\frac{1}{4},
\end{align*}
and it follows that conditioned on context $2$, action $1$ is a better response to $\alpha$ than the only other action $2$. Now observe that the Stackelberg utility $V(\G')$ of $\G'$ is at least the optimizer's expected utility achieved by mixed strategy $\alpha$ against the learner's best response -- playing action $1$ regardless of the context.
That is
\begin{align}\label{eq:negaligible_context_example_stackelberg_utility}
V(\G')\ge p_1\cdot\frac{1}{T}\cdot\sum_{t\in[T]} u_O'(i_t',1,1) + p_2\cdot\frac{1}{T}\cdot\sum_{t\in[T]} u_O'(i_t',1,2)=p_2\cdot\frac{1}{T}\cdot\sum_{t\in[T]} u_O'(i_t',1,2)+o(1),
\end{align}
since $p_1=o(1)$ and $u_O'(i_t',1,1)\in[-1,1]$. It follows by Eq.~\eqref{eq:negaligible_context_example_online_utility} and~\eqref{eq:negaligible_context_example_stackelberg_utility} that the optimizer's total utility in the repeated game $\G'$ is at most $V(\G')\cdot T+o(T)$.
\end{proof}

\ifdefined\unmuteFeweractions
\else
\subsection{Keeping the same number of actions as \texorpdfstring{$\G$}{G} does not preserve exploitability}
In the proof of Theorem~\ref{thm:exploitable-poly-swap-regret} (which we assume the reader is already familiar with), the Bayesian game $\G'(M',N,C,\D,u_O',u_L')$ we constructed has $M'=T$ actions for the optimizer, which means that $\G'$ might have more actions of the optimizer than the original game $\G$ where the learner has high polytope swap regret. Can we prove Theorem~\ref{thm:exploitable-poly-swap-regret} using a different construction that does not blow up the number of the optimizer's actions? In this subsection, we show that keeping the same number of the optimizer's actions as $\G$, which seems to be very natural, does not work. In other words, the answer to the question of~\citet{mansour2022strategizing} would have been negative if we had insisted $M'=M$ in Definition~\ref{def:exploitability}.

As the reader might have noticed, in our construction of $\G'$, even if the actions played by the optimizer in round $t_1$ and round $t_2$ of the repeated game $\G$ were the same (i.e., $i_{t_1}=i_{t_2}$), we would still create two different actions $a_{t_1},a_{t_2}$ that imply the same reward vector from the learner's perspective (i.e., $u_L'(a_{t_1},j, c)=u_L'(a_{t_2},j, c)$ for all $j\in[N],c\in[C]$). Can't we just let $a_{t_1},a_{t_2}$ be the same action in $\G'$ for such cases (and then $M'$ would be reduced to at most $M$)? It turns out the problem is not so simple -- there are cases where unless we allow $a_{t_1},a_{t_2}$ to be different actions (even though $i_{t_1}=i_{t_2}$), regardless of the construction of other components of $\G'$, the optimizer can not exploit the learner in repeated game $\G'$. We elaborate this with the following example.

\begin{example}\label{example:same_action}
\normalfont
In the Bayesian game $\G(M,N,C,\D,u_O,u_L)$, the optimizer has $M=2$ actions, and the learner also has $N=2$ actions. The learner's prior distribution $\D$ is the uniform distribution over $C=2$ contexts, i.e., $p_1=p_2=\frac{1}{2}$. $u_O$ will not play any role, and for completeness we can let $u_O(i,j,c)=0$ for all $i\in[M],j\in[N],c\in[C]$. $u_L$ is specified as follows: $u_L(i,j,1)=0$ for all $i\in[M],j\in[N]$ and $u_L(1,1,2)=2,\,u_L(1,2,2)=1,\,u_L(2,1,2)=0,\,u_L(2,2,2)=2.$

$\G$ is repeated for $T$ rounds. The optimizer plays action $1$ in the first $\frac{2T}{3}$ rounds and plays action $2$ in the last $\frac{T}{3}$ rounds. In the first $\frac{T}{3}$ rounds, the learner's reward-based algorithm $\A$ plays action $2$ conditioned on context $1$ and plays action $1$ conditioned on context $2$, and in the second $\frac{T}{3}$ rounds, $\A$ plays action $1$ conditioned on context $1$ and plays action $2$ conditioned on context $2$, and in the last $\frac{T}{3}$ rounds, $\A$ plays action $2$ regardless of the context.
\end{example}

\begin{table}[ht]
\begin{minipage}{\columnwidth}
\begin{center}
\caption{\label{table:same_action} Illustration of the repeated game $\G$ in Example~\ref{example:same_action}}
\begin{tabular}{ c | c c c } 
 \hline
 optimizer's action &
 $1$ &
 $1$ & 
 $2$ \\
\hline
 learner's pure strategy &
 $\begin{pmatrix}
  \textrm{\color{blue}$\mathbf{0}$} \\ 
  \textrm{\color{blue}$\mathbf{1}$} \\
  \textrm{\color{red}$1$} \\
  \textrm{\color{red}$0$}
\end{pmatrix}$ &
 $\begin{pmatrix}
  \textrm{\color{red}$1$} \\
  \textrm{\color{red}$0$} \\
  \textrm{\color{blue}$\mathbf{0}$} \\
  \textrm{\color{blue}$\mathbf{1}$}
\end{pmatrix}$ &
 $\begin{pmatrix}
  \textrm{\color{blue}$\mathbf{0}$} \\
  \textrm{\color{blue}$\mathbf{1}$} \\
  \textrm{\color{blue}$\mathbf{0}$} \\
  \textrm{\color{blue}$\mathbf{1}$}
\end{pmatrix}$
\end{tabular}
\end{center}
\footnotesize
We decompose the time horizon into three phases each of $\frac{T}{3}$ rounds, which correspond to the right three columns respectively. The first row represents the optimizer's action played in each phase, and the second row represents the learner's pure strategy in each phase -- specifically, in each column vector, the first two coordinates are the probabilities of actions $1$ and $2$ conditioned on context $1$ respectively, and the last two coordinates are the probabilities of actions $1$ and $2$ conditioned on context $2$ respectively. The learner has $\Omega(T)$ polytope swap regret in this repeated game, but if we exchange the first two coordinates of the first column vector with the first two coordinates of the second column vector, the polytope swap regret will be zero.
\end{minipage}
\end{table}

Now we show that the learner has $\Omega(T)$ polytope swap regret in repeated game $\G$ in Example~\ref{example:same_action}, but if we use the same number of optimizer's actions as $\G$ in Bayesian game $\G'$, no matter how we construct other components of $\G'$,  the optimizer can not exploit the learner at all.
\begin{proposition}\label{proposition:same_action}
In Example~\ref{example:same_action}, the learner's reward-based algorithm $\A$ has $\Omega(T)$ polytope swap regret in repeated Bayesian game $\G$.
However, but it is not exploitable in any repeated Bayesian game $\G'$ that has $M'=M$ actions for the optimizer.
\end{proposition}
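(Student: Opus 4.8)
The plan is to handle the two halves separately by direct computation. Write a learner pure strategy as a pair $(x,y)$, meaning action $x$ in context $1$ and action $y$ in context $2$; in Example~\ref{example:same_action} the learner plays the pure strategies $f_a=(2,1)$, $f_b=(1,2)$, $f_c=(2,2)$ in the three phases of length $T/3$, and I will also use $f_d=(1,1)$. Since the learner plays a pure strategy every round, each equivalence class $\class{\beta_t}$ is a singleton, so the minimization in $\Reg_{\poly}$ is vacuous and $\Reg_{\poly}(\G)=\max_{\pi}\sum_{t}u_L(i_t,\pi(\beta_t))-\sum_{t}u_L(i_t,\beta_t)$. As $f_a,f_b,f_c$ are distinct and each occurs in exactly one phase, the values $\pi(f_a),\pi(f_b),\pi(f_c)$ can be chosen independently, so the first sum is $\tfrac{T}{3}\big(\max_g u_L(1,g)+\max_g u_L(1,g)+\max_g u_L(2,g)\big)$. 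Using $u_L(\cdot,\cdot,1)\equiv 0$ and the context-$2$ entries from Example~\ref{example:same_action}, one gets $\max_g u_L(1,g)=\max_g u_L(2,g)=1$, while $u_L(1,f_a)=1$, $u_L(1,f_b)=\tfrac12$, $u_L(2,f_c)=1$; hence $\Reg_{\poly}(\G)=T-\tfrac{5T}{6}=\tfrac{T}{6}=\Omega(T)$. (The regret lives entirely in phase $2$, where against the optimizer's action $1$ the learner plays action $2$ in context $2$, worth $1$, instead of action $1$, worth $2$; since $f_b$ is used nowhere else, a swap fixes this for free.)

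For the second half, I would first pin down the structure of any $\G'$ with $M'=M=2$ meeting condition~(i) of Definition~\ref{def:exploitability}. The reward vectors occurring in $\G$ are $(u_L(1,j,c))_{j,c}$ on the first $2T/3$ rounds and $(u_L(2,j,c))_{j,c}$ on the last $T/3$ rounds; these are two distinct vectors, so the two actions of $\G'$ must realize exactly these two reward vectors, which forces $u_L'=u_L$ up to relabeling and forces the optimizer to play one fixed action $1'$ throughout phases $1,2$ and one fixed action $2'$ throughout phase $3$. Since $\A$ is reward-based and sees the same reward-vector history as in $\G$, it again plays $\delta_{f_a},\delta_{f_b},\delta_{f_c}$ in the three phases, so for whatever $u_O'$ (with range $[-1,1]$) was chosen the optimizer's total utility is $U_O=\tfrac{T}{3}\big(u_O'(1',f_a)+u_O'(1',f_b)+u_O'(2',f_c)\big)$.

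The crux is the identity $u_O'(1',f_a)+u_O'(1',f_b)=u_O'(1',f_d)+u_O'(1',f_c)$: expanding each term through $u_O'(1',(x,y))=\tfrac12 u_O'(1',x,1)+\tfrac12 u_O'(1',y,2)$, both sides equal $\tfrac12\big(u_O'(1',1,1)+u_O'(1',2,1)+u_O'(1',1,2)+u_O'(1',2,2)\big)$. This is exactly the ``exchange the context-$1$ coordinates of $f_a$ and $f_b$'' move described in the caption of Table~\ref{table:same_action}, and it is valid precisely because $M'=2$ forces the optimizer to play the same action in phases $1$ and $2$. Thus $U_O=\tfrac{T}{3}\big(u_O'(1',f_d)+u_O'(1',f_c)+u_O'(2',f_c)\big)$. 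Now $f_d$ is a best response to the optimizer's pure strategy $1'$ (action $1$ is the unique best response in context $2$ since $u_L'(1',1,2)=2>1=u_L'(1',2,2)$, and every action is a best response in context $1$, where all utilities are $0$), so $u_O'(1',f_d)\le V(\G')$; and $f_c$ is a best response to the uniform optimizer strategy $\alpha=(\tfrac12,\tfrac12)$ (action $2$ is the unique best response in context $2$ because $\tfrac12 u_L'(1',2,2)+\tfrac12 u_L'(2',2,2)=\tfrac32>1=\tfrac12 u_L'(1',1,2)+\tfrac12 u_L'(2',1,2)$, and context $1$ is free), so $\tfrac12 u_O'(1',f_c)+\tfrac12 u_O'(2',f_c)=u_O'(\alpha,f_c)\le V(\G')$. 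Combining, $U_O=\tfrac{T}{3}\big(u_O'(1',f_d)+2\,u_O'(\alpha,f_c)\big)\le\tfrac{T}{3}\big(V(\G')+2V(\G')\big)=V(\G')\cdot T$, so the optimizer cannot reach $V(\G')\cdot T+\Omega(T)$ and $\A$ is not exploitable under $M'=M$.

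I expect the main obstacle to be spotting the coordinate-exchange identity and seeing why it is the right tool: it turns the learner's context-correlated responses $(f_a,f_b)$ — neither of which is a best response to the pure strategy $1'$ — into the uncorrelated pair $(f_d,f_c)$ of genuine best responses (to $1'$ and to $\alpha$ respectively) while leaving the optimizer's aggregate payoff unchanged, after which bounding by $V(\G')$ is immediate. Verifying that $f_d$ and $f_c$ really are best responses, together with the tie-breaking-on-context-$1$ bookkeeping in the definition of Stackelberg utility, is routine once the identity is in hand. The hypothesis $M'=M$ is used in exactly one place: it prevents the optimizer from using two distinct actions with the same reward vector during phases $1$ and $2$, which is what licenses the ``same action'' assumption the identity relies on.
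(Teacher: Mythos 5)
Your proof is correct and takes essentially the same approach as the paper: both parts use the same computations, the same candidate pairs $(\alpha_1,f_d)$ and $(\alpha_2,f_c)$, and the same decomposition $U_O=\bigl(\tfrac13 u_O'(\alpha_1,f_d)+\tfrac23 u_O'(\alpha_2,f_c)\bigr)T$. The one cosmetic difference is how you discover that decomposition: you state the coordinate-exchange identity $u_O'(1',f_a)+u_O'(1',f_b)=u_O'(1',f_d)+u_O'(1',f_c)$ explicitly (which is the ``exchange the context-$1$ coordinates'' move the paper only describes informally in the caption of Table~\ref{table:same_action}), whereas the paper expands $U_O$ and $\tfrac13 U_O^{(1)}+\tfrac23 U_O^{(2)}$ separately and matches coefficients; the identity is a slightly more transparent way to see why $M'=M$ is the binding constraint. (Minor: the paper's ``$i_t'=1$ for $t\in[\frac{T}{3}]$'' is a typo for $t\in[\frac{2T}{3}]$, as its subsequent display of $U_O$ makes clear; your reading matches the intended one.)
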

\begin{proof}[Proof of Proposition~\ref{proposition:same_action}]
Observe that in Example~\ref{example:same_action}, in the second $\frac{T}{3}$ rounds, against the optimizer's action $1$, the learner's algorithm $\A$ plays action $1$ conditioned on context $1$ and plays action $2$ conditioned on context $2$ (and the learner's pure strategy in any other round is different). We show that the learner would have achieved $\Omega(T)$ more utility by playing a different pure strategy in the second $\frac{T}{3}$ rounds, and hence the learner indeed has $\Omega(T)$ polytope swap regret. Specifically, in Example~\ref{example:negligible_context}, the learner's total utility in the second $\frac{T}{3}$ rounds is $\frac{T}{3}\cdot (p_1\cdot u_L(1,1,1)+p_2\cdot u_L(1,2,2))=\frac{T}{6}$, but if the learner had played action $1$ regardless of the context, the learner would have achieved utility $\frac{T}{3}\cdot (p_1\cdot u_L(1,1,1)+p_2\cdot u_L(1,1,2))=\frac{T}{3}$ in the second $\frac{T}{3}$ rounds.

Now consider any Bayesian game $\G'(M,N,C,\D,u_O',u_L')$ (the number of optimizer's actions is $M$ by assumption) and a sequence of the optimizer's actions $i_1',\dots,i_T'\in[M]$. Note that we only need to consider the cases where the resulting reward vectors are exactly the same as in Example~\ref{example:same_action},  by definition of exploitability -- see Definition~\ref{def:exploitability}). Thus, we can w.l.o.g.~assume that $u_L'=u_L$ and $i_t'=1$ for $t\in[\frac{T}{3}]$ and $i_t'=2$ for $t\in\{\frac{T}{3}+1,\dots,T\}$, because any other case (except superficially renaming the optimizer's two actions) would result in some different reward vectors. Since $\A$ is reward-based, when faced with the the optimizer's actions $i_1',\dots,i_T'$ in repeated game $\G'$, $\A$ still plays the same pure strategies as in repeated game $\G$, which are specified in Example~\ref{example:same_action}. Therefore, the expected utility of the optimizer in the repeated game $\G'$ (which we denote by $U_O$) is
\begin{align}\label{eq:same_action_U_O}
U_O&=\sum_{t\in[\frac{T}{3}]}p_1\cdot u_O'(1,2,1)+p_2\cdot u_O'(1,1,2)+\sum_{t\in\{\frac{T}{3}+1,\dots,\frac{2T}{3}\}}p_1\cdot u_O'(1,1,1)+p_2\cdot u_O'(1,2,2)\nonumber\\
&\qquad+\sum_{t\in\{\frac{2T}{3}+1,\dots,T\}}p_1\cdot u_O'(2,2,1)+p_2\cdot u_O'(2,2,2)\nonumber\\
&=T\cdot\bigg(\frac{1}{6}\cdot u_O'(1,2,1)+\frac{1}{6}\cdot u_O'(1,1,2)+\frac{1}{6}\cdot u_O'(1,1,1)+\frac{1}{6}\cdot u_O'(1,2,2)\nonumber\\
&\qquad+\frac{1}{6}\cdot u_O'(2,2,1)+\frac{1}{6}\cdot u_O'(2,2,2)\bigg).
\end{align}

Next, we give two (mixed) strategies $\alpha_1,\alpha_2$ of the optimizer and prove that one of them must have expected utility at least $\frac{U_O}{T}$ when the learner plays the corresponding best response, which implies the proposition by definition of Stackelberg utility. $\alpha_1$ is simply playing action $1$ with probability $1$, and we show that playing action $1$ regardless of the context is a best response of the learner to $\alpha_1$. Indeed, we note that $u_L(i,j,1)=0$ for all $i\in[M],j\in[N]$, and thus, it suffices to show that playing action $1$ is the best response to $\alpha_1$ conditioned on context $2$ as follows
\begin{align*}
u_L(\alpha_1,1,2)=u_L(1,1,2)=2>1=u_L(1,2,2)=u_L(\alpha_1,2,2).
\end{align*}
The expected utility of the optimizer by using $\alpha_1$ against the above-mentioned best response of the learner (which we denote by $U_O^{(1)}$) is
\begin{align}\label{eq:same_action_U_O^{(1)}}
    U_O^{(1)}=\frac{1}{2}\cdot u_O'(\alpha_1,1,1)+\frac{1}{2}\cdot u_O'(\alpha_1,1,2)=\frac{1}{2}\cdot u_O'(1,1,1)+\frac{1}{2}\cdot u_O'(1,1,2).
\end{align}
On the other hand, $\alpha_2$ is playing action $1$ with probability $\frac{1}{2}$ and playing action $2$ with probability $\frac{1}{2}$, and we show that playing action $2$ regardless of the context is a best response of the learner. Similarly to before, we can ignore context $1$ and prove that playing action $2$ is the best response of the learner to $\alpha_2$ conditioned on context $2$ as follows
\begin{align*}
u_L(\alpha_2,2,2)=\frac{u_L(1,2,2)}{2}+\frac{u_L(2,2,2)}{2}=\frac{3}{2}>1=\frac{u_L(1,1,2)}{2}+\frac{u_L(2,1,2)}{2}=u_L(\alpha_1,1,2).
\end{align*}
The expected utility of the optimizer by using $\alpha_2$ against the above-mentioned best response of the learner (which we denote by $U_O^{(2)}$) is
\begin{align}\label{eq:same_action_U_O^{(2)}}
    U_O^{(2)}=\frac{u_O'(\alpha_2,2,1)}{2}+\frac{u_O'(\alpha_2,2,2)}{2}=\frac{u_O'(1,2,1)}{4}+\frac{u_O'(2,2,1)}{4}+\frac{u_O'(1,2,2)}{4}+\frac{u_O'(2,2,2)}{4}.
\end{align}
By Eq.~\eqref{eq:same_action_U_O},~\eqref{eq:same_action_U_O^{(1)}} and~\eqref{eq:same_action_U_O^{(2)}}, we have that $U_O=\left(\frac{1}{3}\cdot U_O^{(1)} + \frac{2}{3}\cdot U_O^{(2)}\right)\cdot T$, and it follows that at least one of $U_O^{(1)}\cdot T$ and $U_O^{(2)}\cdot T$ is no less than $U_O$.
\end{proof}
\fi

\bibliography{cite}

\begin{thebibliography}{16}
\providecommand{\natexlab}[1]{#1}
\providecommand{\url}[1]{\texttt{#1}}
\expandafter\ifx\csname urlstyle\endcsname\relax
  \providecommand{\doi}[1]{doi: #1}\else
  \providecommand{\doi}{doi: \begingroup \urlstyle{rm}\Url}\fi

\bibitem[Badanidiyuru et~al.(2023)Badanidiyuru, Feng, and Guruganesh]{badanidiyuru2023learning}
A.~Badanidiyuru, Z.~Feng, and G.~Guruganesh.
\newblock Learning to bid in contextual first price auctions.
\newblock In \emph{Proceedings of the ACM Web Conference 2023}, pages 3489--3497, 2023.

\bibitem[Balseiro et~al.(2019)Balseiro, Golrezaei, Mahdian, Mirrokni, and Schneider]{balseiro2019contextual}
S.~Balseiro, N.~Golrezaei, M.~Mahdian, V.~Mirrokni, and J.~Schneider.
\newblock Contextual bandits with cross-learning.
\newblock \emph{Advances in Neural Information Processing Systems}, 32, 2019.

\bibitem[Blum and Mansour(2007)]{blum2007external}
A.~Blum and Y.~Mansour.
\newblock From external to internal regret.
\newblock \emph{Journal of Machine Learning Research}, 8\penalty0 (6), 2007.

\bibitem[Braverman et~al.(2018)Braverman, Mao, Schneider, and Weinberg]{braverman2018selling}
M.~Braverman, J.~Mao, J.~Schneider, and M.~Weinberg.
\newblock Selling to a no-regret buyer.
\newblock In \emph{Proceedings of the 2018 ACM Conference on Economics and Computation}, pages 523--538, 2018.

\bibitem[Cai et~al.(2023)Cai, Weinberg, Wildenhain, and Zhang]{cai2023selling}
L.~Cai, S.~M. Weinberg, E.~Wildenhain, and S.~Zhang.
\newblock Selling to multiple no-regret buyers.
\newblock In \emph{International Conference on Web and Internet Economics}, pages 113--129. Springer, 2023.

\bibitem[Deng et~al.(2019)Deng, Schneider, and Sivan]{deng2019strategizing}
Y.~Deng, J.~Schneider, and B.~Sivan.
\newblock Strategizing against no-regret learners.
\newblock \emph{Advances in neural information processing systems}, 32, 2019.

\bibitem[Despotakis et~al.(2021)Despotakis, Ravi, and Sayedi]{despotakis2021first}
S.~Despotakis, R.~Ravi, and A.~Sayedi.
\newblock First-price auctions in online display advertising.
\newblock \emph{Journal of Marketing Research}, 58\penalty0 (5):\penalty0 888--907, 2021.

\bibitem[Feng et~al.(2021)Feng, Guruganesh, Liaw, Mehta, and Sethi]{feng2021convergence}
Z.~Feng, G.~Guruganesh, C.~Liaw, A.~Mehta, and A.~Sethi.
\newblock Convergence analysis of no-regret bidding algorithms in repeated auctions.
\newblock In \emph{Proceedings of the AAAI Conference on Artificial Intelligence}, volume~35, pages 5399--5406, 2021.

\bibitem[Han et~al.(2020{\natexlab{a}})Han, Zhou, Flores, Ordentlich, and Weissman]{han2020learning}
Y.~Han, Z.~Zhou, A.~Flores, E.~Ordentlich, and T.~Weissman.
\newblock Learning to bid optimally and efficiently in adversarial first-price auctions.
\newblock \emph{arXiv preprint arXiv:2007.04568}, 2020{\natexlab{a}}.

\bibitem[Han et~al.(2020{\natexlab{b}})Han, Zhou, and Weissman]{han2020optimal}
Y.~Han, Z.~Zhou, and T.~Weissman.
\newblock Optimal no-regret learning in repeated first-price auctions.
\newblock \emph{arXiv preprint arXiv:2003.09795}, 2020{\natexlab{b}}.

\bibitem[Mansour et~al.(2022)Mansour, Mohri, Schneider, and Sivan]{mansour2022strategizing}
Y.~Mansour, M.~Mohri, J.~Schneider, and B.~Sivan.
\newblock Strategizing against learners in bayesian games.
\newblock In \emph{Conference on Learning Theory}, pages 5221--5252. PMLR, 2022.

\bibitem[Nekipelov et~al.(2015)Nekipelov, Syrgkanis, and Tardos]{nekipelov2015econometrics}
D.~Nekipelov, V.~Syrgkanis, and E.~Tardos.
\newblock Econometrics for learning agents.
\newblock In \emph{Proceedings of the sixteenth acm conference on economics and computation}, pages 1--18, 2015.

\bibitem[Sion(1958)]{sion1958general}
M.~Sion.
\newblock On general minimax theorems.
\newblock \emph{Pacific Journal of mathematics}, 8\penalty0 (1):\penalty0 171--176, 1958.

\bibitem[Taussig(2017)]{ballsintobins}
N.~F. Taussig.
\newblock Mathematics stack exchange.
\newblock \url{https://math.stackexchange.com/questions/2231965/count-number-of-increasing-functions-nondecreasing-functions-f-1-2-3-ld}, 2017.
\newblock Accessed: 2024-02-13.

\bibitem[Xu and Ligett(2018)]{xu2018commitment}
Y.~Xu and K.~Ligett.
\newblock Commitment in first-price auctions.
\newblock \emph{Economic Theory}, 66\penalty0 (2):\penalty0 449--489, 2018.

\bibitem[Zhang et~al.(2022)Zhang, Han, Zhou, Flores, and Weissman]{zhang2022leveraging}
W.~Zhang, Y.~Han, Z.~Zhou, A.~Flores, and T.~Weissman.
\newblock Leveraging the hints: Adaptive bidding in repeated first-price auctions.
\newblock \emph{Advances in Neural Information Processing Systems}, 35:\penalty0 21329--21341, 2022.

\end{thebibliography}

\end{document}